\newtheorem{definition}{Definition}
\newtheorem{theorem}[definition]{Theorem}
\newtheorem{lemma}[definition]{Lemma}
\newtheorem{corollary}[definition]{Corollary}
\newtheorem{proposition}[definition]{Proposition}
\newtheorem{remark}[definition]{Remark}
\DeclareMathOperator{\sgn}{sgn}
\DeclareMathOperator{\supp}{supp}
\DeclareMathOperator{\conv}{conv}
\DeclareMathOperator{\id}{Id}
\def\C{\mathbb{C}}
\def\E{\mathbb{E}}
\def\N{\mathbb{N}}
\def\P{\mathbb{P}}
\def\R{\mathbb{R}}
\def\Z{\mathbb{Z}}
\def\AA{\mathcal{A}}
\def\CC{\mathcal{C}}
\def\DD{\mathcal{D}}
\def\EE{\mathcal{E}}
\def\GG{\mathcal{G}}
\def\II{\mathcal{I}}
\def\NN{\mathcal{N}}
\def\TT{\mathcal{T}}
\def\UU{\mathcal{U}}
\def\XX{\mathcal{X}}
\def\YY{\mathcal{Y}}
\newcommand{\vxTildeArg}[1]{\widetilde{\vc{V}}_{\hspace{-.3em} \vc{#1} }}
\newcommand{\vxTilde}{\vxTildeArg{x}}
\newcommand{\x}[1]{\boldsymbol{\XX}_{\hspace{-.1em}#1}}
\newcommand{\ee}[1]{e^{\imath 2\pi \cdot #1}}
\newcommand{\nee}[1]{e^{-\imath 2\pi \cdot #1}}
\newcommand{\vc}[1]{\boldsymbol{ #1 }}
\newcommand{\opnorm}[1]{\| #1 \|_{2\to 2}}
\newcommand{\betacl}{[\beta]}
\newcommand{\ID}{\boldsymbol{\id}}
\newcommand{\midcol}{{}:{}}
\begin{document}

\title{\bf Refined analysis of sparse MIMO radar}

\author{
Dominik Dorsch
\thanks{
Chair for Mathematics C (Analysis),
RWTH Aachen University,
\href{mailto:dorsch@mathc.rwth-aachen.de}{dorsch@mathc.rwth-aachen.de}
}
\and
Holger Rauhut\thanks{
Chair for Mathematics C (Analysis),
RWTH Aachen University,
\href{mailto:rauhut@mathc.rwth-aachen.de}{rauhut@mathc.rwth-aachen.de}
}
}
\date{\today}

\maketitle

\begin{abstract}
We analyze a multiple-input multiple-output (MIMO) radar model and provide recovery results for a compressed sensing (CS) approach.
In MIMO radar different pulses are emitted by several transmitters and the echoes are recorded at several receiver nodes.
Under reasonable assumptions the transformation from emitted pulses to the received echoes 
can approximately be regarded as linear.
For the considered model, and many radar tasks in general, sparsity of targets within the considered angle-range-Doppler domain is a natural assumption.
Therefore, it is possible to apply methods from CS in order to reconstruct the parameters of the targets.
Assuming Gaussian random pulses the resulting measurement matrix becomes a highly structured random matrix.
Our first main result provides an estimate for the well-known restricted isometry property (RIP) ensuring stable and robust recovery.
We require more measurements than standard results from CS, like for example those for Gaussian random measurements.
Nevertheless, we show that due to the special structure of the considered measurement matrix our RIP result is in fact optimal (up to possibly logarithmic factors).
Our further two main results on nonuniform recovery (i.e., for a fixed sparse target scene) reveal how the fine structure of the support set --- not only the size --- affects the (nonuniform) recovery performance.
We show that for certain {\lq\lq balanced\rq\rq} support sets reconstruction with essentially the optimal number of measurements is possible.
Indeed, we introduce a parameter measuring the well-behavedness of the support set and resemble standard results from CS for near-optimal parameter choices.
We prove recovery results for both perfect recovery of the support set in case of exactly sparse vectors and an $\ell_2$-norm approximation result for reconstruction under sparsity defect.
Our analysis complements earlier work by \mbox{Strohmer \& Friedlander} and deepens the understanding of the considered MIMO radar model.
\end{abstract}

\medskip
\noindent
{\bf AMS Subject Classification:} 94A20, 94A12, 60B20, 90C25, 65F22, 

\medskip
\noindent
{\bf Keywords:} MIMO radar, compressed sensing, $\ell_1$-minimization, restricted isometry property, LASSO, random matrix

\section{Introduction}
MIMO (multiple-input multiple-output) radar systems can simultaneously transmit several uncorrelated waveforms from spatially distributed transmitters and record the reflected signals at different receiver locations.
The transmit/receive antennas may either be widely separated giving the possibility to view the targets from different angles or co-located antenna configurations --- to be studied in this paper --- can yield superior resolutions and target identifiability as compared to standard phased-array \mbox{radar \cite{Fishler2004,Stoica2007,Stoica2009}}.

When it comes to the implementation of a MIMO radar system involving several waveforms, one is naturally confronted with an increased amount of data to be processed.
Yet in MIMO radar the description of the target parameters can turn out to be particularly sparse, i.e., the target scene can be modeled as a vector of the object reflectivities with the property that {\lq\lq almost\rq\rq} all entries are zero \cite{Chen2008}. 
Since it is often possible to assume the transformation of the transmitted signals through the channel to be approximately linear, one is faced with the reconstruction of the targets' parameters from highly incomplete linear measurements.

During the last decade, the theory of compressed sensing (CS) evolved around this particular setting.
In CS one typically tries to reconstruct an unknown vector $\vc{x}$ from highly underdetermined linear measurements $\vc{y} = \vc{A} \vc{x}$ under the additional assumption of sparsity in $\vc{x}$.
While only suboptimal results are available so far for deterministic measurement matrices $\vc{A}$ --- see for instance
the discussion in \cite[Chapter~6.1]{Foucart2013} ---  
it is rather typical to consider matrices which are defined in terms of random variables.
The crucial point is that, apart from theory, also in practice the concept of randomness can be implemented efficiently and has already  proven effective.
This opens the field for advanced methods from probability theory making it possible to prove strong recovery results --- which usually hold true with very high probability.
Typical results guaranteeing recovery of sparse vectors $\vc{x}$ from measurements $\vc{y}=\vc{A} \vc{x}$
provide conditions on the minimal number of required measurements --- ideally scaling 
linearly on the number of nonzero entries \mbox{in $\vc{x}$} up to logarithmic factors.
Given this, the theory of CS provides several algorithms for reconstruction.
Usually $\vc{x}$ can either be obtained as the solution to a convex optimization problem, or by an iterative (greedy) 
algorithm \cite{Foucart2013}.

Ideas from CS have probably first been applied to radar in \cite{bast07,Strohmer2008a,en10-1}.
The extension to MIMO radar has been conducted in \cite{Chen2008,Strohmer2009,Yu2011,Yu2012}.
In a recent paper, \mbox{Strohmer \& Friedlander \cite{Strohmer2012}} analyze a particular model of 
co-located MIMO radar and prove recovery results.
They assume random probing signals which leads to a measurement process constituting of $N_R N_t$ random measurements, where $N_R$ is the number of receiver nodes and $N_t$ is the number of (time domain) samples being taken at each receiver.
Under the additional assumption of randomly distributed targets, they show that basically the condition $N_R N_t \gtrsim s \log (N)$, where $s$ is the number of targets and $N$ is the dimension of the target scene $\vc{x}$, 
is sufficient for reconstruction by minimizing the so-called LASSO functional.
Thus, typical results from CS on random measurements are resembled for these specific MIMO radar measurements.

We adopt this model and deepen its understanding by deriving further properties and results in the context of CS.
First, we prove an estimate for the restricted isometry property (RIP) of the involved measurement matrix.
The RIP guarantees uniform recovery (i.e., simultaneous recovery of all sufficiently sparse vectors from a single random draw of the matrix with high probability) and, 
moreover, yields strong error estimates for noisy measurements and only approximately sparse vectors.
Compared to standard estimates for standard random matrix constructions, we require more measurements for a given sparsity which, however, as we show, is still optimal and due to the special structure of the considered matrix.
Motivated by the somehow better results in \cite{Strohmer2012}, we furthermore deepen the analysis of the measurement process by introducing a parameter for the fine structure of the supports sets.
In this way we are able to provide nonuniform recovery results for {\lq\lq balanced\rq\rq} support sets resembling the required number of measurements one would usually expect from common CS theory.
Ultimately, this explains the result in \cite{Strohmer2012}, since random support sets are --- on average --- sufficiently balanced and, hence, for randomly distributed targets less measurements are sufficient.
To the best of our knowledge, it has not been observed earlier for realistic measurement matrices that the recovery performance may depend significantly on the fine structure of the support sets.

\subsection{The MIMO radar model}
Our MIMO radar model consists of $N_T$ transmit and $N_R$ receive antennas.
We consider the setting from \cite{Strohmer2012} where some assumptions on the geometry of the antenna/target locations have been made.
We assume that
\begin{itemize}
 \item the antenna arrays and the scatterers are located in the same two-dimensional plane,
 \item the transmit/receive antennas are located along one common line (\lq\lq monostatic radar\rq\rq),
 \item the distance of the targets from the antenna arrays is sufficiently large so that the radar return of any scatterer can be considered to be fully correlated across the array (\lq\lq coherent propagation scenario\rq\rq).
\end{itemize}

We point out that our results can be generalized to a three-dimensional setting.
However, since the effects we want to study already occur in two dimensions, we concentrate on this case in order to keep the notation simple.
The transmit antennas occupy the positions $(0, i d_T \lambda ) \in \R^2$, $i=0, 1, \ldots, N_T - 1$, where $\lambda$ denotes the wavelength of the carrier frequency of the radar system.
The receive antennas are located at $(0, j d_R \lambda )$, $j=0, 1, \ldots, N_R - 1$.
It is known that by choosing $d_T = 1/2$ and $d_R = {N_T}/{2}$ or, alternatively, $d_T = {N_R}/{2}$ and $d_R = {1}/{2}$ similar characteristics as those of a virtual array with $N_T N_R$ antennas can be obtained \cite{Friedlander2009}.
In this sense these particular choices for $d_T$ and $d_R$ are favorable compared to other choices in practice.
This fact will also become clear during our analysis. 
Throughout the paper we concentrate on the case where 
\begin{equation}\label{eqn:dT_dR}
 d_T = \frac{1}{2}, \qquad d_R = \frac{N_T}{2} .
\end{equation}
The second case can be treated analogously.

The $i$th transmit antenna repeatedly sends a fixed complex continuous-time signal $s_i = s_i (t)$ of period duration $T$.
The reflected signals, due to reflections caused by one unit reflectivity target at position $\big( r \cos ( \theta ) , r \sin ( \theta ) \big) \in \R^2$, traveling with radial speed $v$, will be recorded at the $j$th receiver (cf. \cite{Yu2012}) as
\[
 y_j (t) = \sum_{i=1}^{N_T} \nee{ c \lambda^{-1} ( t - {d_{i,j} (t)}/{c} ) } s_i ( t - {d_{i,j} (t)}/{c} ) ,
\]
where $c$ denotes the speed of light and $d_{i,j} (t)$ is the distance the emitted wave has to travel from the $i$th transmitter to the $j$th receiver at time $t$, given by
\[
 d_{i,j} (t) = 2 ( r + v t ) + \sin ( \theta ) d_T \lambda (i-1) + \sin ( \theta ) d_R \lambda (j-1) .
\]
After demodulation (multiplication with $\ee{c\lambda^{-1} t}$), assuming narrowband transmit waveforms, slowly moving targets, and a far field scenario, i.e., $r \gg \max \{ d_T N_T \lambda , d_R N_R \lambda \}$, the received baseband signal at the $j$th receiver is approximately given by
\begin{equation}\label{eqn:received_sig}
 y_j (t) \approx \ee{2 \lambda^{-1} r} \ee{\sin ( \theta ) d_R (j-1)} \sum_{i=1}^{N_T} \ee{2 \lambda^{-1} v t} \ee{\sin ( \theta ) d_T (i-1)} s_i ( t - {2r}/{c} ) .
\end{equation}
(See also \cite{Yu2012}, where the antennas are distributed freely over a small area, though.)

The parameters of a target are given by the triple $( \theta , r , v )$ representing the position (in radial coordinates) and the radial speed.
It will, however, be more convenient in the remaining part of this paper to equivalently consider the triples $( \sin ( \theta ) , 2r/c , 2\lambda^{-1} v )$ of angle, delay, and Doppler-shift parameters, which appear in formula \eqref{eqn:received_sig}.
The angle, delay, and Doppler-shift information can always easily be transformed back into the physical coordinates.

\begin{figure}[t]
\begin{center}
\begin{tikzpicture}[scale=0.42]

\foreach \r in {10, 11, ..., 17}
  \draw[dotted,opacity=0.7] ([shift={(-23:\r)}]0,0) arc (-23:23:\r);
\foreach \a in {-69.636,-61.045,-54.341,-48.59,-43.433,-38.682,-34.229,-30,-25.944,-22.024,-18.21,-14.478,-10.807,-7.1808,-3.5833,0,3.5833,7.1808,10.807,14.478,18.21,22.024,25.944,30,34.229,38.682,43.433,48.59,54.341,61.045,69.636,90}
  \draw[dotted,Azure4] (\a:3.2) -- (\a:3.8);
\foreach \a in {-48.59,-30,-14.478,0,14.478,30,48.59,90}
  \draw[thick,SteelBlue3] (\a:3) -- (\a:4);
\foreach \a in {-30,-14.478,0,14.478,30}
  \draw[dotted,SteelBlue3] (\a:4) -- (\a:10);
\foreach \a in {-18.21,-14.478,-10.807,-7.1808,-3.5833,0,3.5833,7.1808,10.807,14.478,18.21}
  \draw[dotted,Azure4] (\a:10) -- (\a:17);
\foreach \a in {-14.478,0,14.478}
  \draw[thick,SteelBlue3] (\a:10) -- (\a:17);

\foreach \ttau in {0,1,...,7}
  \draw[opacity=0.7] (\ttau + 10,0) node[inner sep=1pt,below=3pt,rectangle,fill=white] {$\tau_{\ttau}$};
\draw[SteelBlue3,opacity=1] (-48.59: 4.7) node {$\beta_{8}$};
\draw[SteelBlue3,opacity=1] (-30: 10.7) node {$\beta_{16}$};
\draw[Azure4,opacity=0.4] (-18.21: 17.7) node {$\beta_{22}$};
\draw[SteelBlue3] (-14.478: 17.7) node {$\beta_{24}$};
\draw[Azure4] (-10.807: 17.7) node {$\beta_{26}$};
\draw[Azure4] (-7.1808: 17.7) node {$\beta_{28}$};
\draw[Azure4] (-3.5833: 17.7) node {$\beta_{30}$};
\draw[SteelBlue3] (0: 17.7) node {$\beta_{32}$};
\draw[Azure4] (3.5833: 17.7) node {$\beta_{34}$};
\draw[Azure4] (7.1808: 17.7) node {$\beta_{36}$};
\draw[Azure4] (10.807: 17.7) node {$\beta_{38}$};
\draw[SteelBlue3] (14.478: 17.7) node {$\beta_{40}$};
\draw[Azure4,opacity=0.4] (18.21: 17.7) node {$\beta_{42}$};
\draw[SteelBlue3,opacity=1] (30: 10.7) node {$\beta_{48}$};
\draw[SteelBlue3,opacity=1] (48.59: 4.7) node {$\beta_{56}$};
\draw[SteelBlue3,opacity=1] (90: 4.7) node {$\beta_{64}$};

\draw (-4.9,0) node[rounded corners=1pt, draw, fill=Azure3]{\bf MIMO radar module};
\foreach \r in {0.5, 1, 1.5}
  \draw[opacity=0.6,dashed] ([shift={(-50:\r)}]0,0) arc (-50:50:\r);
  
\draw[->,dashed,thick,Goldenrod1] (-10.807:16) -- (-10.807:13.5);
\draw[Goldenrod1] (-10.807:13) node {$f_k$};

\draw[draw=red,fill=red] (3.5833:13) circle(1.7mm);
\draw[draw=red,fill=red] (14.478:11) circle(1.7mm);
\draw[draw=red,fill=red] (-10.807:16) circle(1.7mm);
\draw[draw=red,fill=red,opacity=0.4] (-18.21:14) circle(1.7mm);
\draw[draw=red,fill=red,opacity=0.4] (18.21:16) circle(1.7mm);
\draw[draw=red,fill=red,opacity=0.08] (22.024:17) circle(1.7mm);
\draw[red] (12,1.8) node {\bf sparse target scene};
\end{tikzpicture}
\end{center}
\caption{
Sparse MIMO radar: sector of the physical target domain for the case of $N_T = 8$ and $N_R=8$ transmit/receive antennas, yielding an angular resolution of $N_T N_R = 64$.
An exemplary delay grid with a resolution of $N_{\tau} = 8$ is depicted.
The considered Doppler effect introduces a third dimension, parametrized by the Doppler shifts $f$.
}
\label{radar_fig}
\end{figure}
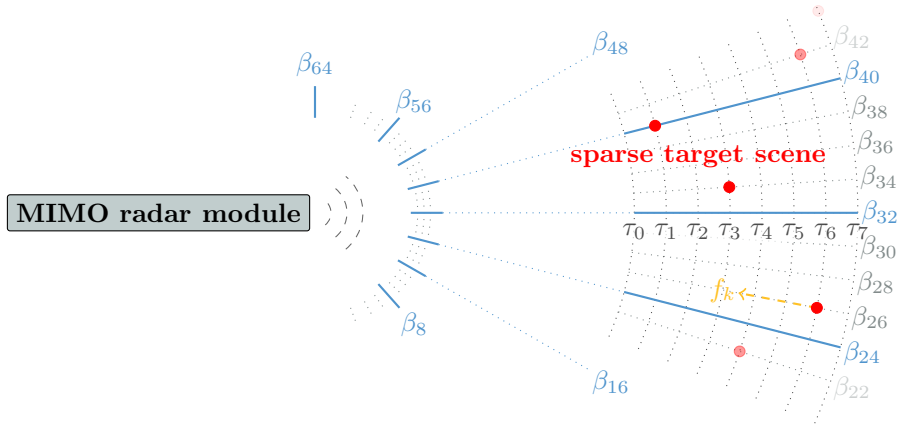

\subsubsection*{Discretization}
We switch to a discrete setting by considering sampled (at Nyquist rate) versions of the transmitted signals and, furthermore, assuming that the targets are located on a grid in the angle-delay-Doppler domain.
In certain applications, this idealizing assumption might lead to so-called {\lq\lq gridding errors\rq\rq}.
Thus, in practice, this would have to be resolved by an additional post-processing.
Theoretical approaches for this problem can be found in \cite{Herman2010,Chi2011}.
Recently, in \cite{Heckel2014}, a new approach for exact reconstruction of time/frequency shifts (as considered in radar) from a continuous domain has been proposed.

Let $\vc{s}_i \in \C^{N_t}$ be the discrete-time representation of the transmitted signal $s_i = s_i (t)$ sampled at Nyquist-rate over the time interval $[0,T)$, i.e.,
\[
 \vc{s}_i^T = ( s_i(0) , s_i(\Delta_t) , s_i(2\Delta_t) , \ldots , s_i( (N_t -1) \Delta_t))^T , \qquad  T=N_t \Delta_t , \qquad \Delta_t = 1 / 2B.
\]
Under the assumption that the signals $s_i = s_i (t)$ are $T$-periodic and band-limited to $(-B,B)$ the Nyquist--Shannon sampling theorem tells us that they are fully described by the discrete-time signals $\vc{s}_i$.
In the following we will always assume that the parameters $( \sin ( \theta ) , 2r/c , 2\lambda^{-1} v )$ lie on an equidistant grid, i.e.,
\begin{equation}\label{eqn:parameters_on_grid}
 ( \sin ( \theta ) , 2r/c , 2\lambda^{-1} v ) = ( \beta \Delta_\beta , \tau \Delta_\tau , f \Delta_f ),
\end{equation}
where $\beta$, $\tau$, and $f$ are integers.
Throughout the rest of this paper we fix the stepsizes $\Delta_\beta$, $\Delta_\tau$, and $\Delta_f$ to
\begin{equation}\label{eqn:stepsizes}
 \Delta_\beta = \frac{2}{N_T N_R}, \qquad \Delta_\tau = \frac{1}{2B}, \qquad \Delta_f = \frac{1}{T}.
\end{equation}
For a given Doppler-shift $2\lambda^{-1} v = f \Delta_f \ll B$, the received signal $y_j = y_j (t)$ can again be assumed to be band-limited to $(-B,B)$.
In this case the time-discrete version
\[
 \vc{y}_j^T = ( y_j(0) , y_j(\Delta_t) , y_j(2 \Delta_t) , \ldots , y_j((N_t -1) \Delta_t))^T \in \C^{N_t}
\]
of the approximate received signal $y_j = y_j (t)$ in \eqref{eqn:received_sig} at the $j$th receiver, caused by a target having parameters as in \eqref{eqn:parameters_on_grid}, is given by the vector
\begin{equation}\label{eqn:discrete_reveived_signal}
 \vc{y}_j = \ee{c \lambda^{-1} \tau \Delta_\tau} \ee{d_R \beta \Delta_\beta (j-1)} \sum_{i=1}^{N_T} \ee{d_T \beta \Delta_\beta (i-1)} \vc{M}_f \vc{T}_\tau \vc{s}_i \in \C^{N_t} ,
\end{equation}
where $\vc{T}_\tau$ is a circulant shift and $\vc{M}_f$ is a linear modulation, defined entrywise by
\begin{equation}\label{eqn:signal_ops_def}
 [ \vc{T}_\tau \vc{s} ]_k = [ \vc{s} ]_{k - \tau}, \qquad [ \vc{M}_f \vc{s} ]_k = \ee{\frac{f}{N_t}(k-1)} [ \vc{s} ]_k ,
\end{equation}
where $k - \tau$ stands for subtraction modulo $N_t$.
Note that, due to periodicity of the operators $\vc{T}_\tau$ and $\vc{M}_f$ and the periodic influence of the parameter $\beta$ in \eqref{eqn:discrete_reveived_signal}, the maximal achievable number of grid points as in \eqref{eqn:parameters_on_grid} which can be differentiated by looking at the received signals $\vc{y}_j$ is bounded by $N_T N_R N_t^2$.
Hence we assume without loss of generality that 
\begin{equation}\label{eqn:grid}
 ( \beta , \tau , f ) \in \GG, \qquad \GG := [N_T N_R] \times [N_t] \times [N_t] ,
\end{equation}
where, for any $L\in\N$, we write $[L]$ to denote the set $\{ 1 , 2 , \ldots , L \}$.
Throughout this paper we will refer to the triples of integers $(\beta,\tau,f)$ as \textit{parameters} of the considered targets keeping in mind that the actual \textit{physical} parameters can be obtained via the relations \eqref{eqn:parameters_on_grid}, \eqref{eqn:stepsizes}.

\subsubsection*{The linear measurement model}
If, more generally, several targets $\varTheta_k = ( \beta_k , \tau_k , f_k )$, $k=1, 2, \ldots, L$, with corresponding reflectivities $\rho_k \in \C$ are present, then the superposition of signals
\[
 \vc{y}_j = \sum_{k=1}^L \rho_k \cdot \ee{c \lambda^{-1} \tau_k \Delta_\tau} \bigg[ \ee{d_R \beta_k \Delta_\beta (j-1)} \sum_{i=1}^{N_T} \ee{d_T \beta_k \Delta_\beta (i-1)} \vc{M}_{f_k} \vc{T}_{\tau_k} \vc{s}_i \bigg] \in \C^{N_t}
\]
will be recorded at the $j$th receiver, see also \eqref{eqn:discrete_reveived_signal}.
We define vectors $\vc{A}_{\varTheta}^{j} \in \C^{N_t}$ with
\begin{equation}\label{eqn:columns}
 \vc{A}_{\varTheta}^j = \ee{d_R \beta \Delta_\beta (j-1)} \sum_{i=1}^{N_T} \ee{d_T \beta \Delta_\beta (i-1)} \vc{M}_{f} \vc{T}_{\tau} \vc{s}_i , \qquad \varTheta = (\beta , \tau , f) \in \GG ,
\end{equation}
which allows us to write $\vc{y}_j$ more conveniently as the sum $\sum_{k=1}^L \rho_k \ee{c \lambda^{-1} \tau_k \Delta_\tau} \vc{A}_{\varTheta_k}^j$.
A \textit{target scene} (a set of targets being present in the angle-delay-Doppler domain) can now equivalently be considered as a vector \mbox{$\vc{x} = ( x_\varTheta , \, \varTheta \in \GG ) \in \C^{N}$}, $N = N_T N_R N_t^2$, being supported on the indices $\varTheta = \varTheta_k$, $k=1, 2, \ldots, L$, which correspond to the parameters $\varTheta_k$ of the targets.
Here, each nonzero entry is basically the reflectivity parameter of the corresponding target (multiplied with a complex unit), given by $x_{\varTheta_k} = \rho_k \ee{c \lambda^{-1} \tau_k \Delta_\tau}$.
The collection of all received signals $(\vc{y}_1^T , \vc{y}_2^T , \ldots , \vc{y}_{N_R}^T )$ can now conveniently be written as matrix--vector product $\vc{A}\vc{x}$, where the matrix $\vc{A} \in \C^{N_R N_t \times N}$ contains the columns $\vc{A}_\varTheta$, $\varTheta \in \GG$, with
\begin{equation}\label{eqn:columns2}
 \vc{A}_\varTheta = ( ( \vc{A}_{\varTheta}^{1} )^T , ( \vc{A}_{\varTheta}^{2} )^T , \ldots , ( \vc{A}_{\varTheta}^{N_R} )^T  )^T \in \C^{N_R N_t} .
\end{equation}
Considering noise in the channel leads to the measurement model
\begin{equation}\label{eqn:measurements}
 \vc{y} = \vc{Ax} + \vc{n} \in \C^{N_R N_t},
\end{equation}
where $\vc{n}$ represents a noise vector.
Since $\vc{x}$ is an $N = N_R N_T N_t^2$-dimensional vector and the acquired information in the measured vector $\vc{y}$ is $m = N_R N_t$-dimensional, we end up with a highly under-determined system \eqref{eqn:measurements} from which the targets' parameters (the support set of the vector $\vc{x}$) have to be reconstructed.

\subsection{Recovery via \texorpdfstring{$\ell_1$-minimization}{l1-minimization} and the restricted isometry property}
Radar scenes are typically sparse in the target domain meaning $s = |\text{supp} (\vc{x})| \ll N$.
One can use this additional assumption for the reconstruction of $\vc{x}$.
At this point ideas from compressed sensing enter the field.
A well established --- and computationally practicable --- approach for reconstructing a sparse vector $\vc{x}$ from incomplete measurements $\vc{y}$ is \textit{basis pursuit denoising} which aims at reconstructing $\vc{x}$ by solving the convex optimization problem
\begin{equation}\label{eqn:basis_pursuit}
 \min_{\vc{z}} \| \vc{z} \|_1 \quad \text{subject to} \quad \| \vc{Az} - \vc{y} \|_2 \leq \varrho ,
\end{equation}
where $\varrho$ is an upper estimate of the noise level.
Here the $\| \cdot \|_1$-norm constitutes a convex relaxation of the $\| \cdot \|_0$-norm counting the number of nonzero entries, which one aims to minimize in the first place.
Indeed, under suitable assumptions on the measurement matrix $\boldsymbol{A}$, each solution $\vc{x}^\#$ of \eqref{eqn:basis_pursuit}
is close to the unknown $s$-sparse vector $\vc{x}$, i.e.,
\begin{align}\label{eqn:approx}
\| \vc{x}^\# - \vc{x} \|_1 & \leq C \inf_{\text{$s$-sparse $\vc{x}'$}} \| \vc{x}' - \vc{x} \|_1 + D\varrho\sqrt{s}, \\
\| \vc{x}^\# - \vc{x} \|_2 & \leq C \frac{\inf_{\text{$s$-sparse $\vc{x}'$}} \| \vc{x}' - \vc{x} \|_1}{\sqrt{s}} + D\varrho,  \label{eqn:approx2}
\end{align}
where $C,D > 0$ are numerical constants.

A popular condition guaranteeing recovery via basis pursuit denoising is the \textit{restricted isometry property (RIP)}.
The RIP is said to be fulfilled if $\boldsymbol{A}$ has a small restricted isometry constant which is the smallest number $\delta_s$ such that for all $s$-sparse vectors $\vc{x}$ it holds
\begin{equation}\label{eqn:rip}
 (1-\delta_s) \|\vc{x}\|_2^2 \leq \| \vc{A} \vc{x}\|_2^2 \leq (1+\delta_s) \|\vc{x}\|_2^2 .
\end{equation}
A sufficiently small restricted isometry constant implies that the basis pursuit denoising approach \eqref{eqn:basis_pursuit} yields approximations to any $s$-sparse vector $\vc{x}$ from incomplete measurements $\vc{y}$.
The following theorem is well-known \cite{cazh14}, see for instance also \cite{carota06-1,ca08,Foucart2013}, where constants have been subsequently been improved.

\begin{theorem}\label{thm:rip_reconstr_result}
Suppose the restricted isometry constant $\delta_{2s}$ of the matrix $\vc{A}$ satisfies $\delta_{2s} < 1 / \sqrt{2}$.
Then, for any $\vc{x} \in \C^N$ and $\vc{y} \in \C^m$ with $\| \vc{Ax} - \vc{y} \|_2 \leq \varrho$, any solution $\vc{x}^\#$ of
\eqref{eqn:basis_pursuit} fulfills \eqref{eqn:approx}, \eqref{eqn:approx2} where the constants $C,D > 0$ only depend on $\delta_{2s}$.
\end{theorem}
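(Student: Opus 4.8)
The plan is to deduce both error estimates \eqref{eqn:approx} and \eqref{eqn:approx2} from the \emph{$\ell_2$-robust null space property} and then to verify that the RIP bound $\delta_{2s} < 1/\sqrt{2}$ implies this property. Write $\vc{v} := \vc{x}^\# - \vc{x}$ for the reconstruction error, and let $S \subseteq [N]$ with $|S| = s$ index the $s$ largest-in-modulus entries of $\vc{x}$, so that $\|\vc{x}_{S^c}\|_1 = \sigma$, where $\sigma := \inf_{s\text{-sparse }\vc{x}'} \|\vc{x}' - \vc{x}\|_1$. Two elementary facts drive the argument. First, since $\vc{x}$ is feasible in \eqref{eqn:basis_pursuit} and $\|\vc{Ax}^\# - \vc{y}\|_2 \leq \varrho$, the triangle inequality gives the \emph{tube condition} $\|\vc{Av}\|_2 \leq 2\varrho$. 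Second, minimality of $\vc{x}^\#$ yields $\|\vc{x}^\#\|_1 \leq \|\vc{x}\|_1$; expanding both sides over $S$ and $S^c$ and using the triangle inequality produces the \emph{cone condition} $\|\vc{v}_{S^c}\|_1 \leq \|\vc{v}_S\|_1 + 2\sigma$.

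The heart of the matter is the implication that \eqref{eqn:rip} with $\delta_{2s} < 1/\sqrt{2}$ forces the existence of constants $0 < \rho < 1$ and $\tau > 0$, depending only on $\delta_{2s}$, such that
\[
\|\vc{v}_S\|_2 \leq \frac{\rho}{\sqrt{s}} \|\vc{v}_{S^c}\|_1 + \tau \|\vc{Av}\|_2
\]
for every $\vc{v} \in \C^N$ and every $S$ with $|S| \leq s$. To reach the \emph{sharp} threshold $1/\sqrt{2}$ I would not split the tail $\vc{v}_{S^c}$ into magnitude-ordered blocks of size $s$ (that classical device only delivers $\delta_{2s} < \sqrt{2}-1$); instead I would use the convex-combination technique of Cai and Zhang \cite{cazh14}. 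Concretely, one represents $\vc{v}_{S^c}$, up to the scale $\|\vc{v}_{S^c}\|_1 / s$, as an average $\sum_j \mu_j \vc{u}_j$ of $s$-sparse vectors $\vc{u}_j$ supported off $S$, with weights $\mu_j \geq 0$, $\sum_j \mu_j = 1$, and a uniform bound $\|\vc{u}_j\|_2 \leq \|\vc{v}_{S^c}\|_1 / \sqrt{s}$. Applying \eqref{eqn:rip} to the $2s$-sparse vectors $\vc{v}_S \pm \vc{u}_j$ and expanding the cross term $\langle \vc{A}\vc{v}_S, \vc{A}\vc{u}_j \rangle$ via the parallelogram identity turns the RIP into a quadratic estimate for $\|\vc{v}_S\|_2$; averaging over $j$ and optimizing is exactly what pins the admissible range to $\delta_{2s} < 1/\sqrt{2}$ while keeping $\rho < 1$.

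With the $\ell_2$-robust null space property in hand, the two bounds follow by the standard routine. Feeding the cone condition into the displayed inequality, and combining $\|\vc{v}_S\|_1 \leq \sqrt{s}\,\|\vc{v}_S\|_2$ with the $\ell_1$-split $\|\vc{v}\|_1 = \|\vc{v}_S\|_1 + \|\vc{v}_{S^c}\|_1$, yields \eqref{eqn:approx} after inserting $\|\vc{Av}\|_2 \leq 2\varrho$; a parallel computation, using a shifting inequality to control the tail in the $\ell_2$-norm, yields \eqref{eqn:approx2}. The resulting constants $C, D > 0$ depend only on $\rho$ and $\tau$, hence only on $\delta_{2s}$, as claimed.

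The main obstacle is precisely the sharp constant: naive RIP manipulations lose a multiplicative factor and only give the weaker threshold, so the delicate point is the existence and uniform $\ell_2$-control of the convex decomposition of $\vc{v}_{S^c}$ into $s$-sparse atoms. Everything downstream of that combinatorial-geometric lemma is bookkeeping with the cone and tube conditions; I would therefore devote the bulk of the write-up to establishing that decomposition cleanly and to the polarization step that extracts the $1/\sqrt{2}$ boundary.
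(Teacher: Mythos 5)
The paper does not actually prove Theorem~\ref{thm:rip_reconstr_result}: it is quoted as a known result, with the sharp threshold $\delta_{2s} < 1/\sqrt{2}$ attributed to Cai and Zhang \cite{cazh14} (see also \cite{Foucart2013}). Your sketch follows precisely the strategy of those references --- tube and cone conditions, reduction to the $\ell_2$-robust null space property, and the Cai--Zhang convex decomposition of the tail into $s$-sparse atoms combined with a polarization step to reach $1/\sqrt{2}$ --- so it takes the same route the paper implicitly relies on. One detail to repair in a full write-up: the decomposition lemma requires the tail to satisfy an $\ell_\infty$ bound as well as an $\ell_1$ bound at the chosen scale $\alpha$, and your choice $\alpha = \|\vc{v}_{S^c}\|_1/s$ does not in general dominate $\|\vc{v}_{S^c}\|_\infty$ (consider a tail with a single nonzero entry). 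The standard fix is to take $S$ as the $s$ largest entries of $\vc{v}$ and a scale of the order $\|\vc{v}_S\|_1/s$ plus the sparsity defect; this makes the resulting inequality self-referential in $\|\vc{v}_S\|_2$, which one then solves using $\rho<1$, so the bookkeeping changes slightly but the conclusion and the constants' dependence on $\delta_{2s}$ do not.
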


For certain random matrices, e.g., those with independent standard Gaussian entries, the following choice of the number of measurements $m$ depending on the dimension $N$ and the sparsity level $s$ is sufficient (and necessary, see below) for a small restricted isometry constant with high probability \cite{Foucart2013}:
\begin{equation}\label{eqn:typical_scaling}
m \gtrsim s \log (eN/s).
\end{equation}
Apart from Gaussian measurements this dependence also occurs for many other (even structured) random matrices --- up to possibly additional logarithmic factors. 

\subsection{An optimal RIP result}
A main goal of this paper consists in proving the following result on the number of required samples $N_t$ guaranteeing a small restricted isometry constant of the scaled MIMO radar measurement matrix (see \eqref{eqn:columns}, \eqref{eqn:columns2} for the definition)
\[
 \widetilde{\vc{A}} = \frac{1}{\sqrt{N_T N_R N_t}} \vc{A} \in \C^{N_R N_t \times N} ,
\]
and, moreover, in showing that this result is optimal in a certain sense.
We assume that the transmit pulse vectors $\vc{s}_i$ are independent standard complex Gaussian random vectors (see Appendix \ref{sec:tools} for some information about complex Gaussian random variables and vectors).

\begin{theorem}\label{thm:rip_result}
Let the signals $\vc{s}_1 , \vc{s}_2 , \ldots , \vc{s}_{N_T}$ generating the matrix $\vc{A}$ via \eqref{eqn:columns} and \eqref{eqn:columns2} be independent standard complex Gaussian random vectors.
If, for $\delta , \varepsilon \in (0,1)$,
\begin{equation}\label{eqn:rip_scaling}
 N_t \gtrsim \frac{s}{\delta^2} \max \{ \log^2 (eN) \log^2 (s) , \log (1 / \varepsilon ) \} ,
\end{equation}
then the restricted isometry constant of the rescaled matrix $\widetilde{\vc{A}}$ satisfies $\delta_s ( \widetilde{\vc{A}} ) < \delta$ with probability at least $1 - \varepsilon$.
\end{theorem}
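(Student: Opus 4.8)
The plan is to recognize $\delta_s(\widetilde{\vc{A}})$ as the supremum of a second-order Gaussian chaos and to control it with the machinery of \emph{suprema of chaos processes} (Krahmer--Mendelson--Rauhut). Writing $\vc{g} = ( \vc{s}_1^T , \ldots , \vc{s}_{N_T}^T )^T \in \C^{N_T N_t}$ for the concatenated standard complex Gaussian vector, the map $\vc{x} \mapsto \widetilde{\vc{A}} \vc{x}$ is linear in $\vc{g}$, so for every $\vc{x}$ I would write $\widetilde{\vc{A}} \vc{x} = \vc{B}_{\vc{x}} \vc{g}$ with a matrix $\vc{B}_{\vc{x}} \in \C^{N_R N_t \times N_T N_t}$ depending linearly on $\vc{x}$. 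Reading off \eqref{eqn:columns}, its $(j,i)$-block equals $\frac{1}{\sqrt{N_T N_R N_t}} \sum_{\varTheta} x_\varTheta \ee{\beta (j-1)/N_R} \ee{\beta(i-1)/(N_T N_R)} \vc{M}_f \vc{T}_\tau$, and each elementary summand factors as a Kronecker product $( \vc{u}^R_\beta ( \vc{u}^T_\beta )^* ) \otimes ( \vc{M}_f \vc{T}_\tau )$ of a rank-one array term and a unitary time--frequency shift. With $D$ the set of $s$-sparse unit vectors and $\mathcal{A} = \{ \vc{B}_{\vc{x}} : \vc{x} \in D \}$, one has $\delta_s ( \widetilde{\vc{A}} ) = \sup_{\vc{x} \in D} | \, \| \vc{B}_{\vc{x}} \vc{g} \|_2^2 - \E \| \vc{B}_{\vc{x}} \vc{g} \|_2^2 \, |$ (once the mean is verified), exactly the quantity the chaos theorem bounds in terms of the three geometric parameters $d_F ( \mathcal{A} )$, $d_{2 \to 2} ( \mathcal{A} )$, and $\gamma_2 ( \mathcal{A} , \opnorm{\cdot} )$.

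Second, I would compute $\E \| \vc{B}_{\vc{x}} \vc{g} \|_2^2 = \| \vc{B}_{\vc{x}} \|_F^2$ and show it equals $\| \vc{x} \|_2^2$; this simultaneously confirms the isometry in expectation and yields $d_F ( \mathcal{A} ) = 1$. The computation rests on two orthogonalities: distinct time--frequency shifts $\vc{M}_f \vc{T}_\tau$ are orthogonal in the Frobenius inner product, and the combined array phases satisfy $\sum_{i=1}^{N_T} \sum_{j=1}^{N_R} \ee{( \beta - \beta' ) ( (j-1) N_T + (i-1) )/(N_T N_R)} = N_T N_R \, \delta_{\beta , \beta'}$ for $\beta , \beta' \in [ N_T N_R ]$ --- this is precisely where the favorable choice \eqref{eqn:dT_dR} enters, collapsing the transmit/receive phases into a full virtual-array DFT. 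Next I would bound the operator-norm radius: since $\opnorm{ ( \vc{u}^R_\beta ( \vc{u}^T_\beta )^* ) \otimes ( \vc{M}_f \vc{T}_\tau ) } = \| \vc{u}^R_\beta \|_2 \| \vc{u}^T_\beta \|_2 = \sqrt{N_T N_R}$, the triangle inequality followed by Cauchy--Schwarz gives $\opnorm{ \vc{B}_{\vc{x}} } \le \frac{1}{\sqrt{N_t}} \| \vc{x} \|_1 \le \sqrt{s / N_t}$, hence $d_{2 \to 2} ( \mathcal{A} ) \le \sqrt{s / N_t}$. This small operator-norm radius is the source of the gain over naive estimates.

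Third --- and this is the technical heart --- I would estimate the Talagrand functional $\gamma_2 ( \mathcal{A} , \opnorm{\cdot} )$ through Dudley's entropy integral, split at a suitable scale into two regimes. At fine scales I would use the Lipschitz bound $\opnorm{ \vc{B}_{\vc{x}} - \vc{B}_{\vc{x}'} } \le \sqrt{s / N_t} \, \| \vc{x} - \vc{x}' \|_2$ together with a union over the $\binom{N}{s}$ supports and a volumetric covering of each $\C^s$-ball; at coarse scales I would invoke an empirical (Maurey-type) argument exploiting the bounded operator-norm building blocks, obtaining covering numbers that grow only polylogarithmically in $N$ and $N_t$. Integrating is expected to yield $\gamma_2 ( \mathcal{A} , \opnorm{\cdot} ) \lesssim \sqrt{s / N_t} \, \log ( eN ) \log ( s )$.

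Finally, feeding $d_F = 1$, $d_{2 \to 2} \le \sqrt{s / N_t}$, and this $\gamma_2$ estimate into the chaos theorem gives $\E \, \delta_s ( \widetilde{\vc{A}} ) \lesssim \gamma_2 ( \gamma_2 + 1 ) + d_{2 \to 2}$ together with a sub-Gaussian/sub-exponential tail governed by $V = d_{2 \to 2} ( \gamma_2 + 1 )$ and $U = d_{2 \to 2}^2$. Forcing the mean below $\delta / 2$ requires $N_t \gtrsim \frac{s}{\delta^2} \log^2 ( eN ) \log^2 ( s )$, while forcing the deviation below $\delta / 2$ with probability $1 - \varepsilon$ requires $N_t \gtrsim \frac{s}{\delta^2} \log ( 1 / \varepsilon )$, and the two conditions combine into \eqref{eqn:rip_scaling}. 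The main obstacle is the $\gamma_2$ estimate: producing an accurate operator-norm covering of the structured matrix set $\mathcal{A}$ and extracting the correct logarithmic powers from the two-regime entropy integral; a minor additional point is passing from the complex Gaussian chaos to the real form in which the chaos theorem is usually stated, which is handled via the real embedding (cf.\ the appendix).
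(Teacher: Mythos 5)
Your proposal follows essentially the same route as the paper: the reformulation $\widetilde{\vc{A}}\vc{x} = \vc{B}_{\vc{x}}\vc{g}$ as a second-order chaos, the computations $d_F(\AA)=1$ and $d_{2\to2}(\AA)\le\sqrt{s/N_t}$ via orthogonality of the elementary building blocks, the two-regime entropy-integral estimate (volumetric covering at fine scales, Maurey's empirical method at coarse scales) yielding $\sqrt{s/N_t}\,\log(eN)\log(s)$, and the final application of the Krahmer--Mendelson--Rauhut chaos bound are all exactly the steps the paper carries out. The proposal is correct and matches the paper's proof in both structure and the key quantitative estimates.
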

\begin{remark}\label{rem:RademacherRIP} The result as well as its proof extends to signals $\vc{s}_1 , \vc{s}_2 , \ldots , \vc{s}_{N_T}$ being independent Rademacher vectors (random vectors with independent entries taking the value $+1$ or $-1$ with equal probability) or independent
Steinhaus vectors (random vectors with independent entries that are uniformly distributed on the complex torus).
These generators may be of advantage for real radar systems where one prefers constant magnitude signals for optimal energy
consumption and in order to avoid that amplifiers possibly run out of their linear regime.  
\end{remark}

Recall that the matrix $\widetilde{\vc{A}}$ has $m = N_R N_t$ rows, i.e., $\widetilde{\vc{A}}$ represents $N_R N_t$ measurements.
In this sense the theorem may seem suboptimal considering that one typically would expect a scaling as in \eqref{eqn:typical_scaling}.
However, using general theoretical results on lower bounds for the necessary number of measurements, we argue below that our RIP result above is in fact optimal (up to possibly logarithmic factors).

\subsubsection*{Instance optimality}
The concept of \textit{instance optimality} \cite{codade09,Foucart2013} allows to derive 
general lower bounds on the number of measurements $m$ required for stable sparse reconstruction via any algorithm 
and any possible measurement matrix in $\C^{m \times N}$.
To be precise, we say that a pair $(\boldsymbol{A},\varDelta)$, where $\varDelta \colon \C^m \to \C^N$ is a reconstruction map
and $\boldsymbol{A} \in \C^{m \times N}$,
is \textit{$\ell_1$-instance optimal of order $s$ with constant $C>0$} if for all $\vc{x} \in \C^N$ it holds
\[
\| \vc{x} - \varDelta (\boldsymbol{A} \vc{x}) \|_1 \leq C \inf_{\text{$s$-sparse $\vc{x}'$}} \| \vc{x}' - \vc{x} \|_1 .
\]
The following result states the announced lower bound.

\begin{theorem}[\mbox{\cite[Thm. 11.6]{Foucart2013}}]
If a pair of measurement matrix $\boldsymbol{A} \in \C^{m \times N}$ and reconstruction map $\varDelta \colon \C^m \to \C^N$ is $\ell_1$-instance optimal of order $s$ with constant C, then
\begin{equation}\label{eqn:necessary_measurements}
m \geq C^\prime s \log ( e N / s)
\end{equation}
for some constant $C^\prime$ depending only on $C$.
\end{theorem}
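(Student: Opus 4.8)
The plan is to split the argument into an algebraic reduction from instance optimality to a spreading property of $\Ker \vc{A}$, followed by a combinatorial/volumetric argument that extracts the logarithmic factor. First I would exploit that instance optimality controls differences of preimages: if $\vc{A}\vc{x} = \vc{A}\vc{x}'$, then $\varDelta$ returns the same point, so the triangle inequality gives $\|\vc{x}-\vc{x}'\|_1 \le C(\sigma_s(\vc{x})_1 + \sigma_s(\vc{x}')_1)$, where $\sigma_s(\cdot)_1$ denotes the best $s$-term $\ell_1$-approximation error. Taking $\vc{x}'=0$ and $\vc{x}=\vc{v}\in\Ker\vc{A}$ arbitrary yields $\|\vc{v}\|_1 \le C\,\sigma_s(\vc{v})_1 = C\|\vc{v}_{\bar S}\|_1$, with $S$ the index set of the $s$ largest entries of $\vc{v}$. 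Rearranging, every nonzero kernel vector satisfies the \emph{spreading bound} $\|\vc{v}_S\|_1 \le (1-1/C)\|\vc{v}\|_1$ for every $|S|\le s$; note this is meaningful for any constant $C\ge 1$, so no smallness of $C$ is required. Geometrically, $\Ker\vc{A}$ is a subspace of dimension $N-m$ none of whose nonzero elements concentrates more than the fixed fraction $\theta := 1-1/C < 1$ of its $\ell_1$-mass on any $s$ coordinates.

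Next I would produce a rich family of well-separated sparse vectors against which this spreading property can be tested. Using a Gilbert--Varshamov-type counting (a packing of the Hamming cube by constant-weight codewords), there exist subsets $S_1,\dots,S_n \subseteq [N]$ with $|S_i| = s$ and $|S_i\cap S_j| < s/2$ for $i\neq j$, whose number satisfies $\log n \gtrsim s\log(eN/s)$. Writing $\vc{z}_i$ for the indicator vectors $\mathbf{1}_{S_i}$, these are $s$-sparse with $\|\vc{z}_i\|_1 = s$ and pairwise $\ell_1$-distance at least $s$, so they form an exponentially large packing of a scaled cross-polytope by $s$-sparse points.

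Finally I would convert the spreading property of $\Ker\vc{A}$ together with this packing into the dimension bound. The clean way to phrase the target is as a lower bound on the Gelfand width $d^m(B_1^N,\ell_1^N)$: the spreading bound above says precisely that this width is small, and the classical Garnaev--Gluskin/Kashin lower bound for $\ell_1$-balls then forces a codimension of at least $c' s\log(eN/s)$, i.e.\ $m \ge c' s\log(eN/s)$. Concretely, one compares metric entropy in the $m$-dimensional measurement space: the images $\vc{A}\vc{z}_i$ must be mutually distinguishable, and the spreading bound prevents the kernel from collapsing the packing, so an entropy/packing count in $\C^m\cong\R^{2m}$ gives $n \le (\text{const})^{m}$, whence $m \gtrsim \log n \gtrsim s\log(eN/s)$.

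I expect the main obstacle to be precisely this last step --- turning distinguishability into a logarithmic dimension bound. A naive argument only uses that exact recovery of $s$-sparse vectors makes $\vc{A}$ injective on $2s$-sparse vectors, which yields merely $m \ge 2s$ and loses the logarithm entirely. Recovering the $\log(eN/s)$ factor genuinely requires the volumetric/entropy comparison (equivalently the Gelfand-width lower bound), in which the conditioning of $\vc{A}$ must be neutralized by working with the normalized, spreading geometry of $\Ker\vc{A}$ rather than with $\vc{A}$ directly; this is the technical heart of the estimate.
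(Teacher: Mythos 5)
Note first that the paper contains no proof of this statement: it is imported verbatim from \cite[Thm.~11.6]{Foucart2013}, so your proposal can only be measured against the standard literature argument. Your first two steps are correct and coincide with how that argument begins. Applying instance optimality to $\vc{x}'=\vc{0}$ forces $\varDelta(\vc{0})=\vc{0}$, whence every $\vc{v}\in\Ker\vc{A}$ satisfies $\|\vc{v}\|_1\leq C\,\sigma_s(\vc{v})_1$, i.e.\ $\|\vc{v}_S\|_1\leq(1-1/C)\|\vc{v}\|_1$ for all $|S|\leq s$; and the Gilbert--Varshamov count producing $n$ sets with $|S_i|=s$, $|S_i\cap S_j|<s/2$ and $\log n \gtrsim s\log(eN/s)$ is exactly the combinatorial lemma used there (Lemma~10.12 in \cite{Foucart2013}).

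The gap sits where you predicted it, but it is larger than your sketch admits, in two concrete respects. First, the width you name is the wrong one: $d^m(B_1^N,\ell_1^N)=1$ for every $m<N$, since any nonzero kernel vector can be $\ell_1$-normalized and the error is measured in the same norm, so this quantity carries no information. The bound that actually closes the argument is the Garnaev--Gluskin lower estimate for $d^m(B_1^N,\ell_2^N)\gtrsim\min\{1,\sqrt{\log(eN/m)/m}\}$, and to use it one must first convert the $\ell_1$-spreading property on $\Ker\vc{A}$ into an $\ell_2$-versus-$\ell_1$ comparison via a Stechkin-type estimate $\|\vc{v}_{S^c}\|_2\leq s^{-1/2}\|\vc{v}\|_1$. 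Second, your fallback --- a direct packing count of the quotient classes $[\vc{z}_i]$ in $\C^m$ --- does not go through for a general constant $C$: to separate $[\vc{z}_i]$ from $[\vc{z}_j]$ one is led to the estimate $\|\vc{v}_{S_i\cup S_j}\|_1\leq\|\vc{v}_{S_i}\|_1+\|\vc{v}_{S_j}\|_1\leq 2(1-1/C)\|\vc{v}\|_1$ for $\vc{v}\in\Ker\vc{A}$ close to $\vc{z}_i-\vc{z}_j$, which is vacuous as soon as $C\geq 2$, so the separation constant $\mu$ in the volumetric bound $n\leq(1+2/\mu)^{2m}$ cannot be extracted this way. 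The literature proof circumvents exactly this obstruction by the $\ell_2$ detour, where the constant degrades gracefully into $C'$. In short: right skeleton, correct reduction and correct combinatorics, but the decisive analytic step is missing and the named substitute for it is trivially useless.
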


In case of the MIMO radar measurement matrix ${\vc{A}}$, this result can be applied to a certain submatrix of ${\vc{A}}$.
To this end, we define $\GG_{\beta^*} := \{ (\beta,\tau,f) \in \GG \midcol \beta = \beta^* \}$, for an arbitrary but fixed angle parameter $\beta^*$.
For a target scene $\vc{x}$ with support in $\GG_{\beta^*}$ one finds (recalling \eqref{eqn:columns}) that the product ${\vc{A}} \vc{x}$ can be expressed as outer product $\vc{w} \otimes {\vc{B}} \vc{x}_{\GG_{\beta^*}}$, where ${\vc{B}}$ is an $N_t \times N_t^2$ matrix with columns
\[
 \vc{B}_{(\beta^* ,\tau,f)} = \sum_{i=1}^{N_T} \ee{d_T \beta^* \Delta_\beta (i-1)} \vc{M}_{f} \vc{T}_{\tau} \vc{s}_i , \qquad (\beta^* ,\tau , f) \in \GG_{\beta^*} ,
\]
and $\vc{w}$ is a $N_R$-dimensional vector, entrywise defined as
\[
 [\vc{w}]_j = \ee{d_R \beta^* \Delta_\beta (j-1)} , \qquad j \in [N_R],
\]
and the vector $\vc{x}_{\GG_{\beta^*}}$ is obtained from $\vc{x}$ by deleting all entries which do not belong to the set $\GG_{\beta^*}$.
This implies that $\| \widetilde{\vc{A}} \vc{x} \|_2^2 = \| \widetilde{\vc{B}} \vc{x}_{\GG_{\beta^*}} \|_2^2$,
where $\widetilde{\vc{B}}$ stands for the scaled matrix $\frac{1}{\sqrt{N_T N_t}} \vc{B}$.
Consequently, the restricted isometry constant $\delta_{2s}(\widetilde{\vc{B}})$ is bounded by $\delta_{2s}(\widetilde{\vc{A}})$.
We conclude that $\delta_{2s}(\widetilde{\vc{A}}) < 1/\sqrt{2}$ implies $\delta_{2s}(\widetilde{\vc{B}}) < 1/\sqrt{2}$ so that
by Theorem~\ref{thm:rip_reconstr_result} the matrix $\widetilde{\vc{B}}$ in combination with the reconstruction map corresponding to basis pursuit \eqref{eqn:basis_pursuit} with $\rho=0$ is $\ell_1$-instance optimal. Since $\widetilde{\vc{B}}$ has $N_t$ rows and $N_t^2$ columns, \eqref{eqn:necessary_measurements} reads as $N_t \gtrsim s \log (eN_t^2 / s)$.
Therefore, we can formulate the following result. (Essentially, this result could have similarly been derived from Corollary~10.8 in \cite{Foucart2013}.)


\begin{theorem}
If the restricted isometry constant of the scaled MIMO radar measurement matrix \mbox{$\widetilde{\vc{A}} \in \C^{N_R N_t \times N}$} considered in Theorem~\ref{thm:rip_result} satisfies $\delta_{2s} < 1/\sqrt{2}$, then it holds
\begin{equation}\label{eqn:lower_bound}
 N_t \gtrsim s \log (eN_t^2 / s) .
\end{equation}
\end{theorem}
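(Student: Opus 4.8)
The plan is to reduce the restricted isometry property of $\widetilde{\vc{A}}$ to that of the much smaller matrix $\widetilde{\vc{B}}$ obtained by freezing the angle parameter, and then to invoke the instance-optimality lower bound quoted above. Fix an arbitrary angle parameter $\beta^*$ and restrict attention to target scenes $\vc{x}$ supported on $\GG_{\beta^*}$. The first step is to make the outer-product structure precise: since the $j$-th block of each column satisfies $\vc{A}_\varTheta^j = \ee{d_R \beta^* \Delta_\beta (j-1)}\, \vc{B}_\varTheta$ for $\varTheta \in \GG_{\beta^*}$, every column $\vc{A}_\varTheta$ equals the Kronecker product $\vc{w} \otimes \vc{B}_\varTheta$, and hence by linearity $\vc{A}\vc{x} = \vc{w} \otimes (\vc{B}\,\vc{x}_{\GG_{\beta^*}})$ for every such $\vc{x}$.

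The second step is the norm bookkeeping. Because each entry $[\vc{w}]_j = \ee{d_R \beta^* \Delta_\beta (j-1)}$ has modulus one, one has $\|\vc{w}\|_2^2 = N_R$, so that $\|\vc{A}\vc{x}\|_2^2 = N_R\,\|\vc{B}\,\vc{x}_{\GG_{\beta^*}}\|_2^2$. Substituting the two rescalings $\widetilde{\vc{A}} = (N_T N_R N_t)^{-1/2}\vc{A}$ and $\widetilde{\vc{B}} = (N_T N_t)^{-1/2}\vc{B}$ makes the factor $N_R$ cancel exactly, yielding the key identity $\|\widetilde{\vc{A}}\vc{x}\|_2^2 = \|\widetilde{\vc{B}}\,\vc{x}_{\GG_{\beta^*}}\|_2^2$. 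Since the map $\vc{x} \mapsto \vc{x}_{\GG_{\beta^*}}$ is norm-preserving on vectors supported in $\GG_{\beta^*}$ and identifies the $2s$-sparse such vectors with arbitrary $2s$-sparse vectors in $\C^{N_t^2}$, the defining inequalities \eqref{eqn:rip} for $\widetilde{\vc{A}}$, restricted to this subclass, transfer verbatim to $\widetilde{\vc{B}}$. Hence $\delta_{2s}(\widetilde{\vc{B}}) \leq \delta_{2s}(\widetilde{\vc{A}})$, and the hypothesis $\delta_{2s}(\widetilde{\vc{A}}) < 1/\sqrt{2}$ forces $\delta_{2s}(\widetilde{\vc{B}}) < 1/\sqrt{2}$.

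With the RIP bound in hand for $\widetilde{\vc{B}}$, the third step is to apply Theorem~\ref{thm:rip_reconstr_result} with noise level $\varrho = 0$: taking for the reconstruction map any choice sending $\vc{y}$ to a minimizer of \eqref{eqn:basis_pursuit} shows, via \eqref{eqn:approx}, that this pair is $\ell_1$-instance optimal of order $s$. The final step is to feed this into the instance-optimality lower bound \cite[Thm.~11.6]{Foucart2013}: since $\widetilde{\vc{B}}$ has $N_t$ rows and $N_t^2$ columns, \eqref{eqn:necessary_measurements} specializes to $N_t \gtrsim s \log(e N_t^2 / s)$, which is exactly \eqref{eqn:lower_bound}.

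There is no genuinely hard analytic step here; the content of the argument is the structural reduction. The only points demanding care are verifying that the two rescaling constants conspire to make the $\ell_2$ norms agree \emph{exactly} rather than merely up to a dimensional factor, and confirming that basis pursuit indeed defines an admissible reconstruction map so that the instance-optimality framework applies. Both are routine once the Kronecker factorization $\vc{A}\vc{x} = \vc{w} \otimes (\vc{B}\,\vc{x}_{\GG_{\beta^*}})$ is established.
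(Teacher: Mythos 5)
Your argument is correct and is essentially identical to the paper's own proof: both fix an angle parameter $\beta^*$, factor $\vc{A}\vc{x} = \vc{w}\otimes(\vc{B}\vc{x}_{\GG_{\beta^*}})$ so that the rescalings cancel and $\delta_{2s}(\widetilde{\vc{B}})\leq\delta_{2s}(\widetilde{\vc{A}})$, and then combine Theorem~\ref{thm:rip_reconstr_result} with $\varrho=0$ and the instance-optimality lower bound applied to the $N_t\times N_t^2$ matrix $\widetilde{\vc{B}}$. Your additional care about the exact cancellation of the normalization constants and the admissibility of the basis pursuit reconstruction map is a welcome, if routine, elaboration of what the paper states more tersely.
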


This shows that --- up to additional logarithmic factors --- Theorem \ref{thm:rip_result} on the restricted isometry constant for the MIMO radar measurement matrix is indeed optimal.
Analogously one can also argue that $\ell_1$-instance optimal uniform recovery in general --- which is not necessarily based on the RIP --- is not possible with less measurements than in \eqref{eqn:rip_scaling}.
Indeed, by assuming that there exists a corresponding reconstruction map for the matrix $\widetilde{\vc{A}}$ one can use this map for constructing an $\ell_1$-instance optimal reconstruction map for any submatrix in $\widetilde{\vc{B}}$ as defined above, which again yields \eqref{eqn:lower_bound}.

\subsection{Nonuniform recovery for random support sets}\label{sec:nonuniform_results}
As stated by Theorem \ref{thm:rip_result}, for a \textit{fixed} draw $\vc{A}$ of the measurement matrix the RIP is fulfilled with high probability and, hence (due to Theorem \ref{thm:rip_reconstr_result}), \textit{all} sparse vectors $\vc{x}$ can be reconstructed by considering measurements $\vc{A}\vc{x}$.
Since in this scenario $\vc{A}$ is fixed and reconstruction is possible for all sparse vectors $\vc{x}$, we speak of a \textit{uniform} recovery result.
A \textit{nonuniform} recovery result, on the other hand, only states that a given sparse vector $\vc{x}$ can be reconstructed from measurements $\vc{A}\vc{x}$ with high probability on the draw of the matrix $\vc{A}$, i.e., no assertion on the reconstruction of all sparse vectors using a single draw of the matrix $\vc{A}$ is made.

The results on the MIMO radar measurements by Strohmer \& Friedlander \cite{Strohmer2012} imply that, at least on average, \textit{nonuniform} reconstruction (for random support sets) succeeds with less measurements.
In \cite{Strohmer2012} they derive results for nonuniform reconstruction via the \textit{debiased LASSO} reconstruction scheme.
In case of the \textit{standard} LASSO, an approximation $\vc{x}^\#$ of $\vc{x}$ is obtained by solving the minimization problem
\begin{equation}\label{eqn:lasso}
\min_{\vc{z}} \frac{1}{2} \| \vc{Az} - \vc{y} \|_2^2 + \lambda \|\vc{z}\|_1 ,
\end{equation}
where $\lambda > 0$ has to be chosen appropriately (in accordance with the noise level).
In case that the support $S = \supp ( \vc{x} )$ is recovered correctly, i.e., $\supp ( \vc{x}^\# ) = S$, one might add a further step consisting of solving the reduced least squares problem 
\begin{equation}\label{eqn:debiasing}
 \min_{\vc{z}} \| \vc{A}_S \vc{z} - \vc{y}  \|_2^2 ,
\end{equation}
where $\vc{A}_S$ contains only the columns of $\vc{A}$ corresponding to the support set $S$ and the minimization is over all $\vc{z} \in \C^{|S|}$.
This second {\lq\lq debiasing\rq\rq} step leads in general to an improvement of the reconstructed coefficients.
The LASSO approach is closely related to basis pursuit denoising \eqref{eqn:basis_pursuit} but due to the unconstrained optimization often favorable in practice \cite{Foucart2013}.
\mbox{Strohmer \& Friedlander} assume a random target model, assuming that for a fixed sparsity parameter every support set occurs with the same probability and, moreover, the phases of the nonzero entries of the targets are independent and uniformly distributed in $[0,2\pi )$.
This model has been studied in \cite{capl09} in a general context and is called the \textit{generic $s$-sparse target model} therein; see also \cite[\mbox{Chap.~14}]{Foucart2013}.
Under additional technical assumptions, it has been shown in \cite{Strohmer2012}\footnote{Note that there is a typo in \cite[\mbox{Thm. 5}]{Strohmer2012} where an additional factor $N_t$ on the left-hand side of \eqref{eqn:strohmer_scaling} shows up.} that if $\vc{x}$ is an $s$-sparse target scene, then
\begin{equation}\label{eqn:strohmer_scaling}
 N_R N_t \gtrsim s \log (N) ,
\end{equation}
measurements are sufficient to guarantee that, with high probability, each solution $\vc{x}^\#$ to the LASSO has in particular the exact same support set as the target vector $\vc{x}$, thus perfectly recovering the parameters of the targets.

\subsection{Nonuniform recovery for deterministic support sets}\label{sec:correlation}
The results by Strohmer \& Friedlander imply that there must exist support sets which can be reconstructed using less measurements (than in our RIP result, see Theorem \ref{thm:rip_result}).
As it turns out, certain angle classes are not correlated among each other and, hence, support sets where the mass is uniformly distributed among these angle classes are favorable for reconstruction.

\subsubsection*{Balanced support sets}
Recall the definition of the columns $\vc{A}_\varTheta$, $\varTheta \in \GG$, of the MIMO radar measurement matrix $\vc{A}$ from \eqref{eqn:columns}, \eqref{eqn:columns2}.
For the inner product of two such columns $\vc{A}_{\varTheta}, \vc{A}_{\varTheta'}$ it holds
\begin{equation}\label{eqn:correlation_formula}
 \langle \vc{A}_{\varTheta} , \vc{A}_{\varTheta^\prime} \rangle = \sum_{k=1}^{N_R} \ee{d_R (\beta^\prime -\beta) \Delta_\beta (k-1)} \left\langle \sum_{i=1}^{N_T} \ee{d_T \beta \Delta_\beta (i-1)} \vc{M}_{f} \vc{T}_{\tau} \vc{s}_i , \sum_{i=1}^{N_T} \ee{d_T \beta^\prime \Delta_\beta (i-1)} \vc{M}_{f^\prime} \vc{T}_{\tau^\prime} \vc{s}_i \right\rangle .
\end{equation}
Due to the fact that $d_R = {N_T}/{2}$ and $\Delta_\beta = 2/N_T N_R$, the first factor can be calculated as
\[
 \sum_{k=1}^{N_R} \ee{d_R (\beta^\prime -\beta ) \Delta_\beta (k-1)} = \delta^{\sim_{N_R}}_{\beta,\beta'} N_R, \qquad \text{where } \quad \delta^{\sim_{N_R}}_{\beta,\beta'} :=
 \begin{cases}
 1, & \text{if $\beta^\prime - \beta \in N_R \Z$,} \\
 0, & \text{else.}
 \end{cases}
\]
This means that for two columns $\vc{A}_{(\beta,\tau,f)}$, $\vc{A}_{(\beta^\prime,\tau^\prime,f^\prime)}$ to be correlated it is necessary that $\beta^\prime - \beta \in N_R \Z$.
The latter condition induces an equivalence relation on the set of possible angle parameters.
We say that two angle parameters $\beta , \beta^\prime$ are equivalent if and only if $\beta^\prime - \beta \in N_R \Z$, i.e.,
\[
 \beta^\prime \sim \beta \iff \beta^\prime - \beta \in N_R \Z ,
\]
which, recalling from \eqref{eqn:grid} that $\beta$ is taken from $\{ 1 , 2 , \ldots , N_T N_R \}$, leaves us with $N_R$ angle classes $\betacl$, each containing $N_T$ different angle parameters.
The crucial fact is that, since the columns from different angle classes are uncorrelated, the matrices $\widetilde{\vc{A}}_S^* \widetilde{\vc{A}}_S$, $S \subset \GG$, are block diagonal.
It holds in particular,
\begin{equation}\label{eqn:matrix_AS_block_diagonal}
 \opnorm{\widetilde{\vc{A}}_S^* \widetilde{\vc{A}}_S - \ID} =  \max_{[\beta]} \opnorm{\widetilde{\vc{A}}_{S_{[\beta]}}^* \widetilde{\vc{A}}_{S_{[\beta]}} - \ID},
\end{equation}
where the subsets $S_{\betacl} \subseteq S$ are defined as
\begin{equation}\label{eqn:equivalent_indices}
 S_{[\beta]} := \{ \varTheta^\prime = ( \beta^\prime , \tau^\prime , f^\prime ) \in S \midcol \beta^\prime \sim \beta \} .
\end{equation}
Considering \eqref{eqn:matrix_AS_block_diagonal} one may expect that a uniform distribution of any given support set $S$ over all angle classes would make it most likely that the matrix $\widetilde{\vc{A}}_S^* \widetilde{\vc{A}}_S$ is well conditioned (since in this case the maximal dimension of the submatrices $\widetilde{\vc{A}}_{S_{[\beta]}}^* \widetilde{\vc{A}}_{S_{[\beta]}}$ becomes minimal).
This motivates to introduce a parameter $\eta$ measuring how evenly distributed over the angle classes such a support set is.

\begin{definition}\label{def:balanced}
The balancedness parameter of a support set $S \subset \GG$ is the smallest number $\eta$ such for all angle classes $[\beta]$ it holds
\begin{equation*}
 |S_{[\beta]}| \leq \eta \frac{|S|}{N_R} .
\end{equation*}
We say that $S$ is $\eta$-balanced, if the balancedness parameter of $S$ equals $\eta$.
\end{definition}

According to this definition the balancedness parameter can expressed as
\[
 \eta = \max_{[\beta]} \frac{N_R |S_{[\beta]}|}{|S|} .
\]
Clearly, the balancedness parameter takes values in $[1, N_R]$, where $\eta = 1$ means that the $|S|$ targets are perfectly balanced with respect to the $N_R$ angle classes (with $|S_{[\beta]}| \sim |S| /N_R$), and, on the other hand, $\eta=N_R$ corresponds to the case where all targets are located in one particular angle class.

\subsubsection*{Nonuniform recovery results}
Next we present our nonuniform recovery results for $s$-sparse target scenes with \mbox{$\eta$-balanced} support sets (see Definition \ref{def:balanced} above) and, thereby, reveal an essentially linear dependence of the number of measurements $m = N_R N_t$ required for reconstruction on the number of targets $s$.
Note, how the parameter $\eta$ enters the corresponding condition \eqref{eqn:nonuniform_scaling} below.
In the worst possible case where $\eta=N_R$ we obtain essentially the same scaling as for the RIP estimate in \eqref{eqn:rip_scaling}, while for the best possible case $\eta = 1$ we obtain the announced linear scaling.

For the following we recall that a Steinhaus sequence is a sequence of independent random variables that are uniformly distributed on the complex unit circle $\{ z \in \C \midcol |z|=1 \}$.
Due to the fact that the grid $\GG$ (see \eqref{eqn:grid}) is $N=N_T N_R N_t^2$-dimensional, a target scene $\vc{x}$ can be regarded as a vector in $\C^N$.

\begin{theorem}\label{thm:nonuniform_result}
Let $\vc{x} \in \C^{N}$ be an $s$-sparse target scene with $\eta$-balanced support set and assume that measurements $\vc{y} = \vc{Ax} + \vc{n} \in \C^{N_R N_t}$ are given, where the signals $\vc{s}_1 , \vc{s}_2 , \ldots , \vc{s}_{N_T}$ generating the columns of $\vc{A}$ (according to \eqref{eqn:columns}, \eqref{eqn:columns2}) are independent standard complex Gaussian random vectors, and the noise vector $\vc{n}$ is a mean-zero complex Gaussian random vector with variance $\sigma^2$.
If
\begin{equation}\label{eqn:nonuniform_scaling}
 N_R N_t \gtrsim \eta s \log^3 ( N / \varepsilon ) ,
\end{equation}
and the signs of the nonzero entries of $\vc{x}$ form a Steinhaus sequence while the magnitudes satisfy
\begin{equation}\label{eqn:thres}
 \min_{\varTheta \in \supp ( \vc{x} )} |\vc{x}_\varTheta| > \frac{8 \sigma}{\sqrt{N_T N_R N_t}} \sqrt{2 \log (N)} ,
\end{equation}
then, with probability at least $1- 7 \max \{ \varepsilon , N^{-2} \}$,
\begin{enumerate}
 \item[(a)] the solution $\vc{x}^\#$ to the LASSO \eqref{eqn:lasso} with $\lambda = 2\sigma \sqrt{2 N_T N_R N_t \log (N)}$ satisfies
 \[
  \supp ( \vc{x}^\# ) = \supp (\vc{x}) ,
 \]
 \item[(b)]
 the solution $\vc{z}^\#$ to the least squares problem \eqref{eqn:debiasing} with $S = \supp ( \vc{x}^\# )$ satisfies
 \begin{equation}\label{eqn:coeff_estimate}
  \| \vc{z}^\# - \vc{x}_S \|_2 \leq \frac{2 \sigma \sqrt{s}}{\sqrt{N_T N_R N_t}} \sqrt{2 \log (N)} .
 \end{equation}
\end{enumerate}
\end{theorem}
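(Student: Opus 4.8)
The plan is to follow the primal--dual witness (dual certificate) strategy for exact support recovery via the LASSO, using the block-diagonal structure \eqref{eqn:matrix_AS_block_diagonal} to convert the balancedness parameter $\eta$ into the announced scaling. Throughout I work with the normalized matrix $\widetilde{\vc{A}}$, whose columns have unit expected squared norm, and translate the thresholds for $\lambda$ and for \eqref{eqn:thres} accordingly. Because columns lying in distinct angle classes are exactly uncorrelated, every quantity entering the KKT analysis --- the Gram matrix $\widetilde{\vc{A}}_S^*\widetilde{\vc{A}}_S$ and each off-support inner product $\langle\widetilde{\vc{A}}_\varTheta,\vc{v}\rangle$ below --- decomposes over the $N_R$ angle classes, and the relevant dimension in class $\betacl$ is $|S_{\betacl}|\le\eta s/N_R$ by Definition \ref{def:balanced}.

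First I would establish the conditioning event $\opnorm{\widetilde{\vc{A}}_S^*\widetilde{\vc{A}}_S-\ID}\le 1/2$. By \eqref{eqn:matrix_AS_block_diagonal} this reduces to conditioning each block $\widetilde{\vc{A}}_{S_{\betacl}}^*\widetilde{\vc{A}}_{S_{\betacl}}$, a Gram matrix of a time--frequency structured Gaussian system with at most $\eta s/N_R$ columns in effective dimension $N_t$. Here, unlike in Theorem \ref{thm:rip_result}, no union bound over supports is needed, so the same chaos/moment estimates underlying the RIP proof yield the weaker fixed-support conditioning under $N_t\gtrsim(\eta s/N_R)\,\mathrm{polylog}$, i.e.\ exactly $N_R N_t\gtrsim\eta s\,\mathrm{polylog}$ as in \eqref{eqn:nonuniform_scaling}. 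On this event $\|(\widetilde{\vc{A}}_S^*\widetilde{\vc{A}}_S)^{-1}\|_{2\to2}\le 2$.

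Next, conditioning on $\widetilde{\vc{A}}$ and on this event, I would form the dual vector $\vc{v}=\widetilde{\vc{A}}_S(\widetilde{\vc{A}}_S^*\widetilde{\vc{A}}_S)^{-1}\sgn(\vc{x}_S)$ and verify strict dual feasibility $|\langle\widetilde{\vc{A}}_\varTheta,\vc{v}\rangle|<1-c$ for every $\varTheta\notin S$, with the noise term $\tfrac1\lambda\widetilde{\vc{A}}^*\vc{n}$ kept small. The Steinhaus assumption on $\sgn(\vc{x}_S)$ is decisive: for fixed $\widetilde{\vc{A}}$ each $\langle\widetilde{\vc{A}}_\varTheta,\vc{v}\rangle$ is a sum of independent, mean-zero, random-phase contributions --- only the block $S_{[\beta']}$ of the class containing $\varTheta$ survives --- so a Hoeffding-type inequality for random phases gives subgaussian concentration about $0$ with variance governed by the off-support $\ell_2$-correlations, themselves small on the conditioning event. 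A union bound over the at most $N$ off-support indices supplies one $\log N$ factor, the conditioning/coherence estimates the remaining $\log^2$ factors, accounting for $\log^3(N/\varepsilon)$ in \eqref{eqn:nonuniform_scaling}. For the Gaussian noise I would bound $\tfrac1\lambda\|\widetilde{\vc{A}}^*\vc{n}\|_\infty$ by standard Gaussian tail estimates; the calibrated choice $\lambda=2\sigma\sqrt{2N_TN_RN_t\log N}$ makes this a small constant with probability $1-N^{-2}$.

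Finally, the on-support coefficients of the witness equal $\vc{x}_S-\lambda(\widetilde{\vc{A}}_S^*\widetilde{\vc{A}}_S)^{-1}\sgn(\vc{x}_S)$ plus a noise term, and the magnitude threshold \eqref{eqn:thres} is precisely what keeps them nonzero with the correct signs; hence the witness has support exactly $S$ and, by strict dual feasibility, is the unique LASSO minimizer, giving part (a). Part (b) is then immediate: with $S=\supp(\vc{x}^\#)=\supp(\vc{x})$ the debiasing solution satisfies $\vc{z}^\#-\vc{x}_S=(\vc{A}_S^*\vc{A}_S)^{-1}\vc{A}_S^*\vc{n}$, and bounding $\|\widetilde{\vc{A}}_S^*\vc{n}\|_2$ by a Gaussian tail estimate together with $\|(\widetilde{\vc{A}}_S^*\widetilde{\vc{A}}_S)^{-1}\|_{2\to2}\le2$ and the normalization $\vc{A}_S=\sqrt{N_TN_RN_t}\,\widetilde{\vc{A}}_S$ yields \eqref{eqn:coeff_estimate}; a union bound over the several failure events produces the stated probability $1-7\max\{\varepsilon,N^{-2}\}$. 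I expect the main obstacle to be the structured-matrix estimates in the second and third steps: the columns within one angle class share the Gaussian pulses $\vc{s}_i$ and are therefore strongly dependent, so both the fixed-support conditioning and the variance bound feeding the dual-certificate concentration require the moment/suprema-of-chaos machinery developed for Theorem \ref{thm:rip_result}, carefully decoupled from the independent sign randomness.
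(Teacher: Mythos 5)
Your proposal is correct and follows essentially the same route as the paper: the conditions you describe are precisely the Cand\`es--Plan primal--dual witness conditions $(C_1)$--$(C_5)$ verified in Sections~\ref{sec:conditions_lemmas}--\ref{sec:proof_nonuniform_result_b}, with the same angle-class block decomposition converting balancedness into $|S_{[\beta]}|\leq \eta s/N_R$, the same Steinhaus--Hoeffding bounds for the sign-dependent quantities, Gaussian tails for the noise, and the explicit least-squares formula for part (b). The only point where you gloss over the real work is the fixed-support conditioning of $\widetilde{\vc{A}}_{S_{[\beta]}}^*\widetilde{\vc{A}}_{S_{[\beta]}}$: the paper does not recycle the RIP chaos-process bound but proves a dedicated tail estimate (Proposition~\ref{prop:opnorm_bound}) by splitting the Gaussian quadratic form into a diagonal part handled by a matrix Bernstein inequality and an off-diagonal part handled by decoupling plus a noncommutative Khintchine inequality for order-two matrix chaos --- either tool would do, but that is where the bulk of the technical effort sits.
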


\goodbreak
Note that, in view of the scaling of $s$ in \eqref{eqn:nonuniform_scaling}, the approximation estimate \eqref{eqn:coeff_estimate} actually implies that $\| \vc{z}^\# - \vc{x}_S \|_2 \lesssim \sigma / \sqrt{N_T}$, where $N_T$ is the number of transmitters.
Furthermore, we would like to point out that, for technical reasons, 
we did not consider deterministic signs of the nonzero entries of $\vc{x}$.
Nevertheless, we do believe that with similar ideas as in \cite{hürast14} such a further generalization would be possible (although technical).

The next theorem, our second main result on nonuniform recovery, makes an assertion about reconstruction via basis pursuit denoising.
Unlike the previous Theorem \ref{thm:nonuniform_result}, the following result deals with target scenes which are not exactly sparse but rather can be approximated well by sparse vectors.
Again, the balancedness parameter $\eta$ of the considered support sets is essential.

\begin{theorem}\label{thm:nonuniform_result_sparsity_defect}
 Let $\vc{x} \in \C^{N}$ be a target scene, let $S \subset \GG$ be an index set corresponding to $s$ largest (by magnitude) entries in $\vc{x}$, and let $S$ be $\eta$-balanced. Further, assume that the signs of the coefficients $\vc{x}_S$ form a Steinhaus sequence.
 Assume measurements \mbox{$\vc{y} = \vc{Ax} + \vc{n} \in \C^{N_R N_t}$} are given, where the signals $\vc{s}_1 , \vc{s}_2 , \ldots , \vc{s}_{N_T}$ generating the columns of the measurement matrix $\vc{A}$ \mbox{(according to \eqref{eqn:columns}, \eqref{eqn:columns2})} are independent standard Gaussian random vectors, and the noise vector $\vc{n}$ is known to satisfy $\|\vc{n}\|_2 \leq \varrho$.
 If
 \begin{equation}\label{eqn:nonuniform_scaling2}
  N_R N_t \gtrsim \eta s \log^3 ( N / \varepsilon ) ,
 \end{equation}
 then, with probability at least $1-\varepsilon$, the solution $\vc{x}^\#$ to the basis pursuit denoising program \eqref{eqn:basis_pursuit} satisfies
 \[
  \| \vc{x}^\# - \vc{x} \|_2 \leq C_1 \inf_{\text{$s$-sparse $\vc{x}'$}} \| \vc{x}' - \vc{x} \|_1 + C_2 \frac{\varrho\sqrt{s}}{\sqrt{N_T N_R N_t}} ,
 \]
where $C_1$ and $C_2$ are numerical constants.
\end{theorem}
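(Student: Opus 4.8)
The plan is to pass to the rescaled matrix $\widetilde{\vc{A}} = \frac{1}{\sqrt{N_T N_R N_t}}\vc{A}$ and to verify the hypotheses of a standard dual-certificate based recovery guarantee for basis pursuit denoising under sparsity defect. Dividing the measurement model by $\sqrt{N_T N_R N_t}$ turns it into $\widetilde{\vc{A}}\vc{x} + \widetilde{\vc{n}}$ with residual bound $\|\widetilde{\vc{n}}\|_2 \leq \widetilde{\varrho} := \varrho / \sqrt{N_T N_R N_t}$, and since scaling the constraint in \eqref{eqn:basis_pursuit} by a constant does not change the minimizer, $\vc{x}^\#$ is simultaneously a solution of the rescaled program; this is exactly where the factor $\sqrt{N_T N_R N_t}$ in the noise term originates. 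The two ingredients I would establish are a \emph{local conditioning estimate} $\opnorm{\widetilde{\vc{A}}_S^* \widetilde{\vc{A}}_S - \ID} \leq \tfrac{1}{2}$ and an \emph{exact dual certificate} $\vc{u} = \widetilde{\vc{A}}^* \vc{h}$ with $\vc{u}_S = \sgn(\vc{x}_S)$, $\max_{\varTheta \in \GG \setminus S} |\vc{u}_\varTheta| \leq \tfrac{1}{2}$, and $\|\vc{h}\|_2 \lesssim \sqrt{s}$. Feeding these into the usual primal feasibility/dual optimality argument --- bounding the noise contribution by $|\langle \vc{h}, \widetilde{\vc{A}}(\vc{x}^\# - \vc{x})\rangle| \leq \|\vc{h}\|_2 \cdot 2\widetilde{\varrho} \lesssim \sqrt{s}\,\widetilde{\varrho}$ and converting the resulting $\ell_1$ cone estimate into an $\ell_2$ bound via the local conditioning --- yields $\|\vc{x}^\# - \vc{x}\|_2 \leq C_1 \inf_{s\text{-sparse }\vc{x}'} \|\vc{x}' - \vc{x}\|_1 + C_2 \sqrt{s}\,\widetilde{\varrho}$. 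Substituting $\widetilde{\varrho}$ gives the claimed estimate; note that the $\sqrt{s}$ factor in both terms is the price of the dual-certificate (as opposed to RIP) analysis and is precisely what the bound $\|\vc{h}\|_2 \lesssim \sqrt{s}$ produces.

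For the local conditioning I would invoke the block-diagonal structure \eqref{eqn:matrix_AS_block_diagonal}: as columns from distinct angle classes are uncorrelated, $\opnorm{\widetilde{\vc{A}}_S^* \widetilde{\vc{A}}_S - \ID} = \max_{[\beta]} \opnorm{\widetilde{\vc{A}}_{S_{[\beta]}}^* \widetilde{\vc{A}}_{S_{[\beta]}} - \ID}$, so it suffices to condition each of the $N_R$ diagonal blocks individually. By Definition~\ref{def:balanced} each block carries at most $|S_{[\beta]}| \leq \eta\, s / N_R$ columns, and a concentration estimate for the within-angle-class Gram matrix --- the same random-matrix machinery underlying Theorem~\ref{thm:rip_result}, now applied at the reduced effective sparsity $\eta s / N_R$ --- forces $\opnorm{\widetilde{\vc{A}}_{S_{[\beta]}}^* \widetilde{\vc{A}}_{S_{[\beta]}} - \ID} \leq \tfrac{1}{2}$ as soon as $N_t$ exceeds $\eta s / N_R$ times logarithmic factors. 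A union bound over the $N_R$ angle classes costs one further logarithmic factor, and multiplying through by $N_R$ reproduces the balancedness-weighted condition $N_R N_t \gtrsim \eta s \log^3 (N/\varepsilon)$ of \eqref{eqn:nonuniform_scaling2}.

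The dual certificate I would construct explicitly as $\vc{u} = \widetilde{\vc{A}}^* \widetilde{\vc{A}}_S (\widetilde{\vc{A}}_S^* \widetilde{\vc{A}}_S)^{-1} \sgn(\vc{x}_S)$, i.e.\ $\vc{h} = \widetilde{\vc{A}}_S (\widetilde{\vc{A}}_S^* \widetilde{\vc{A}}_S)^{-1} \sgn(\vc{x}_S)$. By the previous step $(\widetilde{\vc{A}}_S^* \widetilde{\vc{A}}_S)^{-1}$ exists, $\vc{u}_S = \sgn(\vc{x}_S)$ holds by construction, and $\|\vc{h}\|_2^2 = \langle \sgn(\vc{x}_S), (\widetilde{\vc{A}}_S^* \widetilde{\vc{A}}_S)^{-1} \sgn(\vc{x}_S)\rangle \leq 2 s$. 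The crux is the off-support bound. For a fixed $\varTheta \in \GG \setminus S$ the entry $\vc{u}_\varTheta = \langle \widetilde{\vc{A}}_\varTheta, \vc{h}\rangle$ is, conditionally on $\widetilde{\vc{A}}$, a linear form in the independent Steinhaus signs $\sgn(\vc{x}_S)$ and hence concentrates about its zero mean; here the block structure is again decisive, since $\widetilde{\vc{A}}_\varTheta$ interacts only with the support columns in its own angle class, so the form has effective dimension $|S_{[\beta]}| \leq \eta s / N_R$ and its conditional variance is governed by the per-class coherence. A complex Hoeffding-type deviation bound over the signs, followed by a union bound over the at most $N$ indices $\varTheta \in \GG \setminus S$, gives $\max_{\varTheta} |\vc{u}_\varTheta| \leq \tfrac{1}{2}$ with the required probability, and this union bound supplies the remaining $\log(N/\varepsilon)$ factors. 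This off-support estimate is the main obstacle: it demands simultaneous control of the Gaussian-generated matrix (through the conditioning and coherence events, which must themselves be shown to hold with high probability) and of the sign randomness. Essentially all of this machinery is shared with the proof of Theorem~\ref{thm:nonuniform_result}, so once those estimates are in place the present theorem follows by the recovery argument sketched in the first paragraph.
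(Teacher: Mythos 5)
Your proposal is correct and follows essentially the same route as the paper: rescale by $\sqrt{N_T N_R N_t}$, take the canonical exact dual certificate $\widetilde{\vc{A}}_S (\widetilde{\vc{A}}_S^* \widetilde{\vc{A}}_S)^{-1} \sgn (\vc{x}_S)$, control the on-support Gram matrix block-by-block over the angle classes via balancedness, bound the off-support correlations by Hoeffding over the Steinhaus signs conditional on the coherence/conditioning events, and feed everything into a standard dual-certificate recovery guarantee for basis pursuit denoising (the paper invokes \cite[Thm.~4.33]{Foucart2013} rather than re-deriving the primal--dual argument, and reuses the event $\AA_{\delta,u}$ and Lemmas from the proof of Theorem~\ref{thm:nonuniform_result} wholesale). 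The only detail your sketch leaves implicit is the auxiliary bound $\max_{\ell \in S^c} \|\widetilde{\vc{A}}_S^* \widetilde{\vc{A}}_\ell\|_2 \lesssim 1$, which the paper obtains from the column-norm estimate of Lemma~\ref{lem:l2_norm_columns}; this is minor and does not change the architecture.
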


\begin{remark} Like Theorem~\ref{thm:rip_result} concerning the RIP, Theorem~\ref{thm:nonuniform_result} and \ref{thm:nonuniform_result_sparsity_defect} extend to signals $\vc{s}_1 , \vc{s}_2 , \ldots , \vc{s}_{N_T}$
being independent Rademacher and Steinhaus sequences, compare also with Remark~\ref{rem:RademacherRIP}.
\end{remark}

\subsection{The Doppler-free scenario}\label{sec:doppler_free_case}
In \cite{Strohmer2012} also a Doppler-free scenario (for the case of slowly moving or stationary targets) has been analyzed.
We want to point out that this case is also covered by our analysis.
The measurement matrix $\vc{A}^\prime$ corresponding to the Doppler-free case can be obtained from $\vc{A}$ by deleting all columns $\vc{A}_{(\beta , \tau , f)}$ with nonzero Doppler shifts $f$.
Therefore, in view of \eqref{eqn:rip}, our RIP result (see Theorem \ref{thm:rip_result}) holds also automatically true for the Doppler-free case --- where one may even replace the factor $\log (N) = \log (N_T N_R N_t^2 )$ by $\log ( N_T N_R N_t )$.
The proof of our nonuniform recovery results, stated as Theorems \ref{thm:nonuniform_result} and \ref{thm:nonuniform_result_sparsity_defect}, is based on Proposition \ref{prop:opnorm_bound} providing estimates for the singular values of the matrices $\vc{A}_S$ where $S$ stands for an arbitrary (but balanced) support set.
These estimates, of course, hold in particular true when considering only matrices $\vc{A}_S$ with the property that $S$ does not contain any index $(\beta,\tau,f)$ with $f\neq 0$ and is balanced.
The reader might assure himself that all remaining arguments in Section \ref{sec:nonuniform_deterministic} hold equally true for the Doppler-free case and, thus, analog results on nonuniform recovery can be proven.

\subsection*{Outline}
In Section \ref{sec:rip} we prove the RIP result (Theorem \ref{thm:rip_result}) by reformulating the restricted isometry constant as the supremum of a chaos process and then applying a general result on such processes shown in \cite{krmera14,Dirksen2013}.
Section \ref{sec:nonuniform_deterministic} is devoted to the proofs of our nonuniform recovery result for balanced support sets (Theorems \ref{thm:nonuniform_result} and \ref{thm:nonuniform_result_sparsity_defect}).
The proof of Theorem~\ref{thm:nonuniform_result} is based on a general reconstruction result for the LASSO approach taken from \cite{capl09} which we introduce in Section \ref{sec:exact_recovery_via_lasso}.
The Sections~\ref{sec:conditions_lemmas}--\ref{sec:proof_nonuniform_result_b} are devoted to the verification of the assumption needed for applying the reconstruction result.
To this end a probabilistic analysis of the extremal singular values of the matrices $\widetilde{\vc{A}}_S$, which we state as Proposition \ref{prop:opnorm_bound}, is essential.
Since the proof of Proposition \ref{prop:opnorm_bound} is rather extensive it appears in an extra section, namely Section \ref{apdx:proof_prop}.
In Section \ref{sec:sparsity_defect_result} we prove Theorem \ref{thm:nonuniform_result_sparsity_defect} by applying a more general recovery result for the basis pursuit denoising approach, taken from \cite{Foucart2013}.
Also in this case the verification of the assumptions depends crucially on the analysis of the singular values of $\widetilde{\vc{A}}_S$ provided by Proposition \ref{prop:opnorm_bound}.
Finally, Section \ref{sec:numerics} is devoted to some numerical experiments supporting our theoretical results on the influence of the balancedness parameter $\eta$ (of the considered support sets) on the recovery performance.

In the Appendix we list basic calculations and technical proofs which for the sake of a clearer presentation do not appear in the main part.
Furthermore, some tools from probability theory and a short introduction of standard complex Gaussian random variables and vectors can be found here.

\subsection*{Notation}
For a complex number $z$ we write $\overline{z}$ for the complex conjugate and, for nonzero $z$, we set $\text{sgn} (z) := z / |z|$.
Given a vector $\vc{x} = (x_1 , x_2 , \ldots )^T$ with complex entries, we write $\| \vc{x} \|_p = ( \sum_k |x_k|^p )^{1/p}$, $1\leq p < \infty$, to denote the usual $\ell_p$-norm of $\vc{x}$, and \mbox{$\| \vc{x} \|_\infty = \max_{k} | x_k |$}.
Moreover, $\text{supp} ( \vc{x} ) = \{ k \midcol x_k \neq 0 \}$ denotes the support set of $\vc{x}$.
Occasionally we also write $[\vc{x}]_k$ to denote the $k$th entry of the vector $\vc{x}$.
We write $\ID$ for the identity matrix (with appropriate dimensions becoming clear from the context).
Given a complex valued matrix $\vc{A}$, we write $[\vc{A}]_{k,\ell}$ to denote the $k$th entry of the $\ell$th row of $\vc{A}$.
The Hermitian transpose will be denoted by $\vc{A}^*$.
The spectral norm is denoted by $\opnorm{\vc{A}} = \max_{\|\vc{x}\|_2 =1} \|\vc{A}\vc{x}\|_2$. The Frobenius norm of a matrix $\vc{A}$ is given by $\|\vc{A}\|_{F} = \sqrt{\sum_{k,\ell} [\vc{A}]_{k,\ell}^2}$.
Given the sequence of singular values $\sigma (\vc{A}) = ( \sigma_1 (\vc{A}) , \sigma_2 (\vc{A}) , \ldots )$ of a matrix $\vc{A}$, the Schatten $p$-norm, for $1\leq p\leq \infty$, is given by $\|\vc{A}\|_{S_p} := \| \sigma( \vc{A} ) \|_p$. Note the special
cases $\|\vc{A}\|_{S_2} = \|\vc{A}\|_F$ and $\|\vc{A}\|_{S_\infty} = \opnorm{\vc{A}}$.
For a set $\AA$ of matrices, the parameters $d_{2\to 2} ( \AA ) = \sup_{A \in \AA} \opnorm{\vc{A}}$, $d_F ( \AA ) = \sup_{A \in \AA} \|\vc{A}\|_{F}$ denote the diameters of $\AA$ with respect to the spectral norm and the Frobenius norm, respectively.
Given a number $L\in \N$, we write $[L]$ to denote the set $\{1, 2, \ldots, L\}$.
We will also make use of the symbol $\delta_{k,\ell}$ or, more generally, $\delta^{\sim_{L}}_{k,\ell}$, for $k,\ell \in \Z$ and $L\in \N$.
Here $\delta_{k,\ell}$ is the usual Kronecker delta which is equal to $1$ if $k=\ell$ and $0$, otherwise.
The symbol $\delta^{\sim_{L}}_{k,\ell}$, on the other hand, represents equality up to multiples of $L$, i.e.,
\begin{equation}\label{eqn:general_kronecker}
 \delta^{\sim_{L}}_{k,\ell} =
 \begin{cases}
  1  &  \text{if $k - \ell \in L \Z$,}  \\
  0  &  \text{otherwise.}
 \end{cases}
\end{equation}
We write $A \lesssim B$ if there is a constant $c_1$ with $A \leq c_1 B$, where $A,B$ may depend on further parameters.

\goodbreak
\section{Uniform recovery via the RIP}\label{sec:rip}
In this section we prove Theorem \ref{thm:rip_result} concerning the RIP of the MIMO radar measurement matrix.
In the following, $\vc{s}$ denotes the vector containing all signals $\vc{s}_i$, i.e.,
\[
 \vc{s} = ( \vc{s}_1^T , \vc{s}_2^T , \ldots , \vc{s}_{N_T}^T )^T \in \C^{N_T N_t} .
\]

\subsection{Reformulation as a chaos process}

Our proof is based on \cite{krmera14} where bounds for certain chaos processes have been developed.
The key is to reformulate the radar measurements $\vc{x} \mapsto \widetilde{\vc{A}} \vc{x}$ as a mapping, $\vc{x} \mapsto \vxTilde \vc{s}$, where the matrix $\vxTilde$, depending on $\vc{x}$, is given by
\begin{equation}\label{eqn:vx}
\vxTilde := \frac{1}{\sqrt{N_T N_R N_t}} \sum_{\varTheta \in \GG} x_{\varTheta} \x{\varTheta} ,
\end{equation}
and where $\x{\varTheta}$ is a $N_R \times N_T$ block matrix consisting of the $N_t \times N_t$ blocks $\x{\varTheta}^{[i,j]}$ with
\begin{equation}\label{eqn:XX}
\x{\varTheta}^{[i,j]} := \ee{d_R \beta \Delta_\beta (i-1)} \ee{d_T \beta \Delta_\beta (j-1)} \vc{M}_{f} \vc{T}_{\tau} .
\end{equation}
Due to the fact that the signal vector $\vc{s}$ is isotropic, we have for each $\vc{x}$ that $\E \| \vxTilde \vc{s} \|_2^2 = \| \vxTilde \|_{{F}}^2$, and, due to the fact that the matrices $\frac{1}{\sqrt{N_T N_R N_t}} \x{\varTheta}$, $\varTheta \in \GG$, are orthogonal (see \mbox {Appendix \ref{apdx:section2}}), $\| \vxTilde \|_{{F}}^2 = \|\vc{x}\|_2^2$.
This enables us to express the restricted isometry constant $\delta_s$ as supremum of a chaos process of the form
\begin{equation}\label{eqn:chaos_reform}
 \delta_s = \sup_{\genfrac{}{}{0pt}{}{\text{$s$-sparse $\vc{x}$}}{\| \vc{x} \|_2 \leq 1}} \big| \|\widetilde{\vc{A}} \vc{x}\|_2^2 - \|\vc{x}\|_2^2 \big| = \sup_{\widetilde{\vc{V}} \in \AA} \big| \| \widetilde{\vc{V}} \vc{s} \|_2^2 - \E \|\widetilde{\vc{V}} \vc{s}\|_2^2 \big| ,
\end{equation}
where $\AA$ denotes the set of matrices given by
\begin{equation}\label{eqn:AA_set}
 \AA := \{ \vxTilde \midcol \text{$\vc{x}$ is $s$-sparse, $\|\vc{x}\|_2 \leq 1$} \} .
\end{equation}
Dudley's entropy integral
\begin{equation}\label{eqn:dudley}
\DD := \int_0^{d_{2\to 2} ( \AA )} \sqrt{\log \NN( \AA , \opnorm{\cdot} , u )} \, du
\end{equation}
for the set $\AA$, where $\NN( \AA , \opnorm{\cdot} , u)$ denotes the covering numbers, i.e., the minimal number of 
$\opnorm{\cdot}$-balls of radius $u$ required to cover $\AA$, provides sufficient information about the complexity of the set $\AA$ for us.
Estimates for the parameter $\DD$ will in fact lead to good bounds for the restricted isometry constant in \eqref{eqn:chaos_reform}.
The following theorem --- which is a direct implication of \cite[Thm.~3.1]{krmera14} and a slight improvement due to Dirksen \cite{Dirksen2013} --- is our crucial tool.
Recall 
that a random vector $\vc{X}$ is $L$-subgaussian if it is isotropic and $\P ( |\langle \vc{X} , \vc{\xi} \rangle | \geq t) \leq 2 \exp (-t^2 / 2 L^2)$ for every $\vc{\xi}$ with $\|\vc{\xi}\|_2 = 1$ and any $t >0$.

\begin{theorem}\label{thm:chaos_tool}
Let $\AA$ be a symmetric set of matrices, $-\AA = \AA$, and let $\vc{\xi}$ be a random vector whose entries are independent, mean-zero, variance $1$, and $L$-subgaussian random variables. Set
\[
 E = \DD \big( \DD + d_F (\AA) \big) ,
\]
\[
 V = d_{2\to 2}(\AA) d_F (\AA) ,
\]
\[
 U = d_{2\to 2}^2 (\AA).
\]
Then, for $t > 0$,
\[
\P \big( {\sup_{\vc{A} \in \AA} \big| \| \vc{A \xi} \|^2_2 -\E \|\vc{A\xi}\|^2_2 \big|} \geq c_1 E + t \big)
  \leq 2 \exp (-c_2 \min \{ t^2 / V^2 , t / U \}),
\]
where the constants $c_1, c_2$ depend only on $L$.
\end{theorem}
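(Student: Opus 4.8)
The plan is to read the quantity $\sup_{\vc{A}\in\AA}\big|\,\|\vc{A\xi}\|_2^2-\E\|\vc{A\xi}\|_2^2\,\big|$ as the supremum of a second-order chaos process indexed by $\AA$ and to control it by generic chaining. Setting $X_{\vc{A}}:=\|\vc{A\xi}\|_2^2-\E\|\vc{A\xi}\|_2^2=\vc{\xi}^*(\vc{A}^*\vc{A})\vc{\xi}-\E[\vc{\xi}^*(\vc{A}^*\vc{A})\vc{\xi}]$, each difference $X_{\vc{A}}-X_{\vc{B}}$ is again a centered quadratic form in $\vc{\xi}$, with matrix $\vc{A}^*\vc{A}-\vc{B}^*\vc{B}$. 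The first step is therefore to invoke the Hanson--Wright inequality for $L$-subgaussian vectors, which yields the mixed tail
\[
\P\big(|X_{\vc{A}}-X_{\vc{B}}|\geq t\big)\leq 2\exp\Big(-c\min\Big\{\tfrac{t^2}{\|\vc{A}^*\vc{A}-\vc{B}^*\vc{B}\|_F^2},\ \tfrac{t}{\opnorm{\vc{A}^*\vc{A}-\vc{B}^*\vc{B}}}\Big\}\Big),
\]
with $c$ depending only on $L$. This exhibits the increments of the process as being governed by two metrics simultaneously: a \emph{subgaussian} one measured in Frobenius norm and a \emph{subexponential} one measured in operator norm.

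The second step is a purely algebraic linearization. From $\vc{A}^*\vc{A}-\vc{B}^*\vc{B}=\vc{A}^*(\vc{A}-\vc{B})+(\vc{A}-\vc{B})^*\vc{B}$ together with submultiplicativity of the norms I would derive
\[
\opnorm{\vc{A}^*\vc{A}-\vc{B}^*\vc{B}}\lesssim d_{2\to 2}(\AA)\,\opnorm{\vc{A}-\vc{B}},\qquad \|\vc{A}^*\vc{A}-\vc{B}^*\vc{B}\|_F\lesssim d_{2\to 2}(\AA)\,\|\vc{A}-\vc{B}\|_F+d_F(\AA)\,\opnorm{\vc{A}-\vc{B}}.
\]
Thus the two governing metrics are controlled, up to the diameters $d_{2\to 2}(\AA)$ and $d_F(\AA)$, by the operator- and Frobenius-norm distances on $\AA$ itself. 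Taking suprema over $\AA$ in these two displays also identifies the diameters of $\AA$ in the two governing metrics as $\Delta_1\lesssim d_{2\to 2}(\AA)^2=U$ and $\Delta_2\lesssim d_{2\to 2}(\AA)\,d_F(\AA)=V$, which is exactly where the tail parameters $U$ and $V$ of the statement will come from.

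The heart of the proof is then a generic chaining argument for processes whose increments satisfy a mixed subgaussian/subexponential tail of the above shape; the sharp deviation (not merely expectation) version I would quote is Dirksen's refinement of Talagrand's machinery. It gives a bound of the form $\E\sup_{\vc{A}\in\AA}|X_{\vc{A}}|\lesssim \gamma_2(\AA,d_2)+\gamma_1(\AA,d_1)$, where $d_2,d_1$ are the two metrics above, together with a tail $\P(\sup_{\vc{A}\in\AA}|X_{\vc{A}}|\geq c_1(\gamma_2(\AA,d_2)+\gamma_1(\AA,d_1))+t)\leq 2\exp(-c_2\min\{t^2/\Delta_2^2,\,t/\Delta_1\})$. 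Substituting the diameter estimates $\Delta_2\lesssim V$, $\Delta_1\lesssim U$ from the previous step yields the exponent $\min\{t^2/V^2,\,t/U\}$ of the theorem.

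It remains to convert the chaining functionals into the advertised quantity $E=\DD(\DD+d_F(\AA))$. Using subadditivity and monotonicity of the $\gamma$-functionals with the metric bounds of step two, one reduces $\gamma_2(\AA,d_2)+\gamma_1(\AA,d_1)$ to $d_{2\to2}(\AA)\,\gamma_2(\AA,\|\cdot\|_F)+d_F(\AA)\,\gamma_2(\AA,\opnorm{\cdot})+d_{2\to2}(\AA)\,\gamma_1(\AA,\opnorm{\cdot})$; the middle term is $\lesssim d_F(\AA)\,\DD$ directly by Dudley's inequality $\gamma_2(\AA,\opnorm{\cdot})\lesssim\DD$, producing the cross term $\DD\,d_F(\AA)$. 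I expect the \textbf{main obstacle} to be the remaining reduction, namely bounding $d_{2\to2}(\AA)\big(\gamma_2(\AA,\|\cdot\|_F)+\gamma_1(\AA,\opnorm{\cdot})\big)\lesssim\DD^2$: this is where the degree-two (product) structure of the chaos must be exploited, since it is precisely the squaring induced by $\vc{A}\mapsto\vc{A}^*\vc{A}$ that turns a single entropy integral into $\DD^2$. Concretely I would not pass through a black-box split into $\gamma_2$ and $\gamma_1$ but rather run a single admissible-sequence chaining in operator norm, estimating each link $\pi_n(\vc{A})^*\pi_n(\vc{A})-\pi_{n-1}(\vc{A})^*\pi_{n-1}(\vc{A})$ by Hanson--Wright using both its Frobenius and operator norms at once, and summing the two resulting weighted series $\sum_n 2^{n/2}\|\cdot\|_F$ and $\sum_n 2^n\opnorm{\cdot}$ against the operator-norm link sizes; the careful bookkeeping that collapses these into $\DD^2$ (with the diameters absorbing the Frobenius contributions) is the delicate technical core, and the place where Dirksen's improvement over the original estimate enters.
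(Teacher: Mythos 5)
First, a point of orientation: the paper does not prove this theorem at all --- it is quoted from \cite{krmera14} (Theorem~3.1 there, stated with Talagrand's $\gamma_2$-functional in place of Dudley's integral $\DD$) together with Dirksen's improvement \cite{Dirksen2013}, whose role here is only to replace $V = d_{2\to 2}(\AA)(\DD + d_F(\AA))$ by $V = d_{2\to 2}(\AA)\,d_F(\AA)$. So you are attempting a from-scratch proof of a result the authors merely cite; that is legitimate, but your sketch has a genuine gap at exactly the point you flag as the ``delicate technical core,'' and your proposed way around it does not work.

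The problem is the subexponential half of the Hanson--Wright increment bound. Whether you invoke a black-box mixed-tail chaining theorem (giving $\gamma_2(\AA,d_2)+\gamma_1(\AA,d_1)$) or run the chain by hand and sum $\sum_n 2^n\opnorm{\pi_n(\vc{A})^*\pi_n(\vc{A})-\pi_{n-1}(\vc{A})^*\pi_{n-1}(\vc{A})} \lesssim d_{2\to 2}(\AA)\sum_n 2^n\opnorm{\pi_n(\vc{A})-\pi_{n-1}(\vc{A})}$, you are left with a $\gamma_1$-type quantity in the operator norm, and the bound you need, $d_{2\to 2}(\AA)\,\gamma_1(\AA,\opnorm{\cdot}) \lesssim \DD^2$, is false in general. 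Cauchy--Schwarz on the entropy integrals goes the wrong way: $\DD^2 \leq d_{2\to 2}(\AA)\int_0^{d_{2\to 2}(\AA)}\log \NN(\AA,\opnorm{\cdot},u)\,du$, so the $\gamma_1$ contribution \emph{dominates} $\DD^2/d_{2\to 2}(\AA)$ rather than being dominated by it, and there are sets with $\gamma_2(\AA,\opnorm{\cdot})<\infty$ (indeed $\DD<\infty$) but $\gamma_1(\AA,\opnorm{\cdot})=\infty$ (take covering numbers with $\log \NN(u)\sim 1/u$). No amount of bookkeeping along an operator-norm admissible sequence will collapse this series into $\DD^2$. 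The proof in \cite{krmera14} avoids applying Hanson--Wright to the chain links altogether: it exploits the degree-two factorization $\|\vc{A}\vc{\xi}\|_2^2-\|\vc{B}\vc{\xi}\|_2^2=\langle(\vc{A}-\vc{B})\vc{\xi},(\vc{A}+\vc{B})\vc{\xi}\rangle$ to reduce the chaos to products of two degree-one subgaussian chain sums, each controlled by $\gamma_2(\AA,\opnorm{\cdot})+d_F(\AA)$ via the concentration of $\vc{\xi}\mapsto\|\vc{M}\vc{\xi}\|_2$ (Lipschitz with constant $\opnorm{\vc{M}}$); the $\DD^2$ term then arises as the square of a single $\gamma_2$-sum, not as a $\gamma_1$-sum. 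Hanson--Wright enters only once, at the root $\pi_0(\vc{A})$ of the chain, which is where the tail parameters $V$ and $U$ come from --- that part of your sketch, including the algebraic estimates $\opnorm{\vc{A}^*\vc{A}-\vc{B}^*\vc{B}}\lesssim U$ and $\|\vc{A}^*\vc{A}-\vc{B}^*\vc{B}\|_F\lesssim V$, is correct. You should either reproduce the product-structure argument of \cite{krmera14,Dirksen2013} or simply cite it, as the paper does.
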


Note that the original result  \cite[Thm. 3.1]{krmera14} is formulated in terms of Talagrand's chaining functional $\gamma_2$, whereas Theorem \ref{thm:chaos_tool} from above is formulated in terms of Dudley's entropy integral $\DD$, a common upper estimate for the $\gamma_2$-functional, which is sufficient for our purposes and more convenient.
For more details on $\gamma_2$-functionals we refer to \cite{Talagrand2005,ta14-1}.
In the original version of Theorem~\ref{thm:chaos_tool} in \cite{krmera14} the quantity $V$ was defined as 
$V=d_{2\to 2}(\AA) (\DD + d_F (\AA))$.
Dirksen \cite{Dirksen2013} achieved a slight improvement by showing that the summand $\DD$ can be omitted.

\subsection{Proof of Theorem \ref{thm:rip_result}}
In order to apply the above theorem we have to bound $d_{F}(\AA)$, $d_{2 \to 2}(\AA)$, and $\DD$.
Estimates for the quantities $d_{2 \to 2}(\AA)$ and $d_{F}(\AA)$ can be derived in a straightforward manner.
We have already seen that $\| \vxTilde \|_F = \| \vc{x} \|_2$ and, thus, $d_{{F}}(\AA) = 1$.
In order to provide a bound for $d_{2 \to 2}(\AA)$, we estimate the norms $\opnorm{\x{\varTheta}}$ of the matrices from the definition in \eqref{eqn:vx}.
The $(j^\prime ,j)$th $N_t \times N_t$ block of the product $\x{\varTheta}^* \x{\varTheta}$ can be calculated (cf. Appendix \ref{apdx:section2}, equation \eqref{eqn:block_product_XX}) as
\[
 [ \x{\varTheta}^* \x{\varTheta} ]^{[j^\prime , j]} = N_R \nee{d_T \beta \Delta_\beta (j^\prime-1)} \ee{d_T \beta \Delta_\beta (j - 1)} \ID .
\]
Thus, by defining a vector $\vc{v} \in \C^{N_T}$ entrywise as
\[
 [\vc{v}]_{j} = \ee{d_T \beta \Delta_\beta (j - 1)} , \quad j \in [N_T] ,
\]
the corresponding operator norm can be calculated as
\[
 \opnorm{\x{\varTheta}}^2 = \opnorm{\x{\varTheta}^* \x{\varTheta}} = N_R \opnorm{ \vc{v} \vc{v}^* } = N_R \|\vc{v}\|_2^2 = N_R N_T .
\]
Since for each $s$-sparse vector $\vc{x}$ we have $\| \vc{x} \|_1 \leq \sqrt{s} \|\vc{x}\|_2$, we obtain
\begin{equation}\label{eqn:opnorm_by_euclid}
 \opnorm{ \vxTilde} \leq \frac{1}{\sqrt{N_T N_R N_t}} \sum_{\varTheta \in \GG} |x_{\varTheta}| \, \opnorm{\x{\varTheta}} \leq \frac{1}{\sqrt{N_t}} \| \vc{x} \|_1 \leq \sqrt{\frac{s}{N_t}} \|\vc{x}\|_2 .
\end{equation}
For each $\vxTilde \in \AA$ we have, by definition, that $\|\vc{x}\|_2 \leq 1$ which means that 
\begin{equation}\label{eqn:d2to2_bound}
d_{{F}} (\AA) = 1 , \qquad \text{and} \qquad d_{2 \to 2}(\AA) \leq \sqrt{\frac{s}{N_t}} .
\end{equation}

In order to estimate the entropy integral $\DD$ in \eqref{eqn:dudley}, it is in particular necessary to provide good bounds for the appearing covering numbers $\NN (\AA , \opnorm{\cdot} , u)$.
In view of \eqref{eqn:d2to2_bound} it is sufficient to consider $0 < u \leq \sqrt{s/N_t}$.
The proof of Lemma \ref{lem:cov_number_bounds} can be found in Appendix \ref{apdx:proof_cov_numbers}.

\begin{lemma}\label{lem:cov_number_bounds}
For $0 < u \leq \sqrt{s/N_t}$, it holds
\begin{equation}\label{eqn:covering_numbers_bound}
 \log \NN(\AA, \opnorm{\cdot}, u) \lesssim s \min \left\{ \log \bigg( \frac{N}{u^2 N_t} \bigg) , \frac{1}{u^2 N_t} \log^2 (N) \right\}.
\end{equation}
\end{lemma}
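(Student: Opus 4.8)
The bound inside the minimum in \eqref{eqn:covering_numbers_bound} is really two separate estimates, proved by different techniques and combined afterwards: a crude dimension-counting estimate, which is efficient for small $u$, and Maurey's empirical method combined with a noncommutative Khintchine inequality, which is efficient when $u$ is comparable to the diameter $d_{2\to 2}(\AA)\leq\sqrt{s/N_t}$ from \eqref{eqn:d2to2_bound}. I would prove each bound and then observe that $\log\NN(\AA,\opnorm{\cdot},u)$ is at most their minimum.

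\textbf{First bound (dimension counting).} Here I would exploit sparsity directly. Each $\vxTilde\in\AA$ is, as a combination of the $\x{\varTheta}$, supported on one of the $\binom{N}{s}\leq (eN/s)^s$ index sets $S\subseteq\GG$ with $|S|=s$. For fixed $S$ the set $\AA_S:=\{\vxTilde \midcol \supp(\vc{x})\subseteq S,\ \|\vc{x}\|_2\leq 1\}$ is the image of the complex unit ball in $\C^s$ --- a set of real dimension $2s$ --- under the linear map $\vc{x}\mapsto\vxTilde$, which by \eqref{eqn:opnorm_by_euclid} has operator-norm Lipschitz constant at most $\sqrt{s/N_t}$. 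Hence $\AA_S$ lies in an $\opnorm{\cdot}$-ball of radius $\sqrt{s/N_t}$ inside a real $2s$-dimensional space, and the standard volumetric estimate gives $\NN(\AA_S,\opnorm{\cdot},u)\leq(1+2\sqrt{s/N_t}/u)^{2s}\leq(3\sqrt{s/N_t}/u)^{2s}$ for $u\leq\sqrt{s/N_t}$. Taking the union over all $S$ and the logarithm, the two contributions combine as $s\log(eN/s)+2s\log(3\sqrt{s/N_t}/u)=s\log(9eN/(u^2 N_t))\lesssim s\log(N/(u^2 N_t))$, where crucially the factor $s$ inside the first logarithm cancels against the $s$ produced by the volume bound; the range $u\leq\sqrt{s/N_t}$ ensures $N/(u^2 N_t)\geq N/s\geq 1$.

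\textbf{Second bound (empirical method).} Since every relevant $\vc{x}$ is $s$-sparse with $\|\vc{x}\|_2\leq 1$ we have $\|\vc{x}\|_1\leq\sqrt{s}$, so with $\vc{Z}_\varTheta:=\x{\varTheta}/\sqrt{N_T N_R N_t}$ (for which $\opnorm{\vc{Z}_\varTheta}=1/\sqrt{N_t}$ by the computation preceding \eqref{eqn:opnorm_by_euclid}) one has $\AA\subseteq\sqrt{s}\,\conv\{\pm\vc{Z}_\varTheta,\pm\imath\vc{Z}_\varTheta \midcol \varTheta\in\GG\}$, after splitting each coefficient $x_\varTheta$ into its real and imaginary parts. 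It then suffices to cover this symmetric convex hull at scale $u/\sqrt{s}$. Given a point $\sum_a\lambda_a\vc{U}_a$ of the hull (with $\vc{U}_a$ one of the $4N$ atoms), I would realize it as the expectation of a random matrix $\vc{W}$ taking a value in $\{\pm\vc{Z}_\varTheta,\pm\imath\vc{Z}_\varTheta\}$, average $L$ independent copies, and bound $\E\opnorm{L^{-1}\sum_k(\vc{W}_k-\E\vc{W}_k)}$ by symmetrization followed by the noncommutative Khintchine inequality. Because $\opnorm{\vc{W}_k}\lesssim 1/\sqrt{N_t}$ one gets $\opnorm{\sum_k\vc{W}_k\vc{W}_k^*}$ and $\opnorm{\sum_k\vc{W}_k^*\vc{W}_k}$ of order $L/N_t$, while the $\x{\varTheta}$ have format $N_R N_t\times N_T N_t$, so the Khintchine dimension factor is $\sqrt{\log(N_R N_t+N_T N_t)}\lesssim\sqrt{\log N}$ (using $N_R N_t+N_T N_t\leq 2N$). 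This yields a deviation $\lesssim\sqrt{\log N/(L N_t)}$; choosing $L\simeq s\,N_t^{-1}u^{-2}\log N$ makes it at most $u/\sqrt{s}$. Each hull point thus has an $L$-atom approximant drawn from at most $(4N+1)^L$ averages, and $\log\NN(\AA,\opnorm{\cdot},u)\lesssim L\log N\lesssim s\,u^{-2}N_t^{-1}\log^2 N$, the second term of \eqref{eqn:covering_numbers_bound}.

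\textbf{Main obstacle.} The delicate point is the empirical-method step. First, the counting of admissible $L$-atom averages must not introduce spurious logarithmic losses: this is exactly why I would split each complex coefficient into real and imaginary parts and work with the \emph{finite} atom set $\{\pm\vc{Z}_\varTheta,\pm\imath\vc{Z}_\varTheta\}$, so that the number of averages is simply $(4N+1)^L$ and no net of continuous phases is needed (such a net would contribute an uncontrolled $\log(s/(u^2 N_t))$ factor for small $u$). Second, one must apply the noncommutative Khintchine inequality in the correct rectangular form and verify that its dimensional constant is logarithmic in $N=N_T N_R N_t^2$, which follows from $N_R N_t+N_T N_t\leq 2N$. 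The volumetric bound, by contrast, is routine once the Lipschitz estimate \eqref{eqn:opnorm_by_euclid} is in hand.
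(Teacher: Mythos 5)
Your proposal is correct and follows essentially the same route as the paper: a volumetric estimate over the $\binom{N}{s}$ supports using the Lipschitz bound \eqref{eqn:opnorm_by_euclid} for the first term, and Maurey's empirical method driven by the noncommutative Khintchine inequality (with atom norm $1/\sqrt{N_t}$ and dimension factor $\sqrt{\log N}$) for the second --- the only difference being that you unfold the proof of Maurey's lemma rather than invoking it as a black box, as the paper does via Lemma~\ref{lem:maurey}. The single nitpick is that splitting complex coefficients into real and imaginary parts inflates the $\ell_1$-mass by $\sqrt{2}$, so the convex hull should be scaled by $\sqrt{2s}$ rather than $\sqrt{s}$; this is of course absorbed into the implicit constant.
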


Now we are able to estimate Dudley's integral as
\begin{equation}\label{eqn:asymptotic_DD}
 \DD \lesssim \sqrt{\frac{s}{N_t}} \log (eN) \log (s).
\end{equation}
To this end, we split the integral into two pieces,
\begin{align*}
 \DD
 &= \int_0^{d_{2\to 2} (\AA)} \sqrt{ \log \NN (\AA , \opnorm{\cdot} , u)} ~ du  \\
 &\leq \int_0^{\frac{1}{\sqrt{N_t}}} \sqrt{ \log \NN (\AA , \opnorm{\cdot} , u) } ~ du
  + \int_{\frac{1}{\sqrt{N_t}}}^{\frac{\sqrt{s}}{\sqrt{N_t}}} \sqrt{ \log \NN (\AA , \opnorm{\cdot} , u) } ~ du =: \II_1 + \II_2 .
\end{align*}
We estimate the first integral by using the first bound of Lemma \ref{lem:cov_number_bounds} to obtain
\begin{equation*}
 \II_1 \lesssim \int_0^{\frac{1}{\sqrt{N_t}}} \sqrt{s \log \bigg( \frac{N}{u^2 N_t} \bigg)} ~ du \leq \sqrt{s} \sqrt{ \int_0^{\frac{1}{\sqrt{N_t}}} 1 ~ du \int_0^{\frac{1}{\sqrt{N_t}}} \log \bigg( \frac{N}{u^2 N_t} \bigg) ~ du } ,
\end{equation*}
where, in the second step, we used the Cauchy-Schwarz inequality.
The latter integral can easily be calculated as
\begin{align*}
 \int_0^{\frac{1}{\sqrt{N_t}}} \log \bigg( \frac{N}{u^2 N_t} \bigg) ~ du
  &= \int_0^{\frac{1}{\sqrt{N_t}}} \log (N/N_t) - \log (u^2) ~ du  
  = \frac{1}{\sqrt{N_t}} \log ( N/N_t ) - \bigg[ u \big(\log (u^2) - 2\big) \bigg]_0^{1 / \sqrt{N_t}}  \\
  &= \frac{1}{\sqrt{N_t}} \big( \log (N) + 2 \big) ,
\end{align*}
which, together with the estimate from above, yields
\[
 \II_1 \lesssim \sqrt{\frac{s}{N_t}} \sqrt{\log(N)+2} .
\]
For the second integral $\II_2$ we use the second bound provided by Lemma \ref{lem:cov_number_bounds} and obtain
\begin{equation*}
 \II_2 \lesssim \int_{\frac{1}{\sqrt{N_t}}}^{\frac{\sqrt{s}}{\sqrt{N_t}}} \sqrt{ \frac{s}{u^2 N_t} \log^2 (N)} ~ du = \sqrt{\frac{s}{N_t}} \log (N) \int_{\frac{1}{\sqrt{N_t}}}^{\frac{\sqrt{s}}{\sqrt{N_t}}} u^{-1} ~ du = \sqrt{\frac{s}{N_t}} \log (N)  \log (\sqrt{s}) .
\end{equation*}
A combination of the estimates for $\II_1$ and $\II_2$ from above yields \eqref{eqn:asymptotic_DD}.

Now, having a bound for the entropy integral $\DD$, we are well equipped to apply Theorem \ref{thm:chaos_tool} and, thereby, prove Theorem \ref{thm:rip_result}.
To this end let $c > 0$ denote a constant to be chosen sufficiently large and
\begin{equation}\label{eqn:condition_on_Nt}
N_t \geq c \delta^{-2} s \max \left\{ \log^2 (eN) \log^2 (s) , \log( 1 / \varepsilon ) \right\} .
\end{equation}
Due to \eqref{eqn:asymptotic_DD} it follows that, for some constant $C > 0$,
\[
\DD \leq C \sqrt{\frac{s}{N_t}} \log (eN) \log (s) .
\]
These two bounds imply that $\DD \leq {C\delta}/{\sqrt{c}}$ and, therefore, since $d_{{F}} (\AA) = 1$, we obtain for the quantity $E$ from Theorem \ref{thm:chaos_tool} (if $c$ is chosen sufficiently large),
\[
E = \DD ( \DD + 1 ) \leq (C \delta )^2 / c + C \delta / \sqrt{c} < \frac{\delta}{2 c_1} ,
\]
where $c_1$ is the constant from Theorem \ref{thm:chaos_tool}.
A direct application of Theorem \ref{thm:chaos_tool}, again using $d_{{F}} (\AA) = 1$, yields
\[
 \P ( \delta_s \geq \delta ) \leq \P ( \delta_s \geq c_1 E + {\delta}/{2}) \leq 2 \exp \left( - \frac{c_2 \delta^2}{4 d_{2 \to 2}^2 (\AA)} \right) \leq 2 \exp \left( - \frac{c_2 \delta^2 N_t}{4s} \right),
\]
where for the last inequality we used the estimate for $d_{2 \to 2} (\AA)$ from \eqref{eqn:d2to2_bound}.
Due to \eqref{eqn:condition_on_Nt}, the latter term is bounded by $\varepsilon$, again provided that $c$ is sufficiently large, which finalizes the proof of Theorem \ref{thm:rip_result}.\qed

\section{Nonuniform recovery results}\label{sec:nonuniform_deterministic}
The proofs of our nonuniform recovery results (see Theorems \ref{thm:nonuniform_result} and \ref{thm:nonuniform_result_sparsity_defect}) rely on the fact that --- for sufficiently well balanced support sets $S\subset \GG$ --- the singular values of the matrix $\widetilde{\vc{A}}_S$ are close to one with high probability.
Along with further properties, this enables us to apply general recovery results for both the LASSO and basis pursuit denoising which we take from \cite{capl09} and \cite{Foucart2013}, respectively.
In the following analysis we consider a scaled version of the considered measurement process, namely
\[
 \widetilde{\vc{y}} = \widetilde{\vc{A}} \vc{x} + \widetilde{\vc{n}}, \qquad \widetilde{\vc{y}} = \frac{1}{\sqrt{N_T N_R N_t}} \vc{y}, \qquad \widetilde{\vc{A}} = \frac{1}{\sqrt{N_T N_R N_t}} \vc{A}, \qquad \widetilde{\vc{n}} = \frac{1}{\sqrt{N_T N_R N_t}} \vc{n} .
\]
In the situation of Theorem~\ref{thm:nonuniform_result},  $\widetilde{\vc{n}}$ is now a mean-zero complex Gaussian random vector with variance $\tilde{\sigma}^2 = \sigma^2 / {N_T N_R N_t}$.

\subsection{Exact support recovery via the (debiased) LASSO}\label{sec:exact_recovery_via_lasso}
In order to prove Theorem~\ref{thm:nonuniform_result} we use a general recovery result from \cite{capl09},
which provides the sufficient conditions $(C_1)$--$(C_5)$ below on fixed instances of the measurement matrix $\widetilde{\vc{A}}$, the target scene vector $\vc{x}$ and the noise vector $\widetilde{\vc{n}}$ that imply perfect reconstruction of the support set $\supp (\vc{x})$ from the measurements $\widetilde{\vc{y}}$ via the LASSO
\begin{equation}\label{eqn:scaled_lasso}
 \min_{\vc{x}} \frac{1}{2} \| \widetilde{\vc{A}} \vc{x} - \widetilde{\vc{y}} \|_2^2 + 2 \mu \| \vc{x} \|_1 , \qquad \mu = \tilde{\sigma} \sqrt{2 \log (N)}.
\end{equation}
Later (see Sections \ref{sec:proof_nonuniform_result_a}, \ref{sec:proof_nonuniform_result_b} below) we will show that these conditions are fulfilled with high probability.
In the following we write $\Pi_S$ to denote the projection onto the linear space spanned by the columns of $\widetilde{\vc{A}}_S$.

\goodbreak
\begin{enumerate}[(i)]
\item[$(C_1)$] The matrix $\widetilde{\vc{A}}_S^* \widetilde{\vc{A}}_S$ is invertible and obeys
      $\opnorm{(\widetilde{\vc{A}}_S^* \widetilde{\vc{A}}_S)^{-1}} \leq 2$,
\item[$(C_2)$] $\| \widetilde{\vc{A}}^*_{S^c} \widetilde{\vc{A}}_S (\widetilde{\vc{A}}_S^* \widetilde{\vc{A}}_S)^{-1} \sgn (\vc{x}_S)\|_\infty < 1/4$,
\item[$(C_3)$] $\|(\widetilde{\vc{A}}_S^* \widetilde{\vc{A}}_S)^{-1} \widetilde{\vc{A}}_S^* \widetilde{\vc{n}}\|_\infty \leq 2 \mu$,
\item[$(C_4)$] $\| \widetilde{\vc{A}}^*_{S^c} (\ID - \Pi_S) \widetilde{\vc{n}} \|_\infty \leq \sqrt{2} \mu$,
\item[$(C_5)$] $\| (\widetilde{\vc{A}}_S^* \widetilde{\vc{A}}_S)^{-1} \sgn (\vc{x}_S) \|_\infty \leq 3$.
\end{enumerate}

In \cite{capl09} it is shown, using Conditions $(C_1)$--$(C_5)$, that the difference vector $\vc{h} = \hat{\vc{x}} - \vc{x}$ between the solution $\hat{\vc{x}}$ to the LASSO and the original vector $\vc{x}$ fulfills $\text{supp} (\vc{h}) \subset \supp (\vc{x})$ so that $\text{supp}(\hat{\vc{x}}) = \text{supp}(\vc{x})$ and, even more, the explicit formula \eqref{eqn:formula_h} for $\vc{h}$ is given.

\begin{lemma}[cf. \mbox{\cite[Lem. 3.4]{capl09}}]\label{lem:candes}
Let $\widetilde{\vc{y}} = \widetilde{\vc{A}} \vc{x} + \widetilde{\vc{n}}$ be given, and suppose that Conditions $(C_1)$--$(C_5)$ hold true, with $S = \supp (\vc{x})$.
If
\begin{equation*}
 \min_{\varTheta \in S} | {x}_{\varTheta} | > 8 \mu ,
\end{equation*}
then the solution to the LASSO \eqref{eqn:scaled_lasso} is given by $\hat{\vc{x}} = \vc{x} + \vc{h}$, with
\begin{equation}\label{eqn:formula_h}
\begin{array}{lcl} \vc{h}_S   &=&   (\widetilde{\vc{A}}_S^* \widetilde{\vc{A}}_S)^{-1} \big( \widetilde{\vc{A}}_S^* \widetilde{\vc{n}} - 2 \mu \sgn (\vc{x}_S) \big),   \\   \vc{h}_{S^c}   &=&   \vc{0}. \end{array}
\end{equation}
\end{lemma}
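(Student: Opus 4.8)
The plan is to establish optimality and uniqueness through the classical primal--dual witness construction: take $\hat{\vc{x}}$ with $\hat{\vc{x}}_{S^c} = \vc{0}$ and $\hat{\vc{x}}_S = \vc{x}_S + \vc{h}_S$ as in \eqref{eqn:formula_h}, and verify that it satisfies the subdifferential optimality conditions of the convex functional in \eqref{eqn:scaled_lasso}; by convexity this is sufficient for global minimality. Recall that $\hat{\vc{x}}$ minimizes \eqref{eqn:scaled_lasso} if and only if there is some $\vc{g} \in \partial \| \hat{\vc{x}} \|_1$ with
\[ \widetilde{\vc{A}}^* ( \widetilde{\vc{A}} \hat{\vc{x}} - \widetilde{\vc{y}} ) + 2\mu \, \vc{g} = \vc{0} , \]
where $g_\varTheta = \sgn ( \hat{x}_\varTheta )$ whenever $\hat{x}_\varTheta \neq 0$ and $| g_\varTheta | \leq 1$ otherwise.

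First I would treat the block on $S$. Since $\vc{x}$ is supported on $S$, we have $\widetilde{\vc{y}} = \widetilde{\vc{A}}_S \vc{x}_S + \widetilde{\vc{n}}$ and thus $\widetilde{\vc{A}} \hat{\vc{x}} - \widetilde{\vc{y}} = \widetilde{\vc{A}}_S \vc{h}_S - \widetilde{\vc{n}}$. Imposing the $S$-block of the optimality equation with the guess $\vc{g}_S = \sgn ( \vc{x}_S )$ and inverting $\widetilde{\vc{A}}_S^* \widetilde{\vc{A}}_S$ (legitimate by $(C_1)$) returns exactly the formula \eqref{eqn:formula_h} for $\vc{h}_S$; this amounts to solving the restricted (oracle) least-squares problem with the prescribed sign pattern. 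To know that $\hat{\vc{x}}$ recovers the correct support I would bound, using the triangle inequality together with $(C_3)$ and $(C_5)$,
\[ \| \vc{h}_S \|_\infty \leq \| ( \widetilde{\vc{A}}_S^* \widetilde{\vc{A}}_S )^{-1} \widetilde{\vc{A}}_S^* \widetilde{\vc{n}} \|_\infty + 2\mu \, \| ( \widetilde{\vc{A}}_S^* \widetilde{\vc{A}}_S )^{-1} \sgn ( \vc{x}_S ) \|_\infty \leq 2\mu + 6\mu = 8\mu , \]
so that the hypothesis $\min_{\varTheta \in S} | x_\varTheta | > 8\mu$ yields $| h_\varTheta | < | x_\varTheta |$ and hence $\hat{x}_\varTheta \neq 0$ for every $\varTheta \in S$.

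Next I would check strict dual feasibility off the support. From the identity above and \eqref{eqn:formula_h} one computes $\widetilde{\vc{A}}_S \vc{h}_S - \widetilde{\vc{n}} = - ( \ID - \Pi_S ) \widetilde{\vc{n}} - 2\mu \, \widetilde{\vc{A}}_S ( \widetilde{\vc{A}}_S^* \widetilde{\vc{A}}_S )^{-1} \sgn ( \vc{x}_S )$, with $\Pi_S = \widetilde{\vc{A}}_S ( \widetilde{\vc{A}}_S^* \widetilde{\vc{A}}_S )^{-1} \widetilde{\vc{A}}_S^*$. Substituting into $\vc{g}_{S^c} = - \tfrac{1}{2\mu} \widetilde{\vc{A}}_{S^c}^* ( \widetilde{\vc{A}} \hat{\vc{x}} - \widetilde{\vc{y}} )$ separates a noise term from a sign term,
\[ \vc{g}_{S^c} = \tfrac{1}{2\mu} \widetilde{\vc{A}}_{S^c}^* ( \ID - \Pi_S ) \widetilde{\vc{n}} + \widetilde{\vc{A}}_{S^c}^* \widetilde{\vc{A}}_S ( \widetilde{\vc{A}}_S^* \widetilde{\vc{A}}_S )^{-1} \sgn ( \vc{x}_S ) , \]
whose first summand is bounded in $\| \cdot \|_\infty$ by $(C_4)$ as $\tfrac{1}{2\mu} \cdot \sqrt{2}\mu = 1/\sqrt{2}$ and whose second is bounded by $(C_2)$ as $< 1/4$, giving $\| \vc{g}_{S^c} \|_\infty < 1/\sqrt{2} + 1/4 < 1$. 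Thus $\vc{g}$ is an admissible subgradient with strictly feasible off-support part, certifying that $\hat{\vc{x}}$ is a global minimizer; strictness together with the full column rank of $\widetilde{\vc{A}}_S$ (again $(C_1)$) forces every minimizer to be supported in $S$ and makes the minimizer on $S$ unique, whence $\supp ( \hat{\vc{x}} ) = S$ in combination with the previous paragraph.

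I expect the genuine obstacle to be the step that makes the chosen subgradient $\vc{g}_S = \sgn ( \vc{x}_S )$ legitimate, which is precisely what turns the approximate identity $\hat{\vc{x}} = \vc{x} + \vc{h}$ into the \emph{exact} formula \eqref{eqn:formula_h}: one needs the phase-consistency $\sgn ( \hat{x}_\varTheta ) = \sgn ( x_\varTheta )$ on $S$. Over $\R$ the bound $| h_\varTheta | < | x_\varTheta |$ already preserves the sign, but for the complex (Steinhaus) signs considered here a small $\ell_\infty$-perturbation can still rotate the phase, so closing this gap rigorously requires more than the magnitude estimate --- one must exploit the threshold $\min_{\varTheta \in S} | x_\varTheta | > 8\mu$ more carefully, e.g.\ through a perturbation/continuity argument on the nonlinear stationarity equation satisfied by the true restricted minimizer. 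The dual-feasibility computation, by contrast, is routine once $(C_1)$--$(C_5)$ are in hand, since these conditions are tailored precisely so that the constants $1/\sqrt{2}$ and $1/4$ leave a strict slack below $1$.
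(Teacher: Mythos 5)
Your construction is the standard primal--dual witness argument, and since the paper offers no proof of this lemma at all --- it is imported with a ``cf.''\ from \cite{capl09} --- this is exactly the route the authors implicitly rely on. Your bookkeeping is correct and uses each hypothesis where it belongs: $(C_1)$ for invertibility and for uniqueness of the minimizer on the span of $\widetilde{\vc{A}}_S$, $(C_3)$ and $(C_5)$ for $\| \vc{h}_S \|_\infty \leq 2\mu + 2\mu \cdot 3 = 8\mu$, and $(C_4)$ and $(C_2)$ for the strict dual feasibility $\| \vc{g}_{S^c} \|_\infty \leq 1/\sqrt{2} + 1/4 < 1$.

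The gap you flag at the end is genuine, and it is sharper than a missing continuity argument: formula \eqref{eqn:formula_h} is \emph{equivalent} to the phase consistency $\sgn(\hat{x}_\varTheta) = \sgn(x_\varTheta)$ on $S$, because substituting \eqref{eqn:formula_h} into the stationarity equation on $S$ leaves exactly the residual $2\mu\big(\sgn(\hat{\vc{x}}_S) - \sgn(\vc{x}_S)\big)$. Over $\R$ the bound $|h_\varTheta| < |x_\varTheta|$ forces this residual to vanish and your proof closes; over $\C$ it does not, and no perturbation argument can rescue the \emph{exact} identity. Indeed, for $S = \{\varTheta_0\}$ with $\| \widetilde{\vc{A}}_{\varTheta_0} \|_2 = 1$ the LASSO reduces to complex soft thresholding of $w := x_{\varTheta_0} + \langle \widetilde{\vc{A}}_{\varTheta_0} , \widetilde{\vc{n}} \rangle$, giving $\hat{x}_{\varTheta_0} = w - 2\mu \sgn(w)$, which agrees with \eqref{eqn:formula_h} only when $\sgn(w) = \sgn(x_{\varTheta_0})$ --- generically false for complex noise. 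The correct complex statement replaces $\sgn(\vc{x}_S)$ by $\sgn(\hat{\vc{x}}_S)$ in \eqref{eqn:formula_h} (equivalently, tolerates an additional $O(\mu)$ error per coordinate), and note that the same substitution issue then also touches the dual-feasibility step, since $(C_2)$ controls $\widetilde{\vc{A}}^*_{S^c} \widetilde{\vc{A}}_S (\widetilde{\vc{A}}_S^* \widetilde{\vc{A}}_S)^{-1} \vc{g}$ only for the specific vector $\vc{g} = \sgn(\vc{x}_S)$. What survives untouched, and is all the paper actually uses downstream, is the support identification $\supp(\hat{\vc{x}}) = S$ obtained by running the witness argument on the solution of the problem restricted to $S$. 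So: right approach, correct computations, but the exact formula as stated is a real-valued artifact, and the step you isolate cannot be closed as written in the Steinhaus setting.
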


\subsubsection{Estimates for the conditions \texorpdfstring{$\vc{(C_1)}$--$\vc{(C_5)}$}{(C1)--(C5)}}\label{sec:conditions_lemmas}
With Lemma \ref{lem:candes} at hand we have a basic pattern for the proof Theorem \ref{thm:nonuniform_result}.
Indeed, one easily verifies that, assuming Conditions $(C_1)$--$(C_5)$ are fulfilled, Theorem \ref{thm:nonuniform_result} follows by rescaling the LASSO functional \eqref{eqn:lasso} with the factor $N_T N_R N_t$ and recalling that $\tilde{\sigma} = \sigma / \sqrt{N_T N_R N_t}$.
In order to show that Conditions $(C_1)$--$(C_5)$ hold true with high probability we 
conduct a probabilistic analysis of the extremal singular values of the matrix $\widetilde{\vc{A}}_S$.
Since columns of the matrix $\widetilde{\vc{A}}_S$ belonging to different angle classes (see Section \ref{sec:nonuniform_results} for the definition) are not correlated, it suffices to consider submatrices corresponding to the subsets $S_{[\beta]} \subset S$ (see \eqref{eqn:equivalent_indices}) representing equivalent indices.
In this sense, the following lemma is the main ingredient for the verification of conditions $(C_1)$--$(C_5)$.
The proof of Proposition \ref{prop:opnorm_bound} is rather long and, therefore, postponed until Section \ref{apdx:proof_prop}.

\begin{proposition}\label{prop:opnorm_bound}
 Let $S \subset \GG$ be a support set and let $[\beta]$ be an angle class such that $S_{[\beta]} \subset S$ is not empty.
 Then, for all $\delta \in (0,1)$,
 \[
  \P ( \| \widetilde{\vc{A}}_{S_{[\beta]}}^* \widetilde{\vc{A}}_{S_{[\beta]}} - \ID \|_{2\to 2} \geq \delta ) \leq 7 |S_{[\beta]}| \exp \left( - \frac{\delta \sqrt{N_t}}{8 \sqrt{|S_{[\beta]}|}} \right) .
 \]
\end{proposition}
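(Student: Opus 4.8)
The plan is to reduce the claim, which nominally concerns an $(N_R N_t)\times|S_{[\beta]}|$ matrix, to a concentration statement for a much smaller $N_t\times|S_{[\beta]}|$ random matrix, and then to control the deviation of its Gram matrix from the identity. The reduction uses that all columns indexed by $S_{[\beta]}$ lie in a single angle class. Since $\beta'\equiv\beta \pmod{N_R}$ for every $\varTheta=(\beta',\tau',f')\in S_{[\beta]}$ and $d_R\Delta_\beta=1/N_R$, the receiver phase $\ee{d_R\beta'\Delta_\beta(j-1)}=\ee{\beta'(j-1)/N_R}=\ee{\beta(j-1)/N_R}$ does not depend on the particular column. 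Hence each column factors as $\widetilde{\vc{A}}_\varTheta=\tfrac{1}{\sqrt{N_R}}\,\vc{w}\otimes\vc{b}_\varTheta$ with a common unimodular $\vc{w}\in\C^{N_R}$, $\|\vc{w}\|_2^2=N_R$, and $\vc{b}_\varTheta=\tfrac{1}{\sqrt{N_T N_t}}\sum_{i=1}^{N_T}\ee{d_T\beta'\Delta_\beta(i-1)}\vc{M}_{f'}\vc{T}_{\tau'}\vc{s}_i\in\C^{N_t}$. Consequently $\widetilde{\vc{A}}_{S_{[\beta]}}^*\widetilde{\vc{A}}_{S_{[\beta]}}=\vc{B}^*\vc{B}$, where $\vc{B}$ is the $N_t\times|S_{[\beta]}|$ matrix with columns $\vc{b}_\varTheta$ (this is exactly the block-diagonal structure anticipated in \eqref{eqn:matrix_AS_block_diagonal}), so it suffices to bound $\opnorm{\vc{B}^*\vc{B}-\ID}$.

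Next I would establish $\E[\vc{B}^*\vc{B}]=\ID$ and, more importantly, an independence structure. Using isotropy of the $\vc{s}_i$ and unitarity of $\vc{M}_f\vc{T}_\tau$, the diagonal entries have mean $\tfrac{1}{N_T N_t}\,\E\|\sum_i\ee{d_T\beta'\Delta_\beta(i-1)}\vc{s}_i\|_2^2=1$, while off-diagonal expectations vanish because distinct time-frequency shifts $\vc{M}_f\vc{T}_\tau$ are trace-orthogonal and, for distinct angle values in the same class, the phase sum $\tfrac{1}{N_T}\sum_i\ee{d_T(\beta''-\beta')\Delta_\beta(i-1)}$ equals zero (as $\beta''-\beta'\in N_R\Z\setminus N_TN_R\Z$). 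The same phase computation shows that the Gaussian vectors $\vc{g}^{(\beta')}:=\tfrac{1}{\sqrt{N_T}}\sum_i\ee{d_T\beta'\Delta_\beta(i-1)}\vc{s}_i$ are standard complex Gaussian and \emph{uncorrelated, hence independent}, across the distinct angle values of the class. Thus the columns of $\vc{B}$ split into at most $N_T$ groups — one per angle value — each group being a random Gabor system $\{\tfrac{1}{\sqrt{N_t}}\vc{M}_f\vc{T}_\tau\vc{g}^{(\beta')}\}$ generated by its \emph{own independent} Gaussian vector.

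To bound $\opnorm{\vc{B}^*\vc{B}-\ID}$ I would then apply a matrix-concentration argument. Each entry of $\vc{B}^*\vc{B}-\ID$ is a mean-zero order-two Gaussian chaos of variance $O(1/N_t)$: across groups it rotates (by unitarity) to $\tfrac{1}{N_t}\langle\vc{g},\vc{h}\rangle$ with independent standard Gaussians, and within a group to $\tfrac{1}{N_t}\,\vc{g}^*\vc{P}\vc{g}$ with unitary, trace-zero $\vc{P}$. A matrix-Bernstein / Schatten-moment estimate then produces the dimension prefactor $\lesssim|S_{[\beta]}|$ together with a subexponential tail, the $\sqrt{N_t}$ scaling of the exponent reflecting the heavy (product-of-Gaussian) tails of these entries. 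Alternatively, one can re-run the chaos-process machinery of Theorem~\ref{thm:chaos_tool} restricted to vectors supported on $S_{[\beta]}$ (now a full $|S_{[\beta]}|$-dimensional ball rather than a sparse set), for which $d_F=1$ and $d_{2\to2}\le\sqrt{|S_{[\beta]}|/N_t}$, giving a bound of comparable strength of the form $\exp\!\big(-c\,\delta^2 N_t/|S_{[\beta]}|\big)$, which implies the stated estimate in the relevant regime $N_t\gtrsim|S_{[\beta]}|/\delta^2$.

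The \textbf{main obstacle} is securing the \emph{linear} dependence on $|S_{[\beta]}|$ in the sample count. A naive entrywise or Gershgorin bound forces each of the $\sim|S_{[\beta]}|$ per-row contributions below $\delta/|S_{[\beta]}|$ and costs $N_t\gtrsim|S_{[\beta]}|^2$; the linear scaling instead requires genuinely exploiting the cancellation among the random phases at the operator-norm level. A second difficulty is that the circulant shifts couple the $N_t$ rows of $\vc{B}$, so one cannot apply an independent-summand matrix Bernstein to the rows directly; the usable independence is the one extracted above across the $N_T$ angle-value groups, and the Gabor coherence \emph{within} each group must be controlled separately. Once these are in hand, matching the precise constants $7|S_{[\beta]}|$ and $\tfrac{1}{8}\sqrt{N_t/|S_{[\beta]}|}$ is a matter of careful moment bookkeeping.
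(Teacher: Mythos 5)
Your structural reduction is sound: the factorization $\widetilde{\vc{A}}_\varTheta = \tfrac{1}{\sqrt{N_R}}\vc{w}\otimes \vc{b}_\varTheta$ within a single angle class, the computation $\E[\vc{B}^*\vc{B}]=\ID$, and the observation that the vectors $\vc{g}^{(\beta^\prime)}$ are independent across the $N_T$ angle values of the class are all correct (the last point is a nice remark that the paper does not even need). But the proof has a genuine gap exactly where you flag the ``main obstacle'': the actual concentration estimate for $\opnorm{\vc{B}^*\vc{B}-\ID}$ with \emph{linear} dependence on $|S_{[\beta]}|$ is never carried out. The Gram matrix is a second-order chaos in the signal entries, not a sum of independent matrices, so matrix Bernstein does not apply to it directly, and the grouping by angle value does not resolve this --- within each group the Gabor Gram matrix is still a full order-two chaos. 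The paper's proof supplies precisely the missing machinery: it writes $\widetilde{\vc{A}}_{S_{[\beta]}}^*\widetilde{\vc{A}}_{S_{[\beta]}} = \sum_{(i,a),(j,b)}\overline{s_{(i,a)}}s_{(j,b)}\vc{Y}^{(i,a),(j,b)}$, splits off the diagonal part $(i,a)=(j,b)$ (a sum of \emph{independent} self-adjoint matrices with subexponential scalar weights $|s_{(i,a)}|^2-1$, handled by the matrix Bernstein inequality of Theorem~\ref{thm:tropp}), and treats the off-diagonal part by decoupling followed by the noncommutative Khintchine inequality for homogeneous matrix chaos of order two (Lemma~\ref{lem:khintchine}), which requires explicit Schatten-norm computations for the block matrices $\vc{F},\widetilde{\vc{F}}$ (using $|S_{[\beta]}|\le N_t$) and a moment-to-tail conversion (Lemma~\ref{lem:p_moments_tail_est}). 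None of these steps appear in your proposal beyond the phrase ``careful moment bookkeeping,'' and they constitute essentially the entire proof.

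Your one concrete alternative --- rerunning the chaos-process bound of Theorem~\ref{thm:chaos_tool} on vectors supported in $S_{[\beta]}$ --- also does not deliver the claimed rate as stated. With only the radii $d_F=1$ and $d_{2\to 2}\le\sqrt{|S_{[\beta]}|/N_t}$, Dudley's integral with the volumetric covering bound gives $\DD\lesssim |S_{[\beta]}|/\sqrt{N_t}$, so the condition $c_1\DD(\DD+1)\le\delta/2$ forces $N_t\gtrsim |S_{[\beta]}|^2/\delta^2$ --- quadratic, not linear. To recover linear scaling along that route you would still need the Maurey-type covering estimate (the second bound of Lemma~\ref{lem:cov_number_bounds}), i.e., the same cancellation-exploiting ingredient you identify as missing; and even then the resulting bound holds only under a side condition on $N_t$ and with unspecified constants, whereas the proposition is an unconditional tail bound whose explicit form $7|S_{[\beta]}|\exp(-\delta\sqrt{N_t}/8\sqrt{|S_{[\beta]}|})$ is used verbatim in the downstream union bounds (e.g., Corollaries~\ref{cor:event1} and~\ref{cor:event2}).
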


Before we turn to the verification of Conditions $(C_1)$--$(C_5)$, we derive two corollaries which follow directly from \mbox{Proposition \ref{prop:opnorm_bound}}.
We have already pointed out in \eqref{eqn:matrix_AS_block_diagonal} that, due to the fact that columns of $\widetilde{\vc{A}}$ belonging to different angle classes are not correlated, it holds
\[
 \P ( \opnorm{\widetilde{\vc{A}}_S^* \widetilde{\vc{A}}_S - \ID} \geq \delta ) = \P \big( \max_{[\beta]} \opnorm{\widetilde{\vc{A}}_{S_{[\beta]}}^* \widetilde{\vc{A}}_{S_{[\beta]}} - \ID} \geq \delta \big) .
\]
Using the union bound and the tail bounds provided by Proposition \ref{prop:opnorm_bound} we arrive at
\begin{equation}\label{eqn:union_bound_estimate}
 \P ( \opnorm{\widetilde{\vc{A}}_S^* \widetilde{\vc{A}}_S - \ID} \geq \delta ) \leq 7 \sum_{\betacl} |S_{[ \beta ]}| \exp \left(- \frac{\delta \sqrt{N_t}}{8 \sqrt{ |S_{[ \beta ]}| }} \right) ,
\end{equation}
where the sum is taken over all angle classes $[\beta]$ (with the property that $S_{[\beta]}$ is not empty).
Now it suffices to plug in the inequality $|S_{[\beta]}| \leq \eta |S| / N_R$ which applies for $\eta$-balanced support sets to obtain the following corollary (recall, that there are exactly $N_R$ angle classes $\betacl$).

\begin{corollary}\label{cor:event1}
For each $\eta$-balanced support set $S \subset \GG$ it holds
\[
 \P ( \opnorm{\widetilde{\vc{A}}_S^* \widetilde{\vc{A}}_S - \ID} \geq \delta ) \leq 7 \eta |S| \exp \left( - \frac{\delta \sqrt{N_t N_R}}{8 \sqrt{\eta |S|}} \right).
\]
\end{corollary}

We use the estimate \eqref{eqn:union_bound_estimate} to derive a bound on the coherence of $\widetilde{\vc{A}}$.
Note, that the coherence of this matrix fulfills
\[
 \max_{\varTheta \neq \varTheta'} | \langle \widetilde{\vc{A}}_{\varTheta} , \widetilde{\vc{A}}_{\varTheta'} \rangle | = \max_{\betacl} \max_{\varTheta , \varTheta^\prime \in S_{\betacl} , \varTheta \neq \varTheta'} | \langle \widetilde{\vc{A}}_{\varTheta} , \widetilde{\vc{A}}_{\varTheta'} \rangle | .
\]
Let $\varTheta , \varTheta' \in S_{\betacl}$ be given.
For any matrix $\vc{M}$ and each entry $m_{i,j}$ it holds $|m_{i,j}| \leq \opnorm{\vc{M}}$.
Now, since $\langle \widetilde{\vc{A}}_{\varTheta} , \widetilde{\vc{A}}_{\varTheta'} \rangle$ is an entry of the matrix $\widetilde{\vc{A}}_S^* \widetilde{\vc{A}}_S - \ID$, where we set $S:=\{\varTheta,\varTheta'\}$, we obtain
\[
 \P ( | \langle \widetilde{\vc{A}}_{\varTheta} , \widetilde{\vc{A}}_{\varTheta'} \rangle | \geq u ) \leq 14 \exp \left( - \frac{u \sqrt{N_t}}{8 \sqrt{2}} \right)
\]
from \eqref{eqn:union_bound_estimate}.
Since we can choose $\genfrac(){0pt}{}{N/N_R}{2} \leq (N / N_R)^2 / 2$ pairs $\varTheta , \varTheta'$ and there are $N_R$ angle classes, an application of the union bound yields the following corollary.

\begin{corollary}\label{cor:event2}
The coherence of the matrix $\widetilde{\vc{A}}$ satisfies
\[
 \P ( \max_{\varTheta \neq \varTheta'} | \langle \widetilde{\vc{A}}_{\varTheta} , \widetilde{\vc{A}}_{\varTheta'} \rangle | \geq u )
  \leq \frac{7 N^2}{N_R} \exp \left( - \frac{u \sqrt{N_t}}{8 \sqrt{2}} \right).
\]
\end{corollary}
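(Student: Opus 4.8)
The plan is to reduce the coherence of $\widetilde{\vc{A}}$ to the operator-norm tail already packaged in \eqref{eqn:union_bound_estimate} (equivalently Proposition~\ref{prop:opnorm_bound}), applied to the smallest nontrivial support sets. The first fact I would exploit is that every inner product $\langle \widetilde{\vc{A}}_\varTheta , \widetilde{\vc{A}}_{\varTheta'} \rangle$ with $\varTheta,\varTheta'$ lying in \emph{different} angle classes vanishes identically: by the computation following \eqref{eqn:correlation_formula} the prefactor $\sum_{k} \ee{d_R(\beta'-\beta)\Delta_\beta(k-1)}$ equals $\delta^{\sim_{N_R}}_{\beta,\beta'} N_R$ and is therefore zero unless $\beta'\sim\beta$. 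Consequently the maximum defining the coherence only ever ranges over pairs inside a common class, and this restriction is exactly what produces the favorable $1/N_R$ factor in the claimed bound.

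Next, for a single fixed pair $\varTheta\neq\varTheta'$ belonging to the same angle class I would set $S:=\{\varTheta,\varTheta'\}$ and observe that $\langle\widetilde{\vc{A}}_\varTheta,\widetilde{\vc{A}}_{\varTheta'}\rangle$ is an off-diagonal entry of the $2\times 2$ matrix $\widetilde{\vc{A}}_S^*\widetilde{\vc{A}}_S-\ID$. Since every entry of a matrix is bounded in modulus by its spectral norm, the event $\{|\langle\widetilde{\vc{A}}_\varTheta,\widetilde{\vc{A}}_{\varTheta'}\rangle|\geq u\}$ is contained in $\{\opnorm{\widetilde{\vc{A}}_S^*\widetilde{\vc{A}}_S-\ID}\geq u\}$. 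Here $|S_{\betacl}|=2$, so substituting $|S_{\betacl}|=2$ into \eqref{eqn:union_bound_estimate} gives the single-pair estimate $\P(|\langle\widetilde{\vc{A}}_\varTheta,\widetilde{\vc{A}}_{\varTheta'}\rangle|\geq u)\leq 14\exp(-u\sqrt{N_t}/(8\sqrt{2}))$, where the $\sqrt 2$ in the exponent is precisely $\sqrt{|S_{\betacl}|}$ and the prefactor $14$ is $7|S_{\betacl}|$.

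Finally I would union-bound over all relevant pairs. Each of the $N_R$ angle classes contains $N/N_R = N_T N_t^2$ indices, hence contributes $\binom{N/N_R}{2}\leq (N/N_R)^2/2$ unordered pairs; summing over the $N_R$ classes bounds the total number of intra-class pairs by $N_R\cdot (N/N_R)^2/2 = N^2/(2N_R)$. Multiplying the single-pair tail by this count yields $14\cdot N^2/(2N_R) = 7N^2/N_R$ times the exponential, which is the assertion. None of these steps is genuinely difficult, since all the probabilistic content is front-loaded into Proposition~\ref{prop:opnorm_bound}; the only point demanding care — and where one could forfeit the sharp factor — is the bookkeeping that discards the vanishing cross-class inner products, because retaining all $\binom{N}{2}$ pairs would replace $N^2/N_R$ by $N^2$ and destroy the $1/N_R$ gain that makes the bound useful.
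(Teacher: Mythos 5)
Your proposal is correct and follows essentially the same route as the paper: restrict to intra-class pairs (cross-class inner products vanish by the $\delta^{\sim_{N_R}}_{\beta,\beta'}$ prefactor), bound the single-pair inner product by the spectral norm of $\widetilde{\vc{A}}_S^*\widetilde{\vc{A}}_S-\ID$ for $S=\{\varTheta,\varTheta'\}$ via \eqref{eqn:union_bound_estimate} with $|S_{[\beta]}|=2$, and union-bound over $N_R\cdot\binom{N/N_R}{2}$ pairs. All constants ($14$ per pair, $7N^2/N_R$ total) match the paper's computation exactly.
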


Next we will derive upper bounds for the respective probabilities that one 
of the conditions $(C_1)$--$(C_5)$ fails to hold.

\subsubsection*{Condition $\vc{(C_1)}$}
We define events $\CC_1^\delta$ as
\[
 \CC_1^\delta = \{ \opnorm{\widetilde{\vc{A}}_{S}^* \widetilde{\vc{A}}_{S} - \ID} \leq \delta \leq 1/2 \} .
\]
Clearly, if $\CC_1^\delta$ occurs (i.e., $\delta \in (0,1/2 ]$), then it holds $\opnorm{(\widetilde{\vc{A}}_{S}^* \widetilde{\vc{A}}_{S})^{-1}} \leq 1 / (1-\delta) \leq 2$.
Hence, Corollary \ref{cor:event1} can be used to show that condition $(C_1)$ is fulfilled with high probability.

\subsubsection*{Condition $\vc{(C_2)}$}
\begin{lemma}\label{lem:a2}
Let $S \subset \GG$ be a $\eta$-balanced support set.
If there exist $\delta \in (0,1)$ and $u>0$ such that
\begin{equation}\label{eqn:event_delta}
 \opnorm{\widetilde{\vc{A}}_{S}^* \widetilde{\vc{A}}_{S} - \ID} \leq \delta , \qquad  \max_{\varTheta \neq \varTheta'} | \langle \widetilde{\vc{A}}_{\varTheta} , \widetilde{\vc{A}}_{\varTheta'} \rangle | \leq u ,
\end{equation}
then it holds
\[
 \P ( \| \widetilde{\vc{A}}^*_{S^c} \widetilde{\vc{A}}_S (\widetilde{\vc{A}}_S^* \widetilde{\vc{A}}_S)^{-1} \sgn (\vc{x}_S)\|_\infty \geq 1/4 ) \leq 2 N \exp \bigg( -\frac{(1-\delta)^2 N_R}{32 u^2 \eta |S|} \bigg).
\]
\end{lemma}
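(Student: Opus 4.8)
The plan is to reduce the statement to a deviation bound for a linear form in the random signs. Since the signs $\vc{\epsilon} := \sgn(\vc{x}_S)$ form a Steinhaus sequence that is independent of the Gaussian signals generating $\widetilde{\vc{A}}$, I treat $\widetilde{\vc{A}}$ as fixed and satisfying the deterministic hypotheses \eqref{eqn:event_delta}; the only remaining randomness is then $\vc{\epsilon}$. The quantity $\widetilde{\vc{A}}^*_{S^c}\widetilde{\vc{A}}_S(\widetilde{\vc{A}}_S^*\widetilde{\vc{A}}_S)^{-1}\sgn(\vc{x}_S)$ is, entry by entry, a linear combination $\sum_{\varTheta'\in S}[\vc{g}_\varTheta]_{\varTheta'}\epsilon_{\varTheta'}$ of the independent Steinhaus variables $\epsilon_{\varTheta'}$. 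I would control each such entry by a Hoeffding-type deviation inequality for Steinhaus sequences and then take a union bound over $\varTheta\in S^c$.

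For a fixed $\varTheta\in S^c$, write $\vc{g}_\varTheta = (\widetilde{\vc{A}}_S^*\widetilde{\vc{A}}_S)^{-1}\vc{r}_\varTheta$ (up to entrywise conjugation, which affects neither $\ell_2$-norms nor the Hoeffding bound), where $\vc{r}_\varTheta = \big(\langle\widetilde{\vc{A}}_\varTheta,\widetilde{\vc{A}}_{\varTheta'}\rangle\big)_{\varTheta'\in S}$ is the vector of correlations of the $\varTheta$-th column with the columns indexed by $S$. The first condition in \eqref{eqn:event_delta} yields $\opnorm{(\widetilde{\vc{A}}_S^*\widetilde{\vc{A}}_S)^{-1}}\le 1/(1-\delta)$, hence $\|\vc{g}_\varTheta\|_2\le\|\vc{r}_\varTheta\|_2/(1-\delta)$. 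The decisive step is the angle-class orthogonality: because columns belonging to different angle classes are uncorrelated (cf.\ \eqref{eqn:correlation_formula} and the block-diagonal structure in \eqref{eqn:matrix_AS_block_diagonal}), the vector $\vc{r}_\varTheta$ is supported on $S_{[\beta]}$, where $[\beta]$ is the angle class of $\varTheta$. Since $\varTheta\in S^c$, every entry of $\vc{r}_\varTheta$ is a genuine off-diagonal correlation and is therefore bounded by $u$ via the second condition in \eqref{eqn:event_delta}. Consequently $\|\vc{r}_\varTheta\|_2^2\le|S_{[\beta]}|\,u^2\le\frac{\eta|S|}{N_R}u^2$ by $\eta$-balancedness, so that $\|\vc{g}_\varTheta\|_2^2\le\frac{\eta|S|\,u^2}{(1-\delta)^2 N_R}$.

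Applying the Steinhaus Hoeffding inequality with threshold $t=1/4$ then gives, for each fixed $\varTheta\in S^c$,
\[
 \P\big(\,|[\widetilde{\vc{A}}^*_{S^c}\widetilde{\vc{A}}_S(\widetilde{\vc{A}}_S^*\widetilde{\vc{A}}_S)^{-1}\sgn(\vc{x}_S)]_\varTheta|\ge 1/4\,\big)\le 2\exp\!\Big(-\frac{(1/4)^2}{2\|\vc{g}_\varTheta\|_2^2}\Big)\le 2\exp\!\Big(-\frac{(1-\delta)^2 N_R}{32\,u^2\eta|S|}\Big),
\]
where I used $(1/4)^2/2 = 1/32$. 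A union bound over the at most $N$ indices $\varTheta\in S^c$ produces the prefactor $2N$ and yields the claim. I expect the main obstacle to be the localization to a single angle class: it is precisely the restriction of $\vc{r}_\varTheta$ to $S_{[\beta]}$ (replacing $|S|$ by $|S_{[\beta]}|\le\eta|S|/N_R$) that injects the factor $N_R$ into the exponent and makes balancedness pay off. A minor secondary point is fixing the constant in the complex Steinhaus Hoeffding inequality so that the exponent comes out exactly as $1/32$ rather than with an extra factor from splitting real and imaginary parts.
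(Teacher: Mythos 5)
Your proposal is correct and follows essentially the same route as the paper: fix the matrix, write each entry of the vector as a Steinhaus linear form $\langle \vc{v}_\varTheta, \sgn(\vc{x}_S)\rangle$ with $\vc{v}_\varTheta = (\widetilde{\vc{A}}_S^*\widetilde{\vc{A}}_S)^{-1}\widetilde{\vc{A}}_S^*\widetilde{\vc{A}}_\varTheta$, bound $\|\vc{v}_\varTheta\|_2 \le u\sqrt{\eta|S|}/((1-\delta)\sqrt{N_R})$ using the inverse-operator-norm bound together with the angle-class localization of the correlation vector, and finish with the Steinhaus Hoeffding inequality and a union bound over $S^c$. Your worry about the constant is unfounded, since the paper's Lemma~\ref{lem:hoeffding} gives exactly $2e^{-t^2/(2\|\vc{a}\|_2^2)}$ with no extra factor from real/imaginary splitting.
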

\begin{proof}
For $\varTheta \in S^c$ set $\vc{v}_{\varTheta} := (\widetilde{\vc{A}}_S^* \widetilde{\vc{A}}_S)^{-1} 
\widetilde{\vc{A}}_S^* \widetilde{\vc{A}}_{\varTheta}$.
Since $\sgn (\vc{x}_S)$ forms a Steinhaus sequence, the Hoeffding-type inequality for Steinhaus sequences of 
Lemma~\ref{lem:hoeffding} yields
\begin{equation}\label{eqn:hoeff_bound}
\P ( |\langle \vc{v}_{\varTheta} , \text{sgn}(\vc{x}_S) \rangle| \geq 1 / 4 ) \leq 2 e^{- 1 / ( 32 \|\vc{v}_{\varTheta}\|_2^2)}.
\end{equation}
In order to derive a bound for the norms $\| \vc{v}_{\varTheta} \|_2$ we calculate,
\[
 \| \vc{v}_{\varTheta} \|_2 = \| (\widetilde{\vc{A}}_S^* \widetilde{\vc{A}}_S)^{-1} \widetilde{\vc{A}}_S^* \widetilde{\vc{A}}_{\varTheta} \|_2 \leq \opnorm{(\widetilde{\vc{A}}_S^* \widetilde{\vc{A}}_S)^{-1}}
  \sqrt{ \sum_{\tilde\varTheta \in S} | \langle \widetilde{\vc{A}}_{\tilde\varTheta} , \widetilde{\vc{A}}_{\varTheta} \rangle |^2 }.
\]
By assumption, $\opnorm{(\widetilde{\vc{A}}_S^* \widetilde{\vc{A}}_S)^{-1}}$ is bounded by $(1-\delta)^{-1}$.
Since $S$ is $\eta$-balanced, there are at most $\eta |S| / N_R$ indices $\tilde\varTheta$ such that the summands are not equal to zero (recall that columns corresponding to different angle classes are not correlated).
Together with the fact that each such inner product $| \langle \widetilde{\vc{A}}_{\varTheta'} , \widetilde{\vc{A}}_{\varTheta} \rangle |$ is bounded by $u$, we have
$
 \| \vc{v}_{\varTheta} \|_2 \leq u \sqrt{\eta|S|} / (1-\delta) \sqrt{N_R}
$.
Inserting this estimate into \eqref{eqn:hoeff_bound} and furthermore applying the union bound yields the assertion.
\end{proof}

\subsubsection*{Condition $\vc{(C_5)}$}
\begin{lemma}\label{lem:a5}
If the matrix $\widetilde{\vc{A}}_S$ obeys
$
 \opnorm{\widetilde{\vc{A}}_{S}^* \widetilde{\vc{A}}_{S} - \ID } \leq \delta < 1 ,
$
and $\sgn(\vc{x}_S)$ is a (random) Steinhaus vector,
then it holds 
\[
\P ( \| (\widetilde{\vc{A}}_S^* \widetilde{\vc{A}}_S)^{-1} \sgn (\vc{x}_S) \|_\infty > 3 )
  \leq 2 |S| e^{ - {2(1-\delta)^2} / {\delta^2} }.
\]
\end{lemma}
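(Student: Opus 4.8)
The plan is to exploit that $M := \widetilde{\vc{A}}_S^* \widetilde{\vc{A}}_S$ is close to the identity, split off the identity part of $M^{-1}$, and control the remaining fluctuation with the Steinhaus Hoeffding inequality of Lemma~\ref{lem:hoeffding} (used exactly as in the proof of Lemma~\ref{lem:a2}). Writing $\vc{\epsilon} := \sgn (\vc{x}_S)$ for the Steinhaus vector and decomposing $M^{-1} = \ID + (M^{-1} - \ID)$, the $k$-th coordinate becomes
\[
 (M^{-1} \vc{\epsilon})_k = \epsilon_k + \big( (M^{-1} - \ID) \vc{\epsilon} \big)_k .
\]
The crucial observation is that the first summand has modulus exactly one, so for the whole coordinate to exceed $3$ in modulus the fluctuation $\big((M^{-1}-\ID)\vc{\epsilon}\big)_k$ must exceed $2$. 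This peels off a deterministic-size term and leaves a centered Steinhaus sum to which concentration applies.

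First I would quantify the closeness of $M^{-1}$ to the identity. From the hypothesis $\opnorm{M - \ID} \leq \delta < 1$ the matrix $M$ is invertible with $\opnorm{M^{-1}} \leq (1-\delta)^{-1}$, and writing $M^{-1} - \ID = -M^{-1}(M - \ID)$ gives $\opnorm{M^{-1} - \ID} \leq \delta/(1-\delta)$. Since $M^{-1} - \ID$ is Hermitian, the $\ell_2$-norm of its $k$-th row equals $\| (M^{-1} - \ID) \vc{e}_k \|_2 \leq \opnorm{M^{-1} - \ID} \leq \delta/(1-\delta)$.

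Next, for fixed $k$ I would view $\big((M^{-1}-\ID)\vc{\epsilon}\big)_k = \sum_j (M^{-1}-\ID)_{kj}\, \epsilon_j$ as a Steinhaus sum whose coefficient vector has $\ell_2$-norm at most $\delta/(1-\delta)$. Applying Lemma~\ref{lem:hoeffding} with threshold $t = 2$, and bounding the coefficient norm by $\delta/(1-\delta)$, yields
\[
 \P \big( | ((M^{-1}-\ID)\vc{\epsilon})_k | \geq 2 \big) \leq 2 \exp \left( - \frac{4}{2 (\delta/(1-\delta))^2} \right) = 2 \exp \left( - \frac{2 (1-\delta)^2}{\delta^2} \right).
\]
By the decomposition above this also bounds $\P( |(M^{-1}\vc{\epsilon})_k| > 3 )$, and a union bound over the $|S|$ coordinates $k \in S$ delivers the claimed estimate.

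The argument is essentially routine once the right splitting is chosen; the only genuine subtlety is to separate off the identity part of $M^{-1}$ rather than its diagonal. This way the deterministic contribution has modulus precisely one, giving the clean budget $3-1=2$ for the fluctuation, while the random remainder is centered with the small variance proxy $\delta/(1-\delta)$. Separating the diagonal of $M^{-1}$ instead would leave a diagonal entry of size up to $(1-\delta)^{-1}$, which for $\delta$ near $1$ could by itself exceed $3$ and would not reproduce the stated bound.
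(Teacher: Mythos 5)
Your proof is correct and follows essentially the same route as the paper: split off the identity part of $(\widetilde{\vc{A}}_S^* \widetilde{\vc{A}}_S)^{-1}$, observe the deterministic term has modulus one, bound the row norms of the remainder by $\delta/(1-\delta)$, and apply the Steinhaus Hoeffding inequality with threshold $2$ followed by a union bound. The only cosmetic difference is that you obtain $\opnorm{M^{-1}-\ID}\le \delta/(1-\delta)$ via the identity $M^{-1}-\ID=-M^{-1}(M-\ID)$ whereas the paper uses the Neumann series; both are standard and give the same constant.
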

\begin{proof}
An application of the triangle inequality yields
\[
\| (\widetilde{\vc{A}}_S^* \widetilde{\vc{A}}_S)^{-1} \sgn (\vc{x}_S) \|_\infty \leq
  \| \sgn (\vc{x}_S) \|_\infty + \| ((\widetilde{\vc{A}}_S^* \widetilde{\vc{A}}_S)^{-1} - \ID) \sgn (\vc{x}_S) \|_\infty.
\]
Since the first summand on the right-hand side is equal to $1$, it suffices to show that the second term is bounded by $2$ with high
probability.
Let $\vc{v}_{\varTheta}$, $\varTheta \in S$, denote the rows of the matrix $(\widetilde{\vc{A}}_S^* \widetilde{\vc{A}}_S)^{-1} - \ID$.
Due to the assumption it holds $\opnorm{\ID - \widetilde{\vc{A}}_S^* \widetilde{\vc{A}}_S} \leq \delta$.
Thus,
\[
 \| \vc{v}_{\varTheta} \|_2 = \| ( (\widetilde{\vc{A}}_S^* \widetilde{\vc{A}}_S)^{-1} - \ID ) \vc{e}_{\varTheta} \|_2 = \bigg\| \sum_{k=1}^\infty ( \ID - \widetilde{\vc{A}}_S^* \widetilde{\vc{A}}_S )^k \vc{e}_{\varTheta} \bigg\|_2 \leq \frac{1}{1-\delta} \opnorm{\ID - \widetilde{\vc{A}}_S^* \widetilde{\vc{A}}_S} \leq \frac{\delta}{1-\delta} .
\]
Using Hoeffding's inequality for Steinhaus sums (Lemma \ref{lem:hoeffding}) we obtain
\[
\P ( | \langle \vc{v}_{\varTheta} , \sgn (\vc{x}_S) \rangle | \geq 2 ) \leq 2 e^{ - 2 / \|\vc{v}_{\varTheta}\|^2 }
  \leq 2 e^{ - 2 (1-\delta)^2 / \delta^2} .
\]
Finally, the assertion follows by applying the union bound.
\end{proof}

\subsubsection*{Condition $\vc{(C_3)}$}
\begin{lemma}\label{lem:a3}
If the matrix $\widetilde{\vc{A}}_S$ obeys $\opnorm{\widetilde{\vc{A}}_{S}^* \widetilde{\vc{A}}_{S} - \ID} \leq 1/2$, then,
for a mean-zero Gaussian vector $\widetilde{\vc{n}}$ with independent entries of variance  $\tilde{\sigma}^2$ and 
$\mu = \tilde\sigma \sqrt{2\log (N)}$, it holds,
\[
 \P_{{\vc{n}}} ( \|(\widetilde{\vc{A}}_S^* \widetilde{\vc{A}}_S)^{-1} \widetilde{\vc{A}}_S^* \widetilde{\vc{n}}\|_\infty \geq 2 \mu ) \leq N^{-3} .
\]
\end{lemma}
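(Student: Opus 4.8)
The plan is to treat this as a purely Gaussian tail estimate, conditioned on a fixed matrix $\widetilde{\vc{A}}$ for which the hypothesis $\opnorm{\widetilde{\vc{A}}_S^* \widetilde{\vc{A}}_S - \ID} \leq 1/2$ holds deterministically. With $\widetilde{\vc{A}}$ frozen, the only randomness is in $\widetilde{\vc{n}}$, and the vector $\vc{v} := (\widetilde{\vc{A}}_S^* \widetilde{\vc{A}}_S)^{-1} \widetilde{\vc{A}}_S^* \widetilde{\vc{n}}$ is a linear image of the complex Gaussian noise, hence itself a mean-zero complex Gaussian vector in $\C^{|S|}$. I would bound $\| \vc{v} \|_\infty$ by controlling each coordinate separately and finishing with a union bound over the at most $N$ coordinates.

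Write $\vc{M} := (\widetilde{\vc{A}}_S^* \widetilde{\vc{A}}_S)^{-1} \widetilde{\vc{A}}_S^*$, so that each coordinate $v_\varTheta = [\vc{M}\widetilde{\vc{n}}]_\varTheta = \sum_k M_{\varTheta,k}\, \widetilde{n}_k$ is a mean-zero complex Gaussian of variance $\tilde{\sigma}^2 \| \vc{m}_\varTheta \|_2^2$, where $\vc{m}_\varTheta$ is the $\varTheta$th row of $\vc{M}$. The key computation is to bound these row norms via the diagonal of $\vc{M}\vc{M}^*$. Since $(\widetilde{\vc{A}}_S^* \widetilde{\vc{A}}_S)^{-1}$ is Hermitian,
\[
 \vc{M}\vc{M}^* = (\widetilde{\vc{A}}_S^* \widetilde{\vc{A}}_S)^{-1} \widetilde{\vc{A}}_S^* \widetilde{\vc{A}}_S (\widetilde{\vc{A}}_S^* \widetilde{\vc{A}}_S)^{-1} = (\widetilde{\vc{A}}_S^* \widetilde{\vc{A}}_S)^{-1} ,
\]
so that $\| \vc{m}_\varTheta \|_2^2 = [(\widetilde{\vc{A}}_S^* \widetilde{\vc{A}}_S)^{-1}]_{\varTheta,\varTheta} \leq \opnorm{(\widetilde{\vc{A}}_S^* \widetilde{\vc{A}}_S)^{-1}} \leq (1 - 1/2)^{-1} = 2$, the last inequality coming from the Neumann-series bound under the hypothesis. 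Thus every $v_\varTheta$ has variance at most $2\tilde{\sigma}^2$.

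With the variance in hand, I would invoke the standard tail bound for a mean-zero complex Gaussian $Z$ with $\E |Z|^2 = \nu$, namely $\P(|Z| \geq t) \leq e^{-t^2/\nu}$ (the squared modulus being exponentially distributed with mean $\nu$). Taking $\nu = 2\tilde{\sigma}^2$ and $t = 2\mu = 2\tilde{\sigma}\sqrt{2\log(N)}$ gives $t^2/\nu = 4\log(N)$, hence $\P(|v_\varTheta| \geq 2\mu) \leq N^{-4}$ for each $\varTheta \in S$. A union bound over the $|S| \leq N$ coordinates then yields $\P_{\vc{n}}(\| \vc{v} \|_\infty \geq 2\mu) \leq |S|\, N^{-4} \leq N^{-3}$, which is the claim. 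The only point requiring care is matching the complex Gaussian normalization so the exponent comes out to exactly $4\log(N)$; beyond that the argument is routine, and notably no concentration over the draw of $\widetilde{\vc{A}}$ is needed, since the lemma is stated conditionally on the operator-norm bound.
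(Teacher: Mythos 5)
Your proposal is correct and follows essentially the same route as the paper: bound the $\ell_2$-norm of each row of $(\widetilde{\vc{A}}_S^* \widetilde{\vc{A}}_S)^{-1} \widetilde{\vc{A}}_S^*$ by $\sqrt{2}$ using $\opnorm{(\widetilde{\vc{A}}_S^* \widetilde{\vc{A}}_S)^{-1}} \leq 2$, apply the complex Gaussian tail bound to get $N^{-4}$ per coordinate, and union bound over $|S| \leq N$. The only cosmetic difference is that you read off the row norms from the diagonal of $\vc{M}\vc{M}^* = (\widetilde{\vc{A}}_S^* \widetilde{\vc{A}}_S)^{-1}$, whereas the paper bounds them directly by the operator norm of $\widetilde{\vc{A}}_S(\widetilde{\vc{A}}_S^* \widetilde{\vc{A}}_S)^{-1}$; both yield the same constant.
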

\begin{proof}
For $\varTheta \in S$ we write $\vc{v}_{\varTheta}$ to denote the row of the matrix $(\widetilde{\vc{A}}_S^* \widetilde{\vc{A}}_S)^{-1} \widetilde{\vc{A}}_S^*$ corresponding to the index $\varTheta$.
An application of Lemma \ref{lem:inner_normal_bound} gives, since $\tilde\sigma^{-1} \widetilde{\vc{n}}$ is a standard complex Gaussian random vector,
\[
 \P_{{\vc{n}}} ( |\langle \vc{v}_{\varTheta} , \widetilde{\vc{n}} \rangle| \geq 2 \mu ) = \P_{{\vc{n}}} ( |\langle \vc{v}_{\varTheta} , {\tilde{\sigma}}^{-1} \widetilde{\vc{n}} \rangle| \geq 2 \sqrt{2\log (N)} ) \leq e^{- 8 \log (N) / \|\vc{v}_{\varTheta}\|_2^2 } .
\]
Since $\vc{v}_{\varTheta}$, $\varTheta \in S$, is a column of the matrix
$\vc{M} := \widetilde{\vc{A}}_S (\widetilde{\vc{A}}_S^* \widetilde{\vc{A}}_S)^{-1}$, the operator norm of $\vc{M}$ provides a
uniform bound for the $\ell_2$-norms of the rows $\vc{v}_{\varTheta}$.
By assumption $\opnorm{\widetilde{\vc{A}}_{S}^* \widetilde{\vc{A}}_{S} - \ID} \leq 1/2$, and, hence, 
\[
\|\vc{v}_{\varTheta}\|_2^2 \leq \opnorm{\vc{M}}^2 = \opnorm{\vc{M}^* \vc{M}} = \opnorm{(\widetilde{\vc{A}}_S^* \widetilde{\vc{A}}_S)^{-1}} \leq (1-1/2)^{-1} = 2.
\]
Combining this with the tail bound above yields
\[
 \P_{{\vc{n}}} ( |\langle \vc{v}_{\varTheta} , \widetilde{\vc{n}} \rangle| \geq 2 \mu ) \leq e^{- 8 \log (N) / 2 } = N^{-4} .
\]
An application of the union bound, using $|S| \leq N$, implies the assertion.
\end{proof}

\subsubsection*{Condition $\vc{(C_4)}$}
In order to deal with Condition $(C_4)$, we need an estimate in the $\ell_2$-norms of the columns of the \mbox{matrix $\widetilde{\vc{A}}$}.

\begin{lemma}\label{lem:l2_norm_columns}
The $\ell_2$-norms of the columns of the matrix $\widetilde{\vc{A}}$ satisfy
\begin{equation*}
 \P ( \max_{\varTheta \in \GG} \| \widetilde{\vc{A}}_\varTheta \|_2 \geq 2/ \sqrt{3} ) \leq N e^{-(2/\sqrt{3}-1)^2 N_t} .
\end{equation*}
\end{lemma}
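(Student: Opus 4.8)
The plan is to reduce the squared column norm $\|\widetilde{\vc{A}}_\varTheta\|_2^2$ to a normalized squared norm of a single standard complex Gaussian vector in $\C^{N_t}$, and then apply a concentration inequality for the norm of a Gaussian vector together with a union bound over the $N = |\GG|$ columns. The unusual threshold $2/\sqrt{3}$ strongly suggests that the per-column exponent is obtained from Gaussian (Lipschitz) concentration, so I would aim the reduction at exactly that.

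First I would exploit the block structure of $\vc{A}_\varTheta$. Fix $\varTheta = (\beta,\tau,f) \in \GG$ and write $\vc{A}_\varTheta^j = \ee{d_R \beta \Delta_\beta (j-1)} \vc{v}_\varTheta$ with $\vc{v}_\varTheta = \vc{M}_f \vc{T}_\tau \vc{w}$ and $\vc{w} := \sum_{i=1}^{N_T} \ee{d_T \beta \Delta_\beta (i-1)} \vc{s}_i$. Since the leading scalar $\ee{d_R \beta \Delta_\beta (j-1)}$ has modulus one, every block contributes the same norm, so $\|\vc{A}_\varTheta\|_2^2 = \sum_{j=1}^{N_R} \|\vc{A}_\varTheta^j\|_2^2 = N_R \|\vc{v}_\varTheta\|_2^2$. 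Because $\vc{M}_f$ (multiplication by unit-modulus numbers) and $\vc{T}_\tau$ (a cyclic shift) are unitary, one has $\|\vc{v}_\varTheta\|_2 = \|\vc{w}\|_2$.

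It then remains to identify the law of $\vc{w}$. Each entry $[\vc{w}]_k = \sum_{i=1}^{N_T} \ee{d_T \beta \Delta_\beta (i-1)} [\vc{s}_i]_k$ is a linear combination, with unit-modulus coefficients, of the independent standard complex Gaussians $[\vc{s}_1]_k, \dots, [\vc{s}_{N_T}]_k$, hence a circularly symmetric complex Gaussian of variance $\sum_{i=1}^{N_T} 1 = N_T$; moreover these entries are independent across $k$. Thus $\vc{w}$ has the same distribution as $\sqrt{N_T}\, \vc{g}$ for a standard complex Gaussian $\vc{g} \in \C^{N_t}$, giving $\|\vc{A}_\varTheta\|_2^2 = N_R N_T \|\vc{g}\|_2^2$. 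After the normalization by $1/\sqrt{N_T N_R N_t}$ this collapses to $\|\widetilde{\vc{A}}_\varTheta\|_2 = \|\vc{g}\|_2 / \sqrt{N_t}$; in particular $\E \|\widetilde{\vc{A}}_\varTheta\|_2^2 = 1$.

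For the concentration step I would write $\vc{g} = \tfrac{1}{\sqrt{2}} \vc{h}$ with $\vc{h}$ a standard real Gaussian vector in $\R^{2N_t}$, so that $\vc{h} \mapsto \|\vc{g}\|_2 = \tfrac{1}{\sqrt{2}} \|\vc{h}\|_2$ is $\tfrac{1}{\sqrt{2}}$-Lipschitz and $\E \|\vc{g}\|_2 \leq \sqrt{\E\|\vc{g}\|_2^2} = \sqrt{N_t}$. Gaussian concentration for Lipschitz functions then gives $\P(\|\vc{g}\|_2 \geq \sqrt{N_t} + u) \leq e^{-u^2}$ for $u \geq 0$. Taking $u = (2/\sqrt{3} - 1)\sqrt{N_t}$, so that $\sqrt{N_t} + u = (2/\sqrt{3})\sqrt{N_t}$, yields $\P(\|\widetilde{\vc{A}}_\varTheta\|_2 \geq 2/\sqrt{3}) = \P(\|\vc{g}\|_2 \geq (2/\sqrt{3})\sqrt{N_t}) \leq e^{-(2/\sqrt{3} - 1)^2 N_t}$ for a single column, and a union bound over the $N$ columns finishes the proof. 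The only genuine obstacle is the bookkeeping of constants: one must check that the factor $1/\sqrt{N_T N_R N_t}$ exactly cancels the $N_R N_T$ produced by the block sum and the Gaussian variance, so that the per-column expectation is $1$; once this is verified the Lipschitz constant is forced, and the value $2/\sqrt{3}$ is precisely the choice making the exponent $u^2 = (2/\sqrt{3} - 1)^2 N_t$ come out cleanly.
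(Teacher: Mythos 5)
Your proposal is correct and follows essentially the same route as the paper: reduce $\|\widetilde{\vc{A}}_\varTheta\|_2$ to $\|\vc{g}\|_2/\sqrt{N_t}$ for a standard complex Gaussian $\vc{g}\in\C^{N_t}$ (the paper does this by writing $\vc{z}=\frac{1}{\sqrt{N_T}}\sum_i \ee{d_T\beta\Delta_\beta(i-1)}\vc{s}_i$, which is again standard complex Gaussian), then apply the Gaussian norm tail bound of Lemma~\ref{lem:gaus_vec_tail} with $t=(2/\sqrt{3}-1)\sqrt{N_t}$ and a union bound over the $N$ columns. Your version merely spells out the block-structure and unitarity steps that the paper leaves implicit.
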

\begin{proof}
It follows from the definition of the columns $\widetilde{\vc{A}}$ in \eqref{eqn:columns} that $\| \widetilde{\vc{A}}_\varTheta \|_2$,
$\Theta \in \GG$, is given by $\frac{1}{\sqrt{N_t}} \|\vc{z}\|_2$ with $\vc{z} = \frac{1}{\sqrt{N_T}} \sum_{i=1}^{N_T} \ee{d_T \beta \Delta_\beta (i-1)} \vc{s}_i$, where the $\vc{s}_i$ are independent standard complex Gaussian vectors.
Thus, also $\vc{z}$ is an $N_t$-dimensional standard complex Gaussian random vector so that 
Lemma~\ref{lem:gaus_vec_tail} yields
\[
 \P ( \| \widetilde{\vc{A}}_\varTheta \|_2 \geq 2/ \sqrt{3} ) = \P (  \| \vc{z} \|_2 \geq \sqrt{N_t} + (2/ \sqrt{3}-1) \sqrt{N_t} ) \leq e^{-(2/ \sqrt{3}-1)^2 N_t} .
\]
The assertion follows by applying the union bound.
\end{proof}

Under the assumption that the columns are bounded in the $\ell_2$-norm, we are able to provide an estimate 
for the probability that condition $(C_4)$ fails to hold.

\begin{lemma}\label{lem:a4}
Let $\widetilde{\vc{n}}$ be a mean-zero Gaussian vector with covariance matrix $\tilde\sigma^2 \vc{\operatorname{Id}}$ and
$\mu = \tilde{\sigma} \sqrt{2 \log(N)}$. If $\| \widetilde{\vc{A}}_\varTheta \|_2 \leq 2/ \sqrt{3}$ for all $\varTheta \in S^c$, then
\[
 \P_{{\vc{n}}} ( \| \widetilde{\vc{A}}^*_{S^c} (\ID - \Pi_S) \widetilde{\vc{n}} \|_\infty \geq \sqrt{2} \mu ) \leq N^{-2} .
\]
\end{lemma}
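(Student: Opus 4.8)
The plan is to exploit that the only randomness relevant to this bound lies in the noise vector $\widetilde{\vc{n}}$: the hypothesis $\|\widetilde{\vc{A}}_\varTheta\|_2 \leq 2/\sqrt{3}$ is a deterministic condition on the (fixed) matrix, and $\P_{\vc{n}}$ denotes probability over the Gaussian noise alone. Hence $\Pi_S$, the orthogonal projection onto the column span of $\widetilde{\vc{A}}_S$, is a fixed self-adjoint idempotent operator with $\opnorm{\ID-\Pi_S}\leq 1$, and I would treat each coordinate of $\widetilde{\vc{A}}^*_{S^c}(\ID-\Pi_S)\widetilde{\vc{n}}$ separately, following closely the scheme used for Condition $(C_3)$ in Lemma~\ref{lem:a3}.

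First I would observe that, for $\varTheta\in S^c$, the $\varTheta$-th entry of $\widetilde{\vc{A}}^*_{S^c}(\ID-\Pi_S)\widetilde{\vc{n}}$ equals $\langle \widetilde{\vc{A}}_\varTheta, (\ID-\Pi_S)\widetilde{\vc{n}}\rangle$. Since $\ID-\Pi_S$ is self-adjoint, this rewrites as $\langle \vc{v}_\varTheta, \widetilde{\vc{n}}\rangle$ with $\vc{v}_\varTheta := (\ID-\Pi_S)\widetilde{\vc{A}}_\varTheta$. Because $\ID-\Pi_S$ is an orthogonal projection it is norm-nonincreasing, so the hypothesis gives $\|\vc{v}_\varTheta\|_2 \leq \|\widetilde{\vc{A}}_\varTheta\|_2 \leq 2/\sqrt{3}$, i.e. $\|\vc{v}_\varTheta\|_2^2 \leq 4/3$.

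Next I would apply the Gaussian tail bound of Lemma~\ref{lem:inner_normal_bound} to the standard complex Gaussian vector $\tilde{\sigma}^{-1}\widetilde{\vc{n}}$. Writing $\sqrt{2}\mu = \tilde\sigma\cdot 2\sqrt{\log N}$ and rescaling inside the inner product, one obtains
\[
\P_{\vc{n}}(|\langle\vc{v}_\varTheta,\widetilde{\vc{n}}\rangle|\geq\sqrt{2}\mu) = \P_{\vc{n}}(|\langle\vc{v}_\varTheta,\tilde\sigma^{-1}\widetilde{\vc{n}}\rangle|\geq 2\sqrt{\log N}) \leq e^{-4\log N/\|\vc{v}_\varTheta\|_2^2}\leq e^{-4\log N/(4/3)} = N^{-3},
\]
where the last step uses $\|\vc{v}_\varTheta\|_2^2\leq 4/3$. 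A union bound over the at most $N$ indices $\varTheta\in S^c$ then yields the claimed probability $N\cdot N^{-3}=N^{-2}$.

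I do not expect a genuine obstacle. The only point requiring care is the correct bookkeeping of the two sources of randomness --- that the statement is conditional on the fixed matrix, so that $\vc{v}_\varTheta$ is deterministic and independent of $\widetilde{\vc{n}}$ --- together with the observation that replacing $\widetilde{\vc{A}}_\varTheta$ by its projection $(\ID-\Pi_S)\widetilde{\vc{A}}_\varTheta$ can only shrink the $\ell_2$-norm. It is precisely this shrinkage, via the bound $\|\vc{v}_\varTheta\|_2^2\leq 4/3$, that makes the exponent $-3\log N$ favorable enough to absorb the factor $N$ from the union bound.
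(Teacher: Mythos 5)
Your proof is correct and follows essentially the same route as the paper: bound $\|\vc{v}_\varTheta\|_2 = \|(\ID-\Pi_S)\widetilde{\vc{A}}_\varTheta\|_2 \leq 2/\sqrt{3}$ using that the projection is a contraction, apply the complex Gaussian tail bound (Lemma~\ref{lem:inner_normal_bound}) to get $e^{-4\log N/(4/3)} = N^{-3}$ per coordinate, and finish with a union bound over at most $N$ indices. The arithmetic and bookkeeping all check out.
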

\begin{proof}
Note that the $\ell_2$-norms of the rows of the matrix $\widetilde{\vc{A}}^*_{S^c} (\ID - \Pi_S)$ are uniformly bounded by
\[
 \max_{\varTheta \in S^c} \| (\ID - \Pi_S) \widetilde{\vc{A}}_\varTheta \|_2 \leq \underbrace{\opnorm{\ID - \Pi_S}}_{\leq 1} \max_{\varTheta \in S^c} \| \widetilde{\vc{A}}_\varTheta \|_2 \leq 2/ \sqrt{3} .
\]
Using the same arguments as in the proof of Lemma \ref{lem:a3} (now with $\vc{v}_{\varTheta}$, $\varTheta \in S^c$, denoting the columns of the matrix $(\ID - \Pi_S) \widetilde{\vc{A}}_\varTheta$), we obtain
\[
 \P_{{\vc{n}}} ( |\langle \vc{v}_{\varTheta} , \widetilde{\vc{n}} \rangle| \geq \sqrt{2} \mu ) \leq e^{-4 \log(N) / \| \vc{v}_{\varTheta} \|_2^2} \leq e^{-3 \log(N)} = N^{-3} ,
\]
where, in the last step, we used the bound on the norm $\| \vc{v}_{\varTheta} \|_2$ from above.
An application of the union bound, using that $|S^c| \leq N$, implies the assertion.
\end{proof}

\subsubsection{Proof of Theorem \ref{thm:nonuniform_result}\texorpdfstring{\color{red}(a)}{(a)}}\label{sec:proof_nonuniform_result_a}
Now we are prepared to finalize the proof of Theorem \ref{thm:nonuniform_result}.
In view of Lemma~\ref{lem:candes}, we only have to verify that Conditions $(C_1)$--$(C_5)$ are fulfilled with probability exceeding $1 - 7 \max \{\varepsilon , N^{-2} \}$ in order to show \mbox{part (a)} of Theorem \ref{thm:nonuniform_result}.
As we have already noted, Condition $(C_1)$ holds, if $\opnorm{\widetilde{\vc{A}}_{S}^* \widetilde{\vc{A}}_{S} - \ID} \leq \delta$, for some $\delta \in (0,1/2 ]$.
Thus, Corollary \ref{cor:event1} can be used to show that $(C_1)$ is fulfilled with high probability.
Moreover, a careful inspection of Corollaries \ref{cor:event1}, \ref{cor:event2} and Lemmas \ref{lem:a2}--\ref{lem:a4} reveals that the probability of the assertion of Theorem \ref{thm:nonuniform_result} \textit{failing} to hold true is bounded by the probability
$
 \P ( \neg \AA_{\delta , u} ) + \sum_{k=2}^5 \P ( \neg \CC_k | \AA_{\delta , u} ),
$
where each $\CC_k$ stands for the event when Condition $(C_k)$ is fulfilled, and the event $\AA_{\delta , u}$ is defined as
\begin{equation}\label{eqn:AA_delta_u}
 \AA_{\delta , u} = \underbrace{\big\{ \opnorm{\widetilde{\vc{A}}_S^* \widetilde{\vc{A}}_S - \ID} < \delta \big\}}_{=\CC_1^\delta} \cap \underbrace{\big\{ \max_{\varTheta \neq \varTheta'} | \langle \widetilde{\vc{A}}_{\varTheta} , \widetilde{\vc{A}}_{\varTheta'} \rangle | \leq u \big\}}_{=:\EE_1^u} \cap \underbrace{\big\{ \max_{\varTheta \in \GG} \| \widetilde{\vc{A}}_\varTheta \|_2 \leq 2/ \sqrt{3} \big\}}_{=:\EE_2} .
\end{equation}
Now, Corollaries \ref{cor:event1}, \ref{cor:event2} and Lemma \ref{lem:l2_norm_columns} imply
\begin{align}
 \P ( \neg \AA_{\delta , u} )
 &   \leq \P ( \neg \CC_1^\delta ) + \P ( \neg \EE_1^u ) + \P ( \neg \EE_2 ) \notag   \\
 &   \leq 7 \eta |S| \exp \left( -\frac{\delta \sqrt{N_R N_t}}{8 \sqrt{\eta |S|}} \right) + \frac{7N^2}{N_R} \exp \left( -\frac{u\sqrt{N_t}}{8\sqrt{2}} \right) + N e^{(2/ \sqrt{3} - 1)^2 N_t} . \label{eqn:prob_summands1}
\end{align}
Due Lemmas \ref{lem:a2}, \ref{lem:a5}, \ref{lem:a3}, and \ref{lem:a4} we have the following bounds on the failure probability of Conditions $(C_2)$, $(C_5)$, $(C_3)$, and $(C_4)$ --- conditional on the event $\AA_{\delta , u}$:
\begin{align}
 \P ( \neg \CC_2 | \AA_{\delta , u} ) &\leq 2N e^{-(1-\delta)^2 N_R / 32u^2 \eta |S|} ,
  & \P ( \neg \CC_5 | \AA_{\delta , u} ) &\leq  2 |S| e^{-2(1-\delta)^2 / \delta^2} , \label{eqn:prob_summands2}  \\
 \P ( \neg \CC_3 | \AA_{\delta , u} ) &\leq N^{-3} ,
  & \P ( \neg \CC_4 | \AA_{\delta , u} ) &\leq N^{-2} . \label{eqn:prob_summands3}
\end{align}
Since we claim that the assertion of Theorem \ref{thm:nonuniform_result} holds true with probability exceeding $1 - 7 \max \{ \varepsilon , N^{-2} \}$, it is enough to verify that each of the seven terms in \eqref{eqn:prob_summands1}, \eqref{eqn:prob_summands2}, \eqref{eqn:prob_summands3} is either bounded by $\varepsilon$ or by $N^{-2}$.
Since this holds trivially for the terms in \eqref{eqn:prob_summands3}, we are left with the terms in \eqref{eqn:prob_summands1} and \eqref{eqn:prob_summands2} which, as we will see, can all be bounded by $\varepsilon$.
Due to the definition of the parameter $\eta$ (cf. Definition \ref{def:balanced}) it holds (note that each equivalence class contains $N/N_R$ elements)
\begin{equation*}
 \eta = \max_{[\beta]} \frac{|S_{[\beta]}|}{|S|/N_R} \leq \frac{N/N_R}{|S|/N_R} = \frac{N}{|S|} .
\end{equation*}
Thus, $\eta |S| \leq N$, which yields that in order to verify that the first term in \eqref{eqn:prob_summands1} is smaller or equal to $\varepsilon$ it is sufficient to show that
\begin{equation}\label{eqn:scaling_1}
\frac{64 \eta |S|}{\delta^2} \log^2 \bigg( \frac{7N}{\varepsilon} \bigg) \leq N_R N_t,
\end{equation}
whereas for the second term in \eqref{eqn:prob_summands1} to be smaller or equal to $\varepsilon$ it suffices that
\begin{equation}\label{eqn:scaling_2}
\frac{128 N_R}{u^2} \log^2 \bigg( \frac{7 N^2}{\varepsilon N_R} \bigg) \leq N_R N_t .
\end{equation}
The term $\log ( 7 N^2 / \varepsilon N_R )$ is dominated by $2 \log ( 7 N / \varepsilon )$ and, therefore, we can can set 
\begin{equation}\label{eqn:u}
 u := \delta \sqrt{8 N_R} / \sqrt{\eta |S|}
\end{equation}
in order to ensure that \eqref{eqn:scaling_2} is already implied by \eqref{eqn:scaling_1}. 
In order to show \eqref{eqn:scaling_1}, we set
\begin{equation*}
 \delta = \frac{1}{\sqrt{K \log (eN/\varepsilon)}} ,
\end{equation*}
where $K>0$ will be chosen below.
With this the left-hand side of \eqref{eqn:scaling_1} can be estimated as
\[
 \frac{64 \eta |S|}{\delta^2} \log^2 \bigg( \frac{7N}{\varepsilon} \bigg) = 64K \eta |S| \log \bigg( \frac{e N}{\varepsilon} \bigg) \log^2 \bigg( \frac{7N}{\varepsilon} \bigg) \leq 64K \log^3 (7) \eta |S| \log^3 \bigg( \frac{N}{\varepsilon} \bigg) .
\]
Due to the assumption \eqref{eqn:nonuniform_scaling} we have
\[
 N_R N_t \geq C \eta |S| \log^3 \bigg( \frac{N}{\varepsilon} \bigg) ,
\]
where $C > 0$ is a suitable constant. 
By setting $C = 64K\log^3 (7)$, we establish \eqref{eqn:scaling_1} (and, hence, also \eqref{eqn:scaling_2}) which means that both the first and the second term in \eqref{eqn:prob_summands1} are bounded by $\varepsilon$.
In order to ensure that the last term $N e^{-(2/\sqrt{3}-1)^2 N_t}$ in \eqref{eqn:prob_summands1} is bounded by the second one (and, hence, is also smaller or equal to $\varepsilon$), it suffices to have $(2/\sqrt{3}-1)^2 N_t \geq u \sqrt{N_t}/8 \sqrt{2}$ which, recalling the definition of $u$, is equivalent to
\[
 \delta \leq 4 (2/\sqrt{3}-1)^2 \sqrt{\frac{\eta |S| N_t}{N_R}} .
\]
Since by definition of the parameter $\eta$ we have $\eta |S| / N_R \geq |S_{[\beta]}| \geq 1$ and, moreover, $N_t \geq 1$, anyway, this latter condition is fulfilled as long as $\delta \leq 4 (2/\sqrt{3}-1)^2$ which is fulfilled whenever $K$ is sufficiently large.
Since $\log ( eN / \varepsilon) \geq 1$, and, hence, $\delta \leq 1 / \sqrt{K}$, we have for the first term in \eqref{eqn:prob_summands2}, recalling that $u^2 = 8 \delta^2 N_R / \eta |S|$,
\begin{align*}
 & 2 N \exp \left( -\frac{[1-\delta]^2 N_R}{32 u^2 \eta |S|} \right)
  = 2 N \exp \left( -\frac{[1-\delta]^2}{256 \delta^2} \right)  \\
  & = 2 N \exp \left( -\frac{1}{256} \left[1-\frac{1}{\sqrt{K \log (eN/\varepsilon)}}\right]^2 K \log \left( \frac{eN}{\varepsilon} \right) \right)  \\
  &\leq 2 N \exp \left( -\frac{1}{256} \left[1 - \frac{1}{\sqrt{K}} \right]^2 K \log \left( \frac{2N}{\varepsilon} \right) \right)  
  = 2 N \exp \left( -\frac{1}{256} \left[\sqrt{K}-1 \right]^2 \log \left( \frac{2N}{\varepsilon} \right) \right) .
\end{align*}
Therefore, in order to show that the first term in \eqref{eqn:prob_summands2} is bounded by $\varepsilon$, it is sufficient to ensure that $(\sqrt{K}-1)^2 \geq 256 = 16^2$ which holds true if $\sqrt{K} \geq 17$.
The second term in \eqref{eqn:prob_summands2} is bounded by the first one.
Finally, the condition $\delta \leq 1/2$ is satisfied since $\delta \leq 1 / \sqrt{K} \leq 1/17$.
This finishes the proof Theorem \ref{thm:nonuniform_result}{\color{red}(a)}.\qed

\subsubsection{Proof of Theorem \ref{thm:nonuniform_result}\texorpdfstring{\color{red}(b)}{(b)}}\label{sec:proof_nonuniform_result_b}
For part (b) it remains to derive the stated approximation property of the solution to the debiased LASSO with respect to $\vc{x}_S$.
From the proof of part (a) it follows that, with high probability (as stated in the theorem), the columns of the matrix $\vc{A}_S$ are linearly independent.
Hence, the solution $\vc{z}^\#$ to the least squares problem \eqref{eqn:debiasing} is given by $\vc{z}^\# = (\vc{A}_S^* \vc{A}_S)^{-1} \vc{A}_S^* \vc{y}$.
Indeed, this formula can easily be verified by calculating the corresponding optimality conditions.
Since $\vc{x}$ is exactly sparse and supported on the set $S$ it holds $\vc{y} = \vc{A}_S \vc{x}_S + \vc{n}$, i.e.,
\[
 \vc{z}^\# = \vc{x}_S + (\vc{A}_S^* \vc{A}_S)^{-1} \vc{A}_S^* \vc{n} .
\]
Now, by a standard inequality between the $\ell_2$ and the $\ell_\infty$-norm,
\begin{equation}\label{eqn:approx_debiase}
 \| \vc{z}^\# - \vc{x}_S \|_2 = \| (\vc{A}_S^* \vc{A}_S)^{-1} \vc{A}_S^* \vc{n} \|_2 \leq \sqrt{s} \| (\vc{A}_S^* \vc{A}_S)^{-1} \vc{A}_S^* \vc{n} \|_\infty .
\end{equation}
Since we may assume that we are in the event where condition ($\text{C}_3$) holds true, we have the estimate
\[
 \| (\vc{A}_S^* \vc{A}_S)^{-1} \vc{A}_S^* \vc{n} \|_\infty = \| (\widetilde{\vc{A}}_S^* \widetilde{\vc{A}}_S)^{-1} \widetilde{\vc{A}}_S^* \widetilde{\vc{n}} \|_\infty \leq 2 \mu = 2 \tilde{\sigma} \sqrt{2\log (N)} = \frac{2 \sigma}{\sqrt{N_T N_R N_t}} \sqrt{2\log (N)} ,
\]
where, in the second but last step, we plugged in the definition of $\mu$ from \eqref{eqn:scaled_lasso}.
A combination of this latter inequality with \eqref{eqn:approx_debiase} implies the desired approximation property.\qed

\subsection{Recovery via basis pursuit denoising under sparsity defect}\label{sec:sparsity_defect_result}
We replace the original constraint $\| \vc{A}\vc{x} - \vc{y} \|_2 \leq \varrho$ of the basis pursuit denoising program \eqref{eqn:basis_pursuit} by the equivalent constraint $\|\widetilde{\vc{A}} \vc{x} - \widetilde{\vc{y}}\|_2 \leq \tilde \varrho$, where $\tilde\varrho = \varrho / \sqrt{N_T N_R N_t}$.
For the proof of Theorem \ref{thm:nonuniform_result_sparsity_defect} we use a general recovery result for basis pursuit denoising stated as \mbox{Theorem \ref{thm:bp_recovery_via_inexact_dual}} below.
The original result \cite[Thm. 4.33]{Foucart2013} uses an inexact dual certificate which in our case can actually be taken as the canonical exact dual certificate given by
\[
 \widetilde{\vc{h}} = \widetilde{\vc{A}}_S (\widetilde{\vc{A}}_S^* \widetilde{\vc{A}}_S)^{-1} \sgn (\vc{x}_S) .
\]
Hence, the following Theorem \ref{thm:bp_recovery_via_inexact_dual}, as it is formulated here, is a special case of \cite[Thm. 4.33]{Foucart2013}, see also \cite{hürast14}.
The constants $C_1$ and $C_2$ appearing in \eqref{eqn:approx_prop} below can be calculated explicitly using the corresponding formulas in \mbox{\cite[Thm. 4.33]{Foucart2013}}.
However, we refrain from doing so at this point.

\begin{theorem}[\mbox{cf. \cite[Thm. 4.33]{Foucart2013}}]\label{thm:bp_recovery_via_inexact_dual}
 Let $\vc{x} \in \C^N$ be given with $s$ largest absolute entries supported on $S \subset [N]$.
 Assume that measurements $\widetilde{\vc{y}}=\widetilde{\vc{A}}\vc{x} + \widetilde{\vc{n}} \in \C^m$ are given with $m<N$ and $\|\widetilde{\vc{n}}\|_2 \leq \tilde\varrho$.
 If
 \[
  \opnorm{\widetilde{\vc{A}}_S^* \widetilde{\vc{A}}_S - \ID} \leq \delta < 1 , \qquad \| \widetilde{\vc{A}}_{S^c}^* \widetilde{\vc{A}}_S (\widetilde{\vc{A}}_S^* \widetilde{\vc{A}}_S)^{-1} \sgn (\vc{x}_S) \|_\infty < 1 ,
 \]
 and, moreover, for $\alpha_1 , \alpha_2 \geq 0$,
 \begin{equation}\label{eqn:inexact_dual_assumption}
  \max_{\ell \in S^c} \|\widetilde{\vc{A}}_S^* \widetilde{\vc{A}}_\ell \|_2 \leq \alpha_1 , \qquad \| \widetilde{\vc{A}}_S (\widetilde{\vc{A}}_S^* \widetilde{\vc{A}}_S)^{-1} \sgn (\vc{x}_S) \|_2 \leq \alpha_2 \sqrt{s} ,
 \end{equation}
 then each solution $\vc{x}^\#$ to the basis pursuit denoising program
 \[
  \min_{\vc{z}} \| \vc{z} \|_1 \quad \text{subject to} \quad \| \widetilde{\vc{A}}\vc{z} - \widetilde{\vc{y}} \|_2 \leq \tilde\varrho
 \]
 satisfies
 \begin{equation}\label{eqn:approx_prop}
  \|\vc{x}^\# - \vc{x} \|_2 \leq C_1 \inf_{\text{$s$-sparse $\vc{x}'$}} \| \vc{x}' - \vc{x} \|_1 + C_2 \sqrt{s} \tilde\varrho ,
 \end{equation}
 for some constants $C_1$ and $C_2$ depending only on $\delta , \alpha_1 , \alpha_2$.
\end{theorem}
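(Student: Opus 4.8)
The plan is to derive Theorem~\ref{thm:bp_recovery_via_inexact_dual} as a direct specialization of the robust recovery result \cite[Thm.~4.33]{Foucart2013}. That theorem guarantees the stability estimate \eqref{eqn:approx_prop} whenever the support block $\widetilde{\vc{A}}_S$ is well conditioned and there exists an \emph{inexact dual certificate} $\vc{u} = \widetilde{\vc{A}}^* \widetilde{\vc{h}}$ with controlled deviation of $\vc{u}_S$ from $\sgn(\vc{x}_S)$, a strictly dominated off-support part $\|\vc{u}_{S^c}\|_\infty$, and a bounded pre-image norm $\|\widetilde{\vc{h}}\|_2$. My strategy is to feed the \emph{canonical exact} certificate $\widetilde{\vc{h}} = \widetilde{\vc{A}}_S (\widetilde{\vc{A}}_S^* \widetilde{\vc{A}}_S)^{-1}\sgn(\vc{x}_S)$ into this framework and to check that each hypothesis of \cite[Thm.~4.33]{Foucart2013} is met, with the free parameters read off from $\delta$, $\alpha_1$, and $\alpha_2$.

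First I would note that the standing assumption $\opnorm{\widetilde{\vc{A}}_S^* \widetilde{\vc{A}}_S - \ID} \le \delta < 1$ makes $\widetilde{\vc{A}}_S^* \widetilde{\vc{A}}_S$ invertible, so $\widetilde{\vc{h}}$ and $\vc{u}$ are well defined and the conditioning requirement of the cited theorem holds. Next I would exploit that the certificate is exact on the support:
\[
 \vc{u}_S = \widetilde{\vc{A}}_S^* \widetilde{\vc{h}} = \widetilde{\vc{A}}_S^* \widetilde{\vc{A}}_S (\widetilde{\vc{A}}_S^* \widetilde{\vc{A}}_S)^{-1}\sgn(\vc{x}_S) = \sgn(\vc{x}_S),
\]
so that the on-support deviation is identically zero and the corresponding error term in \eqref{eqn:approx_prop} simply disappears. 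On the complement, $\vc{u}_{S^c} = \widetilde{\vc{A}}_{S^c}^* \widetilde{\vc{A}}_S (\widetilde{\vc{A}}_S^* \widetilde{\vc{A}}_S)^{-1}\sgn(\vc{x}_S)$ has $\ell_\infty$-norm strictly below $1$ by the second standing hypothesis, which is exactly the off-support domination the certificate must provide, and $\|\widetilde{\vc{h}}\|_2 \le \alpha_2\sqrt{s}$ is the pre-image bound supplied by \eqref{eqn:inexact_dual_assumption}.

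It then remains to identify the only other quantity entering the constants of \cite[Thm.~4.33]{Foucart2013}, namely the maximal cross-correlation $\max_{\ell\in S^c}\|\widetilde{\vc{A}}_S^* \widetilde{\vc{A}}_\ell\|_2$ between off-support columns and the support block; this is precisely the bound $\alpha_1$ from \eqref{eqn:inexact_dual_assumption}. Having matched $\delta$, $\alpha_1$, $\alpha_2$ and the gap $\|\vc{u}_{S^c}\|_\infty < 1$, I would invoke \cite[Thm.~4.33]{Foucart2013} to conclude \eqref{eqn:approx_prop} for every minimizer $\vc{x}^\#$, with $C_1, C_2$ the explicit functions of $(\delta, \alpha_1, \alpha_2)$ recorded there, and with $\tilde\varrho$ playing the role of the measurement error budget.

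I do not anticipate a genuine obstacle: the argument is a reduction in which the canonical exact certificate is slotted into the inexact-certificate framework. The one point demanding care is the verification that this \emph{exact} certificate satisfies the \emph{inexact} hypotheses of \cite[Thm.~4.33]{Foucart2013} with the correct dictionary --- in particular that the relevant gap condition is the strict inequality $\|\vc{u}_{S^c}\|_\infty < 1$, that the pre-image norm enters only through $\alpha_2\sqrt{s}$, and that the mild failure of exact column normalization (the $S$-columns have norm close to $1$ by the conditioning assumption) is absorbed into the quantities $\delta, \alpha_1, \alpha_2$ rather than requiring a separate rescaling. The substantive probabilistic work --- establishing $\opnorm{\widetilde{\vc{A}}_S^* \widetilde{\vc{A}}_S - \ID} \le \delta$, the off-support domination, and the bounds $\alpha_1, \alpha_2$ with probability at least $1 - \varepsilon$ under the scaling \eqref{eqn:nonuniform_scaling2} --- lies outside this theorem and will be handled separately via Proposition~\ref{prop:opnorm_bound}.
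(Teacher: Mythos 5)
Your proposal is correct and follows essentially the same route as the paper, which likewise obtains the theorem as a special case of \cite[Thm.~4.33]{Foucart2013} by substituting the canonical exact dual certificate $\widetilde{\vc{h}} = \widetilde{\vc{A}}_S (\widetilde{\vc{A}}_S^* \widetilde{\vc{A}}_S)^{-1} \sgn (\vc{x}_S)$, so that the on-support deviation vanishes and the gap condition reduces to $\| \widetilde{\vc{A}}_{S^c}^* \widetilde{\vc{A}}_S (\widetilde{\vc{A}}_S^* \widetilde{\vc{A}}_S)^{-1} \sgn (\vc{x}_S) \|_\infty < 1$. Your identification of $\delta$, $\alpha_1$, $\alpha_2$ with the parameters of the cited theorem matches the paper's (terser) justification.
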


\begin{remark} The error estimate \eqref{eqn:approx_prop} is slightly weaker as the one that holds under the RIP, see \eqref{eqn:approx}. In fact,  \eqref{eqn:approx_prop} features the $\ell_2$-norm on the right hand side, but
\eqref{eqn:approx} states an error estimate in the $\ell_1$-norm. This seems the price one has to pay when not working
with strong recovery conditions such as the RIP.
\end{remark}

\goodbreak
\subsubsection{Proof of Theorem \ref{thm:nonuniform_result_sparsity_defect}}\label{sec:proof_nonuniform_result_sparsity_defect}
In order to prove Theorem \ref{thm:nonuniform_result_sparsity_defect} we show that the assumptions of 
Theorem~\ref{thm:bp_recovery_via_inexact_dual} are fulfilled with high probability.
This can be achieved by simply repeating some of the arguments we already used during the proof of 
Theorem~\ref{thm:nonuniform_result}.
Indeed, recalling the event $\AA_{\delta , u}$ (see \eqref{eqn:AA_delta_u}) from the proof of Theorem \ref{thm:nonuniform_result}, where we now choose $\delta = 1/2$ and $u$ as in \eqref{eqn:u}, then it holds (for details see the proof of Theorem \ref{thm:nonuniform_result_sparsity_defect})
\[
 \AA_{\delta , u} \subset \{ \opnorm{\widetilde{\vc{A}}_S^* \widetilde{\vc{A}}_S - \ID} \leq 1/2 \} \cap \{ \max_{\varTheta \in \GG} \| \widetilde{\vc{A}}_\varTheta \|_2 \leq 2/ \sqrt{3} \} .
\]
This means that, in case of the event $\AA_{\delta ,u}$ it holds
\[
 \max_{\varTheta \in S^c} \|\vc{A}_S^* \vc{A}_\varTheta \|_2 \leq \opnorm{\vc{A}_S^*} \max_{\varTheta \in S^c} \| \vc{A}_\varTheta \|_2 \leq 2 \sqrt{1+1/2} / \sqrt{3} = \sqrt{2} ,
\]
and, moreover,
\[
 \| \vc{A}_S (\vc{A}_S^* \vc{A}_S)^{-1} \sgn (\vc{x}_S) \|_2 \leq \opnorm{\vc{A}_S} \opnorm{(\vc{A}_S^* \vc{A}_S)^{-1}} \| \sgn (\vc{x}_S) \|_2 \leq \frac{\sqrt{1+\delta}}{1-\delta} \sqrt{s} = \sqrt{6} \sqrt{s} .
\]
As a consequence, by setting $\alpha_1 = \sqrt{2}$ and $\alpha_2 = \sqrt{6}$ the assumptions in \eqref{eqn:inexact_dual_assumption} are automatically fulfilled, provided that the event $\AA_{\delta ,u}$ occurs.
Finally, as the proof of Theorem~\ref{thm:nonuniform_result} shows, it holds $\P ( \neg \AA_{\delta , u} ) \leq 3\varepsilon$ and
(by Lemma~\ref{lem:a2})
\[
 \P ( \| \widetilde{\vc{A}}_{S^c}^* \widetilde{\vc{A}}_S (\widetilde{\vc{A}}_S^* \widetilde{\vc{A}}_S)^{-1} \sgn (\vc{x}_S) \|_\infty \geq 1/4 \mid \AA_{\delta , u} ) \leq \varepsilon .
\]
This is also a consequence of the special choice of $u$, depending on $\delta$, and the condition \eqref{eqn:nonuniform_scaling2} on the number of targets $s$ from the assumptions of Theorem \ref{thm:nonuniform_result_sparsity_defect}.
Thus, using the union bound, the probability that the assumptions of Theorem~\ref{thm:bp_recovery_via_inexact_dual} \textit{fail} to hold are bounded by
\[
 \P ( \neg \AA_{\delta ,u} ) + \P ( \| \widetilde{\vc{A}}_{S^c}^* \widetilde{\vc{A}}_S (\widetilde{\vc{A}}_S^* \widetilde{\vc{A}}_S)^{-1} \sgn (\vc{x}_S) \|_\infty \geq 1/4 \mid \AA_{\delta , u} ) \leq 3\varepsilon+ \varepsilon = 4\varepsilon. \hfill\qed
\]

\subsection{Proof of Proposition \ref{prop:opnorm_bound}}\label{apdx:proof_prop}
We now show Proposition~\ref{prop:opnorm_bound} concerning the well-conditionedness of the submatrix 
$\widetilde{\vc{A}}_{S_{[\beta]}}$.
In what follows we write $s_{(i,a)}$ for the $a$-th entry of the signal vector $\vc{s}_i$.
In order to analyze the quantity $\|\widetilde{\vc{A}}_{S_{[\beta]}}^* \widetilde{\vc{A}}_{S_{[\beta]}} - \ID\|_{2\to 2}$ 
we write the matrix product $\widetilde{\vc{A}}_{S_{[\beta]}}^* \widetilde{\vc{A}}_{S_{[\beta]}}$ as 
(see Appendix \ref{apdx:section3}, \eqref{eqn:AA_repr_2} for a proof)
\begin{equation}\label{eqn:AA_repr}
 \widetilde{\vc{A}}_{S_{[\beta]}}^* \widetilde{\vc{A}}_{S_{[\beta]}} = \sum_{i,j = 1}^{N_T} \sum_{a,b = 1}^{N_t} \overline{{s}_{(i,a)}} {s}_{(j,b)} \vc{Y}^{(i,a),(j,b)},
\end{equation}
where $\vc{Y}^{(i,a),(j,b)}$ is an $|S_{[\beta]}| \times |S_{[\beta]}|$ matrix which, for indices $\varTheta , \varTheta' \in S_{[\beta]}$, $\varTheta = (\beta,\tau,f)$, $\varTheta^\prime = (\beta^\prime,\tau^\prime,f^\prime)$, is given entrywise by
\begin{equation}\label{eqn:def_B_matrix_in_sum}
 [\vc{Y}^{(i,a),(j,b)}]_{\varTheta,\varTheta'} = \delta^{\sim_{N_t}}_{a-b , \tau^\prime -\tau} (N_T N_t)^{-1} \ee{\frac{f'-f}{N_t} (\tau +a-1)} \ee{d_T \Delta_\beta [\beta^\prime (j-1)-\beta (i-1)]} ,
\end{equation}
and where $\delta^{\sim_{N_t}}_{\cdot , \cdot}$ is defined in \eqref{eqn:general_kronecker}.
By subdividing the sum in \eqref{eqn:AA_repr} into an off-diagonal  part containing the summands with $(i,a)\neq (j,b)$
and a remaining diagonal part, and then applying the triangle inequality, it follows that
\begin{align}
 & \big\| \widetilde{\vc{A}}_{S_{[\beta]}}^* \widetilde{\vc{A}}_{S_{[\beta]}} - \ID \big\|_{2\to 2} \notag\\
  &\leq  \bigg\| \bigg[ \sum_{i = 1}^{N_T} \sum_{a = 1}^{N_t} |{s}_{(i,a)}|^2 \vc{Y}^{(i,a),(i,a)} \bigg] - \ID \bigg\|_{2\to 2} + \bigg\| \mathop{\sum_{i,j=1}^{N_T} \sum_{a,b=1}^{N_t}}_{(i,a) \neq (j,b)} \overline{{s}_{(i,a)}} {s}_{(j,b)} \vc{Y}^{(i,a),(j,b)} \bigg\|_{2\to 2}  \notag \\
  &=: \opnorm{\vc{\YY}_{=} - \ID} + \opnorm{\vc{\YY}_{\neq}} .  \label{eqn:tail_triangle}
\end{align}

\subsubsection*{A tail bound for $\opnorm{\vc{\YY}_{=} - \ID}$}
By plugging in the definition of the entries $[\vc{Y}^{(i,a),(i,a)}]_{\varTheta,\varTheta^\prime}$ from \eqref{eqn:def_B_matrix_in_sum} one finds
\begin{equation*}
 \sum_{i=1}^{N_T} \sum_{a=1}^{N_t} [\vc{Y}^{(i,a),(i,a)}]_{\varTheta,\varTheta^\prime} = \delta_{\tau^\prime ,\tau} (N_T N_t)^{-1} \ee{\frac{f'-f}{N_t} \tau} \sum_{a=1}^{N_t} \ee{\frac{f'-f}{N_t} (a-1)} \sum_{i=1}^{N_T} \ee{d_T \Delta_\beta (\beta^\prime -\beta ) (i-1)} = \delta_{\varTheta^\prime ,\varTheta} ,
\end{equation*}
i.e., the sum over all matrices $\vc{Y}^{(i,a),(i,a)}$ is the identity matrix.
Therefore, we can write the matrix $\vc{\YY}_{=} - \ID$ in \eqref{eqn:tail_triangle} as a sum over random matrices, namely
\begin{equation}\label{eqn_YYgleich}
 \vc{\YY}_{=} - \ID 
 = \sum_{i = 1}^{N_T} \sum_{a = 1}^{N_t} \widehat{\vc{Y}}^{(i,a)} , \qquad \widehat{\vc{Y}}^{(i,a)} = \chi_{(i,a)} \vc{Y}^{(i,a),(i,a)} , \qquad \chi_{(i,a)} = |{s}_{(i,a)}|^2 - 1 .
\end{equation}
The random matrices $\widehat{\vc{Y}}^{(i,a)}$ have 
block diagonal structure.
To see this, we partition the index set $S_{[\beta]}$ as 
\begin{equation*}
 S_{[\beta]} = \bigcup_{\tau \in [N_t]} S_{[\beta]}^\tau , \qquad S_{[\beta]}^\tau = \{ (\beta^\prime , \tau^\prime , f^\prime ) \in S_{[\beta]} \midcol \tau^\prime = \tau \} .
\end{equation*}
Recall that the matrix $\vc{Y}^{(i,a),(i,a)}$ is given entrywise by
\[
 [\vc{Y}^{(i,a),(i,a)}]_{\varTheta,\varTheta'} = \delta_{\tau , \tau^\prime} (N_T N_t)^{-1} \ee{\frac{f'-f}{N_t} (\tau + a -1)} \ee{d_T \Delta_\beta (\beta^\prime -\beta) (i-1)} .
\]
The appearance of the factor $\delta_{\tau , \tau^\prime}$ implies that the matrices $\vc{Y}^{(i,a),(i,a)}$ (and, hence, also 
$\widehat{\vc{Y}}^{(i,a)}$) have a block diagonal structure. In case of the matrices $\widehat{\vc{Y}}^{(i,a)}$ one obtains --- 
enumerating the indices $(\beta,\tau,f)\in S_{[\beta]}$ adequately ---
\begin{equation}\label{eqn:block_struct}
 \widehat{\vc{Y}}^{(i,a)} = \chi_{(i,a)} \left[ \begin{array}{llll}
                                \vc{\Lambda}_1^{(i,a)} & &  \\
                                 & \vc{\Lambda}_2^{(i,a)} & &  \\
                                 & & \ddots &  \\
                                 & & & \vc{\Lambda}_{N_t}^{(i,a)}
                               \end{array} \right] , \qquad \vc{\Lambda}_{\tau}^{(i,a)} = \vc{v}_{\tau}^{(i,a)} \big[ \vc{v}_{\tau}^{(i,a)} \big]^* ,
\end{equation}
where the vectors $\vc{v}_{\tau}^{(i,a)} \in \C^{|S_{[\beta]}^\tau |}$ are given entrywise as
\[
 \big[ \vc{v}_{\tau}^{(i,a)} \big]_{(\beta,\tau,f)} = (N_T N_t)^{-{1}/{2}} \ee{\frac{f}{N_t} (\tau +a-1)} \ee{d_T \Delta_\beta \beta (i-1)} , \qquad (\beta,\tau,f) \in S_{[\beta]}^\tau .
\]
Note, since for particular choices of $\tau \in [N_t]$ the index sets $S_{[\beta]}^\tau$ might be empty, it is possible that some blocks $\vc{\Lambda}_{\tau}^{(i,a)}$ appearing in \eqref{eqn:block_struct} might be zero-dimensional.
In order to analyze the random variables in \eqref{eqn_YYgleich}, we use the following result, taken from \cite{tr12}.
Below we write $\lambda_{\text{max}} (\cdot )$ to denote the maximal eigenvalue of a given (self-adjoint) matrix.

\begin{theorem}[cf. \mbox{\cite[Thm. 6.2]{tr12}}]\label{thm:tropp}
 Consider a finite sequence $\{ \vc{X}_k \}$ of independent, random, self-adjoint $d \times d$ matrices.
 Assume that
 \begin{equation}\label{eqn:tropp_moments}
  \E \vc{X}_k = 0 \quad \text{and} \quad \E \vc{X}_k^p \preccurlyeq \frac{p!}{2} R^{p-2} \vc{A}_k^2 , \quad \text{for } p = 2 , 3 , 4 , \ldots
 \end{equation}
 Compute the variance parameter
 \begin{equation}\label{eqn:tropp_variance}
  \sigma^2 = \big\| \sum_{k} \vc{A}_k^2 \big\|_{2\to 2} .
 \end{equation}
 Then the following chain of inequalities holds for all $t \geq 0$:
 \begin{equation}\label{eqn:tropp_tail_bounds}
  \P \bigg( \lambda_{\text{max}} \bigg( \sum_{k} \vc{X}_k \bigg) \geq t \bigg) \leq d \exp \bigg( - \frac{t^2 /2}{\sigma^2 + Rt} \bigg) \leq
  \begin{cases}
   d \exp (-t^2 / 4\sigma^2 ) , & \text{for $t\leq \sigma^2 / R$,} \\
   d \exp (-t / 4R ) , & \text{for $t\geq \sigma^2 / R$.}
  \end{cases}
 \end{equation}
\end{theorem}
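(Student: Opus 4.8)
The plan is to establish this matrix Bernstein inequality by the matrix Laplace transform (Chernoff) method. First I would reduce the tail probability to a trace exponential: since $\lambda_{\text{max}}$ is monotone under the scalar exponential and $e^{\theta \lambda_{\text{max}}(\vc{M})} = \lambda_{\text{max}}(e^{\theta \vc{M}}) \leq \operatorname{tr} e^{\theta \vc{M}}$ for any self-adjoint $\vc{M}$ and $\theta > 0$ (the last step because $e^{\theta \vc{M}} \succeq 0$ so its largest eigenvalue is dominated by its trace), Markov's inequality gives
\[
 \P \Big( \lambda_{\text{max}} \Big( \sum_k \vc{X}_k \Big) \geq t \Big) \leq e^{-\theta t}\, \E \operatorname{tr} \exp\Big( \theta \sum_k \vc{X}_k \Big), \qquad \theta > 0.
\]
The whole argument then rests on controlling the right-hand side for a good choice of $\theta$.

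The key analytic ingredient I would invoke is the subadditivity of the matrix cumulant generating function (a consequence of Lieb's concavity theorem), which for independent self-adjoint $\vc{X}_k$ yields
\[
 \E \operatorname{tr} \exp\Big( \theta \sum_k \vc{X}_k \Big) \leq \operatorname{tr} \exp\Big( \sum_k \log \E\, e^{\theta \vc{X}_k} \Big).
\]
Next I would bound each summand $\log \E e^{\theta \vc{X}_k}$ using the Bernstein moment hypothesis \eqref{eqn:tropp_moments}. Expanding the matrix exponential and using $\E \vc{X}_k = 0$ to annihilate the first-order term, then inserting $\E \vc{X}_k^p \preccurlyeq \frac{p!}{2} R^{p-2} \vc{A}_k^2$, I obtain, for $0 < \theta < 1/R$,
\[
 \E\, e^{\theta \vc{X}_k} \preccurlyeq \ID + \frac{\theta^2}{2}\Big( \sum_{p \geq 2} (\theta R)^{p-2} \Big) \vc{A}_k^2 = \ID + g(\theta)\, \vc{A}_k^2, \qquad g(\theta) := \frac{\theta^2 / 2}{1 - \theta R},
\]
where the geometric series converges precisely because $\theta R < 1$. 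Operator monotonicity of the logarithm, together with the commuting scalar inequality $\log(1+x) \leq x$ applied eigenvalue-wise, then yields $\log \E e^{\theta \vc{X}_k} \preccurlyeq g(\theta)\, \vc{A}_k^2$.

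Summing these bounds and using monotonicity of the trace exponential, together with $\operatorname{tr} e^{\vc{M}} \leq d\, e^{\lambda_{\text{max}}(\vc{M})}$ and the definition \eqref{eqn:tropp_variance} of $\sigma^2$, collapses the estimate to
\[
 \P \Big( \lambda_{\text{max}} \Big( \sum_k \vc{X}_k \Big) \geq t \Big) \leq d \exp\big( -\theta t + g(\theta)\, \sigma^2 \big).
\]
Finally I would insert the standard Bernstein choice $\theta = t / (\sigma^2 + R t) \in (0, 1/R)$; a short computation shows $-\theta t + g(\theta)\sigma^2 = -\tfrac{t^2/2}{\sigma^2 + Rt}$, which is exactly the first bound in \eqref{eqn:tropp_tail_bounds}. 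The two-sided chain then follows by elementary case analysis: for $t \leq \sigma^2/R$ one has $\sigma^2 + Rt \leq 2\sigma^2$, and for $t \geq \sigma^2/R$ one has $\sigma^2 + Rt \leq 2Rt$, giving the two simplified exponents $t^2/4\sigma^2$ and $t/4R$ respectively. The main obstacle --- and the only genuinely deep step --- is the Lieb/subadditivity estimate for the trace exponential; everything else is elementary scalar calculus lifted to matrices via operator monotonicity and the order relation $\preccurlyeq$. Since the result is quoted from \cite{tr12}, I would cite Lieb's theorem (or the Ahlswede--Winter / Tropp derivation) for that single step rather than reprove it.
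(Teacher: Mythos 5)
The paper does not prove this statement at all --- it is imported verbatim from \cite[Thm.~6.2]{tr12} --- and your argument is a faithful reconstruction of exactly the proof in that cited source: the matrix Laplace transform reduction, Lieb-based subadditivity of the matrix cumulant generating function, the geometric-series moment bound $\E\, e^{\theta \vc{X}_k} \preccurlyeq \ID + \frac{\theta^2/2}{1-\theta R}\, \vc{A}_k^2$ for $\theta R < 1$, and the Bernstein choice $\theta = t/(\sigma^2 + Rt)$. Your computations are correct (including the exponent identity $-\theta t + g(\theta)\sigma^2 = -\tfrac{t^2/2}{\sigma^2+Rt}$ and the two-case simplification), so the proposal is sound and takes essentially the same route as the reference the paper relies on.
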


In our case the matrices $\vc{X}_k$ are given by the matrices $\widehat{\vc{Y}}^{(i,a)}$ being defined in \eqref{eqn:block_struct}.
Clearly these matrices are self-adjoint and, moreover $\E \widehat{\vc{Y}}^{(i,a)} = 0$.
In order to provide for all further inequalities in \eqref{eqn:tropp_moments} we first calculate powers of the block matrices $\vc{\Lambda}_{\tau}^{(i,a)}$ in \eqref{eqn:block_struct}:
\begin{equation}\label{eqn:Lambda_powers}
 \big[\vc{\Lambda}_{\tau}^{(i,a)}\big]^p = \big\langle \vc{v}_{\tau}^{(i,a)} , \vc{v}_{\tau}^{(i,a)} \big\rangle^{p-2} \big[\vc{\Lambda}_{\tau}^{(i,a)}\big]^2
 = \left( \frac{|S_{[\beta]}^\tau|}{N_T N_t} \right)^{p-2} \big[\vc{\Lambda}_{\tau}^{(i,a)}\big]^2 .
\end{equation}
The random variables $\chi_{(i,a)}$ in \eqref{eqn:block_struct} are subexponential with 
\[
 \P ( |\chi_{(i,a)}| \geq t ) = \P \big( \big| |s_{(i,a)}|^2 - 1 \big| \geq t \big) \leq \P ( |s_{(i,a)}|^2 \geq t-1 )  \leq e^{-(t-1)} = e e^{-t},
\]
where the second inequality follows from Lemma~\ref{lem:complexGaussian}.
Using basically Fubini's Theorem and a change of variables (see, e.g., \cite[Prop. 7.1]{Foucart2013}) and, furthermore, plugging in the tail estimate from above, we obtain for $p\geq 2$,
\begin{equation}\label{eqn:chi_powers}
 \E |\chi_{(i,a)}|^p = p \int_0^\infty t^{p-1} \P ( |\chi_{(i,a)}| \geq t) ~ dt \leq ep \int_0^\infty t^{p-1} e^{-t} ~ dt = 2e \frac{p!}{2} .
\end{equation}
For the last equality we used that the appearing integral is equal to $\Gamma (p) = (p-1)!$, where $\Gamma$ denotes the well-known Gamma function.
Finally, by combining \eqref{eqn:Lambda_powers} with \eqref{eqn:chi_powers} and recalling \eqref{eqn:block_struct} we observe that
\newcommand{\randYia}{\widehat{\vc{Y}}_{\hspace{-.3em}(i,a)}}
\[
 \E \big[\randYia\big]^p \preccurlyeq \frac{p!}{2} \left( \frac{\max_{\tau \in [N_t]} |S_{[\beta]}^\tau|}{N_T N_t} \right)^{p-2} \big[\sqrt{2e}\vc{Y}^{(i,a),(i,a)}\big]^2 .
\]
With the matrices $\sqrt{2e}\vc{Y}^{(i,a),(i,a)}$ in place of the matrices $\vc{A}_k$, the quantity \eqref{eqn:tropp_variance} can be calculated as
\[
 \sigma^2 = \bigg\| \sum_{i=1}^{N_T} \sum_{a=1}^{N_t} \big[\sqrt{2e}\vc{Y}^{(i,a),(i,a)}\big]^2 \bigg\|_{2\to 2} = 2e \bigg\| \sum_{i=1}^{N_T} \sum_{a=1}^{N_t} \vc{Y}^{(i,a),(i,a)} \vc{Y}^{(i,a),(i,a)} \bigg\|_{2\to 2} = 2e \frac{\max_{\tau \in [N_t]} |S_{[\beta]}^\tau|}{N_T N_t} ,
\]
where we used that
\[
 \bigg[ \sum_{i=1}^{N_T} \sum_{a=1}^{N_t} \vc{Y}^{(i,a),(i,a)} \vc{Y}^{(i,a),(i,a)} \bigg]_{\varTheta ,\varTheta^\prime} = \delta_{\varTheta,\varTheta^\prime} \frac{|S_{[\beta]}^{\tau}|}{N_T N_t} .
\]
(The proof is analog to the identities in \eqref{eqn:matrix_sums}, see the appendix.)
A direct application of Theorem \ref{thm:tropp}, where the first bound in \eqref{eqn:tropp_tail_bounds} applies, yields
\begin{equation}\label{eqn:YY_gleich_tail2}
 \P ( \opnorm{\vc{\YY}_{=} - \ID} \geq \delta / 2 ) \leq |S_{[\beta]}| \exp \left( - \frac{(\delta /2)^2 N_T N_t}{4\cdot 2e \max_{\tau \in [N_t]} |S_{[\beta]}^\tau|} \right) \leq |S_{[\beta]}| \exp \left( - \frac{\delta^2 N_T N_t}{32e |S_{[\beta]}|} \right) .
\end{equation}

\subsubsection*{A tail bound for $\opnorm{\vc{\YY}_{\neq}}$}
In order to derive a tail estimate for the quantity $\opnorm{\vc{\YY}_{\neq}}$, we examine the moments
$\E \opnorm{\vc{\YY}_{\neq}}^{2m}$, $m\in\N$.
We use decoupling (see, e.g., \cite[Thm. 8.11]{Foucart2013}) in order to obtain
\begin{equation}\label{eqn:first_summand}
 \E \opnorm{\vc{\YY}_{\neq}}^{2m} = \E \bigg\| \mathop{\sum_{i,j=1}^{N_T} \sum_{a,b=1}^{N_t}}_{(i,a) \neq (j,b)} \overline{{s}_{(i,a)}} {s}_{(j,b)} \vc{Y}^{(i,a),(j,b)} \bigg\|^{2m}_{2 \to 2} \leq 4^{2m} \E \bigg\| \sum_{i,j=1}^{N_T} \sum_{a,b=1}^{N_t} \xi_{(i,a)} \xi'_{(j,b)} \vc{Y}^{(i,a),(j,b)} \bigg\|_{2\to 2}^{2m},
\end{equation}
where the $\xi_{(i,a)}, \xi'_{(j,b)}$, $(i,a),(j,b) \in [N_T] \times [N_t]$, 
are independent standard complex Gaussian random variables.
The term on the right-hand side can be estimated by means of the following Khintchine-type inequality for a homogeneous matrix-valued chaos of order two taken from \cite{ra10}\footnote{
Note that due to the comment after the formulation of the original result in \cite{ra10} the extra factor $2^{1/2n}$ appearing right-hand side of the assertion in \cite[Thm. 6.22]{ra10} can be removed.
In the original version of \cite[Thm. 6.22]{ra10}, the quantity $\|\widetilde{\vc{F}}\|_{S_{2m}}^{2m}$ was missing; this has been corrected in a new version which can be obtained from the personal website of HR.
}.
Although the original result \cite[Thm. 6.22]{ra10} is formulated for independent Rademacher random variables, a careful observation of the proof reveals that it is also true for standard complex Gaussians.

\begin{lemma}\label{lem:khintchine}
Let $\vc{B}_{k, \ell}$, $k, \ell = 1, \ldots M$, be complex matrices of the same dimension and let $\xi_k , \xi_\ell^\prime$, $k , \ell \in [M]$, independent standard Gaussian random variables.
Then, for $m \in\N$,
\begin{gather}
 \E \bigg\| \sum_{k, \ell = 1}^M \xi_k \xi'_{\ell} \vc{B}_{k, \ell} \bigg\|_{S_{2m}}^{2m} \notag   \\
 \leq \bigg( \frac{(2m)!}{2^m m!} \bigg)^2 \max \bigg\{ \bigg\| \bigg( \sum_{k, \ell = 1}^M \vc{B}_{k, \ell} \vc{B}_{k, \ell}^* \bigg)^{1/2} \bigg\|_{S_{2m}}^{2m} , \bigg\| \bigg( \sum_{k, \ell = 1}^M \vc{B}_{k, \ell}^* \vc{B}_{k, \ell} \bigg)^{1/2} \bigg\|_{S_{2m}}^{2m} , \|\vc{F}\|_{S_{2m}}^{2m} , \|\widetilde{\vc{F}}\|_{S_{2m}}^{2m} \bigg\}, \label{eqn:khintchine1}
\end{gather}
where $\vc{F}, \widetilde{\vc{F}}$ are the block matrices $\vc{F} = (\vc{B}_{k, \ell})_{k, \ell = 1}^M$ and
$\widetilde{\vc{F}} = (\vc{B}_{k, \ell}^*)_{k, \ell = 1}^M$, respectively.
\end{lemma}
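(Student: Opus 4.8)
The plan is to pass to even integer moments and compute everything by the trace--moment method, turning the Schatten norm into a polynomial in the Gaussians whose expectation is accessible through Wick's formula. Write $\vc{Z} := \sum_{k,\ell=1}^M \xi_k \xi'_\ell \vc{B}_{k,\ell}$, so that the left-hand side of \eqref{eqn:khintchine1} equals $\E \operatorname{tr}\big[ (\vc{Z}^* \vc{Z})^m \big]$. First I would expand $(\vc{Z}^*\vc{Z})^m$ as an alternating product of $2m$ factors $\vc{Z}^*,\vc{Z},\ldots$, each factor carrying exactly one variable from the $\xi$-family and one from the $\xi'$-family. Because the two families $(\xi_k)$ and $(\xi'_\ell)$ are independent, the Gaussian expectation factorizes into a product of an expectation in the $\xi$-variables and one in the $\xi'$-variables.

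Next I would evaluate each of these two expectations by Wick's theorem: the expectation of a product of $2m$ centered standard (real) Gaussians is the sum over all perfect matchings of the $2m$ slots of the product of pairwise covariances, each matched pair forcing the two corresponding summation indices to coincide through a factor $\delta_{\cdot,\cdot}$. There are exactly $(2m-1)!! = (2m)!/(2^m m!)$ such matchings for each family, so the expression collapses into a double sum over a pair of matchings $(\pi,\rho)$, with $\pi$ acting on the $\xi$-slots and $\rho$ on the $\xi'$-slots. For a fixed pair $(\pi,\rho)$ the Kronecker deltas identify summation indices, and what remains is a sum, over the still-free indices, of the trace of an alternating product of the matrices $\vc{B}_{k,\ell}$ and $\vc{B}_{k,\ell}^*$ whose wiring is dictated by $(\pi,\rho)$.

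The technical core is then to bound each fixed-matching diagram sum by one of the four quantities on the right-hand side of \eqref{eqn:khintchine1}. For the two extreme matching patterns the identification is exact: the fully \emph{aligned} pattern threads every index through a single product, so that the trace becomes $\operatorname{tr}\big[(\vc{F}^*\vc{F})^m\big]=\|\vc{F}\|_{S_{2m}}^{2m}$, and its conjugate counterpart gives $\|\widetilde{\vc{F}}\|_{S_{2m}}^{2m}$; the patterns that fully contract the row index $k$ turn the trace into a power of $\sum_{k,\ell}\vc{B}_{k,\ell}\vc{B}_{k,\ell}^*$, and those contracting the column index $\ell$ into a power of $\sum_{k,\ell}\vc{B}_{k,\ell}^*\vc{B}_{k,\ell}$, i.e. the two variance terms. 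For the remaining mixed patterns I would interpolate between these via the generalized Hölder inequality for Schatten norms, $|\operatorname{tr}[\vc{C}_1\cdots \vc{C}_n]|\le \prod_i \|\vc{C}_i\|_{S_{2m}}$ with the exponents chosen to sum to one, bounding each diagram by the maximum of the four objects. Summing over the $\big((2m-1)!!\big)^2=\big((2m)!/(2^m m!)\big)^2$ matching pairs then yields the stated constant times the maximum.

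The hard part will be the bookkeeping in this last step: verifying that \emph{every} matching pattern --- including the diagonal ones where an index repeats and the only partially aligned ones --- really falls under one of exactly these four Schatten quantities, and choosing the Hölder splitting that produces the correct object rather than a larger one. I would finally observe that replacing real Gaussians by standard complex Gaussians only \emph{decreases} the number of surviving matchings, since for complex Gaussians only conjugate-to-unconjugate pairings contribute, leaving $(m!)^2\le\big((2m)!/(2^m m!)\big)^2$ diagrams; hence the same constant remains valid. This justifies the claim preceding the lemma that the bound, originally proved for Rademacher sequences in \cite{ra10}, persists for complex Gaussian chaos, and that the spurious factor mentioned in the footnote may be dropped.
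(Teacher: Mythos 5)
The paper does not actually prove this lemma: it imports it verbatim from \cite[Thm.~6.22]{ra10} and merely remarks that the proof there carries over from Rademacher to complex Gaussian variables. The proof in \cite{ra10} proceeds by \emph{iterating the linear noncommutative Khintchine inequality}: one conditions on the family $(\xi'_\ell)$, applies the linear $S_{2m}$-inequality to $\sum_k \xi_k \big(\sum_\ell \xi'_\ell \vc{B}_{k,\ell}\big)$, and then applies it a second time to the resulting (still random) square-root terms; the two applications each contribute a factor $(2m)!/(2^m m!)$ and two candidate majorants, which is exactly where the four quantities $\big(\sum \vc{B}_{k,\ell}\vc{B}_{k,\ell}^*\big)^{1/2}$, $\big(\sum \vc{B}_{k,\ell}^*\vc{B}_{k,\ell}\big)^{1/2}$, $\vc{F}$, $\widetilde{\vc{F}}$ come from. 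Since the linear inequality holds with the same constant for Rademacher, real Gaussian and complex Gaussian coefficients, the transfer to Gaussians is immediate in that route. Your proposal takes a genuinely different path: a single global trace--moment expansion of $\E\operatorname{tr}\big[(\vc{Z}^*\vc{Z})^m\big]$ followed by Wick's formula and a diagram-by-diagram bound.

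The difficulty is that the step you defer as ``bookkeeping'' is the entire content of the theorem, and as described it does not go through. After Wick's formula you face, for each pair of pairings $(\pi,\rho)$, a sum over free indices of a trace of an alternating word in the $\vc{B}_{k,\ell}$ and $\vc{B}_{k,\ell}^*$ with an index-identification pattern dictated by $(\pi,\rho)$. Only for the extreme patterns does this collapse to one of the four right-hand-side quantities. For a generic mixed pattern, applying the generalized H\"older inequality $|\operatorname{tr}[\vc{C}_1\cdots\vc{C}_n]|\le\prod_i\|\vc{C}_i\|_{S_{p_i}}$ to the individual letters of the word yields products of Schatten norms of single blocks $\vc{B}_{k,\ell}$ (together with counting factors from the remaining free index sums), and these are \emph{not} in general dominated by the maximum of the four stated quantities; to make H\"older produce the right objects one must first regroup the word into sub-blocks that can be recognized as powers of $\vc{F}^*\vc{F}$, $\widetilde{\vc{F}}^*\widetilde{\vc{F}}$, or the two variance matrices, and then still perform the free-index summations inside each sub-block before taking norms. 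Carrying this out uniformly over all $\big((2m-1)!!\big)^2$ diagrams is a substantial combinatorial lemma, not a routine verification, and it is precisely what the iterated-NCK proof is designed to avoid. Your closing observation about complex versus real Gaussians (only conjugate-to-unconjugate pairings survive, $(m!)^2$ diagrams) is correct but only helps once the per-diagram bound is in place. As it stands, the proposal is a plausible plan with the decisive step missing.
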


\goodbreak
In our setting, the matrices $\vc{B}_{k,\ell}$ are given by the matrices $\vc{Y}^{(i,a),(j,b)}$.
Therefore, $\vc{F}$ and $\widetilde{\vc{F}}$ appearing in the assertion of Lemma~\ref{lem:khintchine} are block matrices consisting of $|S_{[\beta]}| \times |S_{[\beta]}|$ blocks.
In the following we calculate the Schatten $2m$-norms of the matrix $\vc{F}$.
To this end we compute the eigenvalues of $\vc{F}^* \vc{F}$.
Like the matrix $\vc{F}$ itself also the product $\vc{F}^* \vc{F}$ consists of $|S_{[\beta]}|$-dimensional blocks.
A straightforward calculation using the definition of the matrices reveals (see the end of Appendix \ref{apdx:section3} for the details) that the block at position $(i,a),(j,b)$ of this matrix product is entrywise given by
\[
 [\vc{F}^* \vc{F}]^{(i,a),(j,b)}_{\varTheta ,\varTheta^\prime}
 = \delta^{\sim_{N_t}}_{\tau^\prime + b , \tau + a} N_T^{-1} N_t^{-2} |S_{[\beta]}| \ee{\frac{f'-f}{N_t} (\tau +a-1)} \ee{d_T \Delta_\beta [\beta^\prime (j-1)-\beta (i-1)]} .
\]
Due to the appearance of the factor $\delta^{\sim_{N_t}}_{\tau^\prime + b , \tau + a}$ the matrix $\vc{F}^* \vc{F}$ can be transformed into a block diagonal matrix by merely rearranging the ordering of the rows/columns.
Hence, we assume without loss of generality that $\vc{F}^* \vc{F}$ is block diagonal, with the blocks $\boldsymbol{\Lambda}_1 , \boldsymbol{\Lambda}_2 , \ldots , \boldsymbol{\Lambda}_{N_t}$ on the diagonal, where each block $\boldsymbol{\Lambda}_\kappa$ is a rank one matrix of the form
\[
 \vc{\Lambda}_\kappa = \frac{|S_{[\beta]}|}{N_T N_t^{2}} \vc{w}_{\kappa} \vc{w}_{\kappa}^* ,
\]
and where for each $\kappa\in [N_t]$ the vector $\vc{w}_\kappa \in \C^{N_T |S_{[\beta]}|}$ is entrywise given by
\[
 [\vc{w}_{\kappa}]_{i,\varTheta} = \ee{\frac{f}{N_t}(\kappa -1)} \ee{d_T \Delta_\beta \beta (i-1)} , \qquad i\in [N_T],\, \varTheta \in S_{[\beta]} .
\]
Hence, $\| \vc{w}_\kappa \|_2^2 = N_T |S_{[\beta]}|$ and, therefore, $\vc{w}_{\kappa} \vc{w}_{\kappa}^*$ has exactly one nonzero eigenvalue, equal to $N_T |S_{[\beta]}|$.
This implies that also the block $\vc{\Lambda}_\kappa$ has exactly one nonzero eigenvalue which is equal to $|S_{[\beta]}|^2 /N_t^2$ and, thus, $\| \vc{\Lambda}_\kappa \|_{S_{m}} = |S_{[\beta]}|^2 /N_t^2$.
With this we can calculate,
\begin{equation}\label{eqn:max1}
 \| \vc{F} \|_{S_{2m}}^{2m} = \| \vc{F}^* \vc{F} \|_{S_{m}}^{m} = \sum_{\kappa \in [N_t]} \| \vc{\Lambda}_\kappa \|_{S_{m}}^{m} = N_t \frac{|S_{[ \beta ]}|^{2m}}{N_t^{2m}} = |S_{[ \beta ]}|  \frac{|S_{[ \beta ]}|^{2m-1}}{N_t^{2m-1}} \leq |S_{[ \beta ]}|  \frac{|S_{[ \beta ]}|^{m}}{N_t^{m}} ,
\end{equation}
where we used that, without loss of generality, $|S_{[ \beta ]}| \leq N_t$.
By using the same arguments for the matrix $\widetilde{\vc{F}}$ it is straightforward to verify that also
\begin{equation}\label{eqn:max2}
 \| \widetilde{\vc{F}} \|_{S_{2m}}^{2m} \leq |S_{[ \beta ]}|  \frac{|S_{[ \beta ]}|^{m}}{N_t^{m}}
\end{equation}
holds true.

\goodbreak
We proceed with estimating the first two quantities in the maximum expression in \eqref{eqn:khintchine1}.
As we calculate in Appendix \ref{apdx:section3} (see \eqref{eqn:matrix_sums}),
\[
 \sum_{i,j=1}^{N_T} \sum_{a,b=1}^{N_t} \vc{Y}^{(i,a),(j,b)} [ \vc{Y}^{(i,a),(j,b)} ]^* = \sum_{i,j=1}^{N_T} \sum_{a,b=1}^{N_t} [ \vc{Y}^{(i,a),(j,b)} ]^* \vc{Y}^{(i,a),(j,b)} = \frac{|S_{[\beta]}|}{N_t} \ID .
\]
Therefore, the corresponding Schatten $2m$-norm in \eqref{eqn:khintchine1} can be calculated as
\begin{equation*}
 \bigg\| \bigg[ \sum_{i,j=1}^{N_T} \sum_{a,b=1}^{N_t} \vc{Y}^{(i,a),(j,b)} [ \vc{Y}^{(i,a),(j,b)} ]^* \bigg]^{1/2} \bigg\|_{S_{2m}}^{2m} = \bigg\| \bigg[ \sum_{i,j=1}^{N_T} \sum_{a,b=1}^{N_t} [ \vc{Y}^{(i,a),(j,b)} ]^* \vc{Y}^{(i,a),(j,b)} \bigg]^{1/2} \bigg\|_{S_{2m}}^{2m} = |S_{[ \beta ]}| \frac{|S_{[ \beta ]}|^{m}}{N_t^m} .
\end{equation*}
Now a direct application of the Khintchine-type inequality (see Lemma \ref{lem:khintchine} above) to the moments in \eqref{eqn:first_summand}, using \eqref{eqn:max1}, \eqref{eqn:max2}, and the latter equalities implies
\begin{equation}\label{eqn:first_moments}
 \E \|\vc{\YY}_{\neq}\|_{2\to 2}^{2m} \leq 4^{2m} \bigg( \frac{(2m)!}{2^m m!} \bigg)^{2} |S_{[ \beta ]}| \frac{|S_{[ \beta ]}|^m}{N_t^m} = \bigg( \frac{(2m)!}{2^m m!} \bigg)^{2} |S_{[ \beta ]}| \left( \frac{4 \sqrt{|S_{[ \beta ]}|}}{\sqrt{N_t}} \right)^{2m} .
\end{equation}
Using Stirling's formula, one can easily verify that
\[
 \frac{(2m)!}{2^m m!} \leq \sqrt{2} (2/e)^m m^m ,
\]
Therefore, the moments in \eqref{eqn:first_moments} can further be estimated as
\begin{equation}\label{eqn:first_moments2}
 \E \|\vc{\YY}_{\neq}\|_{2\to 2}^{2m} \leq 2 |S_{[ \beta ]}| \left( \frac{8 \sqrt{|S_{[ \beta ]}|}}{e \sqrt{N_t}} \right)^{2m} m^{2m} ,
\end{equation}
which holds true for $m\in\N \cup \{0\}$.
H\"older's inequality implies
\[
\E |Z|^{2m + 2 \theta} = \E \big[ |Z|^{(1-\theta) 2m} |Z|^{\theta (2m+2)} \big] \leq (\E |Z|^{2m} )^{1-\theta} (\E |Z|^{2m + 2})^\theta,
\]
for any random variable $Z$, and each $\theta \in [0,1]$.
Combining this with \eqref{eqn:first_moments2} gives
\[
 \E \|\vc{Y}_{\neq}\|_{2\to 2}^{2m+2\theta} \leq 2 |S_{[ \beta ]}| \left( \frac{8 \sqrt{|S_{[ \beta ]}|}}{e \sqrt{N_t}} \right)^{2m+2\theta} m^{2m(1-\theta)} (m+1)^{(2m+2)\theta} .
\]
For $m\in\N$ we estimate the last terms as
\begin{equation}\label{eqn:agm}
 m^{2m(1-\theta)} (m+1)^{(2m+2)\theta} = \big( m^{1-\theta} (m+1)^\theta \big)^{2m+2\theta} \bigg( \frac{m+1}{m} \bigg)^{2\theta (1-\theta)} \leq \sqrt{2} (m+\theta)^{2m+2\theta},
\end{equation}
where we used the inequality between the (weighted) arithmetic and geometric mean and, furthermore, used the fact that $(m+1)/m \leq 2$ and $\theta (1-\theta) \leq 1/4$.
Since
\[
 \inf \{ x^x \midcol x \geq 0 \} = e^{-1/e} \geq 3/5 ,
\]
we can replace $\sqrt{2}$ on the right-hand side in \eqref{eqn:agm} by $3$ leaving us with an inequality which is now also valid for $m=0$.
Altogether it holds for all real $p = 2m + 2\theta > 0$,
\[
 \big( \E \|\vc{\YY}_{\neq}\|_{2\to 2}^{p} \big)^{1/p} \leq \frac{4\sqrt{|S_{[ \beta ]}|}}{e\sqrt{N_t}} \big( 6 |S_{[ \beta ]}| \big)^{1/p} p.
\]
In order to obtain a tail estimate for the random variable $\|\vc{\YY}_{\neq}\|_{2\to 2}$ it only remains to apply the following lemma which is a direct implication of Markov's inequality.
For a proof see, e.g., \cite{ra10,Foucart2013}.

\begin{lemma}\label{lem:p_moments_tail_est}
 Let $Z$ be a random variable and suppose there exist constants $\alpha , \beta , \gamma , p_0 > 0$ such that
 \[
  ( \E |Z|^p )^{1/p} \leq \alpha \beta^{1/p} p^{1 / \gamma},
 \]
 for all $p \geq p_0$.
 Then it holds for all $u \geq p_0^{1/\gamma}$,
 \[
  \P ( |Z| \geq e^{1 / \gamma} \alpha u ) \leq \beta e^{-u^{\gamma} / \gamma} .
 \]
\end{lemma}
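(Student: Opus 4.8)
The plan is to apply Markov's inequality to the nonnegative random variable $|Z|^p$ and then optimize over the free exponent $p$. First I would observe that for any threshold $t > 0$ and any admissible $p \geq p_0$, Markov's inequality combined with the assumed moment bound gives
\[
 \P ( |Z| \geq t ) = \P ( |Z|^p \geq t^p ) \leq \frac{\E |Z|^p}{t^p} \leq \beta \left( \frac{\alpha p^{1/\gamma}}{t} \right)^p ,
\]
where the final step uses $\E |Z|^p \leq ( \alpha \beta^{1/p} p^{1/\gamma} )^p = \beta ( \alpha p^{1/\gamma} )^p$, which is just the hypothesis raised to the $p$th power.

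Next I would specialize the threshold to the value occurring in the assertion, namely $t = e^{1/\gamma} \alpha u$. With this choice the bracketed base simplifies cleanly, since
\[
 \frac{\alpha p^{1/\gamma}}{t} = \frac{p^{1/\gamma}}{e^{1/\gamma} u} = \left( \frac{p}{e u^{\gamma}} \right)^{1/\gamma} ,
\]
so that the tail probability is bounded by $\beta \left( p / (e u^{\gamma}) \right)^{p/\gamma}$. The remaining task is to choose $p$ so as to make this expression as small as possible.

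The natural and essentially optimal choice is $p = u^{\gamma}$, for which the base collapses to $1/e$ and the exponent becomes $u^{\gamma}/\gamma$. Substituting then yields
\[
 \P ( |Z| \geq e^{1/\gamma} \alpha u ) \leq \beta \left( \frac{1}{e} \right)^{u^{\gamma}/\gamma} = \beta e^{-u^{\gamma}/\gamma} ,
\]
which is precisely the claimed bound. The only point requiring any attention — and the closest thing to an obstacle in an otherwise routine argument — is confirming that the optimizing value $p = u^{\gamma}$ actually lies in the admissible range $p \geq p_0$ where the moment hypothesis is available. This is exactly guaranteed by the standing assumption $u \geq p_0^{1/\gamma}$, which is therefore the reason the conclusion is stated only for such $u$; no further estimation is needed.
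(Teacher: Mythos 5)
Your proof is correct and follows exactly the route the paper indicates: the paper does not write out a proof but states that the lemma is ``a direct implication of Markov's inequality'' (deferring to \cite{ra10,Foucart2013}), and the standard argument there is precisely your Markov bound with the optimized exponent $p=u^{\gamma}$. Your observation that the admissibility of this choice of $p$ is exactly what the hypothesis $u \geq p_0^{1/\gamma}$ guarantees is the one point worth checking, and you handle it correctly.
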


By applying Lemma \ref{lem:p_moments_tail_est} we obtain
\begin{equation}\label{eqn:YY_ungleich_tail}
 \P \big( \opnorm{\vc{\YY}_{\neq}} \geq \delta / 2 \big) \leq 6 |S_{[\beta]}| \exp \bigg( - \frac{\delta \sqrt{N_t}}{8\sqrt{|S_{[\beta]}|}} \bigg) .
\end{equation}

\subsubsection*{Conclusion}
Finally, due to the definition of the matrices $\vc{\YY}_{\neq}$, $\vc{\YY}_{=}$ in \eqref{eqn:tail_triangle}, an application of the union bound and plugging in the tail bounds \eqref{eqn:YY_gleich_tail2}, \eqref{eqn:YY_ungleich_tail} (noting
that \eqref{eqn:YY_gleich_tail2} is dominated by \eqref{eqn:YY_ungleich_tail})
implies the assertion of Proposition \ref{prop:opnorm_bound}.\qed

\section{Numerical experiments}\label{sec:numerics}
We conduct numerical experiments in order to test the effective empirical probabilities of perfect reconstruction of support sets using the LASSO functional, as considered in Theorem \ref{thm:nonuniform_result}.
Due to runtime limitations we only perform simulations for the Doppler-free scenario.
As pointed out in in \mbox{Section \ref{sec:doppler_free_case}}, this case is contained in our analysis as a special case.
This means in particular that the same influence of the parameter $\eta$ on the recovery properties can be expected.
This is indeed supported by the results of the numerical experiments described next.

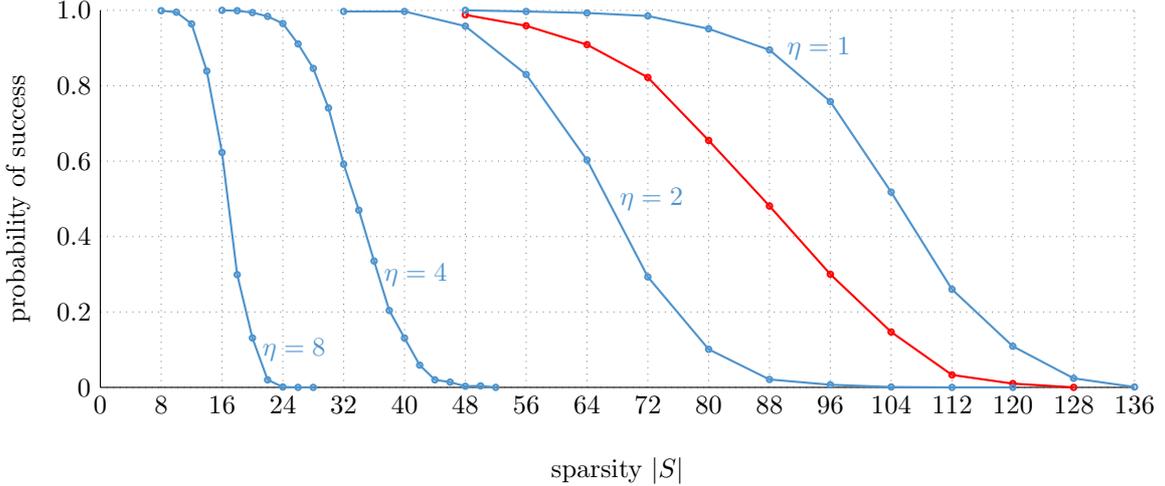
\begin{figure}[t!]
\begin{center}
\begin{tikzpicture}[xscale=0.1,yscale=1]

  \def\xmin{0}
  \def\xmax{136}
  \def\ymin{0}
  \def\ymax{5}

  \draw[draw=Azure4,style=dotted, ystep=1, xstep=8] (\xmin,\ymin) grid
  (\xmax,\ymax);

  
  \draw (0,0) -- coordinate (y axis mid) (0,5);
  \draw (0,0) -- coordinate (x axis mid) (136,0);
  
  \node[below=0.8cm] at (x axis mid) {sparsity $|S|$};
  \node[rotate=90, above=0.8cm] at (y axis mid) {probability of success};

  \foreach \x in {0,8,...,136}
    \node at (\x, \ymin) [below] {\x};
  \foreach \y in {0,0.2,0.4,0.6,0.8,1.0}
    \node at (\xmin,5*\y) [left] {\y};

    
  \draw[color=red,thick] plot[mark=o, mark options={xscale=5,yscale=0.5}] file {B_free_scaled.dat};

  
  \draw[color=SteelBlue3,thick] plot[mark=o, mark options={xscale=5,yscale=0.5}]
        file {B1_scaled.dat};
  
\node[color=SteelBlue3,thick] at (89,4.5) [right] {$\eta=1$};
  
  
  \draw[color=SteelBlue3,thick] plot[mark=o, mark options={xscale=5,yscale=0.5}] file {B2_scaled.dat};
   \node[color=SteelBlue3,thick] at (67,2.5) [right] {$\eta=2$};
   \draw[color=SteelBlue3,thick] plot[mark=o, mark options={xscale=5,yscale=0.5}] file {B4_scaled.dat};
   \node[color=SteelBlue3,thick] at (36,1.5) [right] {$\eta=4$};
   
   \draw[color=SteelBlue3,thick] plot[mark=o, mark options={xscale=5,yscale=0.5}] file {B8_scaled.dat};
   \node[color=SteelBlue3,thick] at (20,0.5) [right] {$\eta=8$};


\end{tikzpicture}
\end{center}
\caption{Probability of perfect reconstruction of support sets for varying sparsity levels $|S|$.
Blue plots: Randomly chosen support sets having constant balancedness parameter $\eta$ as indicated.
Red plot: Randomly chosen support sets with no restriction on the balancedness parameter.}
\label{B_plots}
\end{figure}

We consider the case where $N_T = N_R = 8$ and $N_t = 64$, yielding a problem dimension of \mbox{$N=N_T N_R N_t = 4096$} and a total number of $m=N_R N_t = 512$ measurements.
We run simulations with varying choices for $(|S|,\eta )$, where $|S|$ is the number of targets and $\eta$ defines the actual balancedness of the considered support sets (see Definition \ref{def:balanced}).
For each such pair $(|S|,\eta )$ we try to reconstruct $1000$ independent random target scenes $\vc{x}$ from measurements $\vc{y} = \vc{A}\vc{x} + \vc{n}$ by minimizing the LASSO functional \eqref{eqn:lasso} with $\lambda = 2 \sigma \sqrt{2 N_T N_R N_t \log (N)}$, and where, in each run, 
\begin{itemize}
 \item the signal vectors $\vc{s}_i$ (generating the measurement matrix $\vc{A}$) and also the noise vector ${\vc{n}}$ are drawn randomly from standard complex Gaussian distribution, i.e., we choose $\sigma^2 = 1$ as the variance of the noise,
 \item the support set $S$ of $\vc{x}$ is drawn uniformly at random from all possible support sets having balancedness parameter $\eta$,
 \item the phases of the nonzero entries of $\vc{x}$ are drawn uniformly at random from $[0 , 2\pi )$,
 \item and the amplitudes $| x_\varTheta |$, $\varTheta \in S$, are set to the threshold level defined in \eqref{eqn:thres}.
\end{itemize}

We use the Matlab library TFOCS \cite{Becker2011} for the minimization of the LASSO functional \eqref{eqn:lasso}.
Since TFOCS uses an iterative solver and, hence, in general only approximations to the desired minimizer can be computed, 
and we consider a target scene $\vc{x}$ to be successfully recovered if $\|\vc{x} - \vc{x}^\#\|_\infty \leq \TT$ for an appropriate
thresholding $\TT > 0$.

In Figure~\ref{B_plots} we depict the results of our simulations.
It becomes evident that small values of the parameter $\eta$ have a positive impact on the probability of perfect reconstruction of the support set $S$.
Moreover, as predicted by formula \eqref{eqn:nonuniform_scaling}, the dependence of the number of maximal reconstructable targets $|S|$ (for a fixed number of measurements $N_R N_t$) on the parameter $\eta$ is linear.

\section*{Acknowledgment}
The authors acknowledge funding from the European Research Council through the Starting Grant \mbox{StG 258926}.

\section*{Appendix}
\appendix

\section{Orthogonality of the matrices \texorpdfstring{$\x{\varTheta}$}{Xtheta}}\label{apdx:section2}
Due to the definition in \eqref{eqn:XX}, for $(i,j) \in [N_R] \times [N_T]$, the $(i,j)$th $N_t \times N_t$ block of $\x{\varTheta}$ is given by
\[
\x{\varTheta}^{[i,j]} = \ee{d_R \beta \Delta_\beta (i-1)} \ee{d_T \beta \Delta_\beta (j-1)} \vc{M}_{f} \vc{T}_{\tau} .
\]

\begin{lemma}
 The set of matrices $\left\{ \frac{1}{\sqrt{N_t N_R N_T}} \x{\varTheta} ,\, \varTheta \in \GG \right\}$ forms an orthonormal basis. 
\end{lemma}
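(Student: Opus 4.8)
The plan is to compute the Hilbert--Schmidt (Frobenius) inner products $\langle \x{\varTheta}, \x{\varTheta'}\rangle_F := \sum_{k,\ell}\overline{[\x{\varTheta}]_{k,\ell}}\,[\x{\varTheta'}]_{k,\ell}$ directly and show they equal $N_T N_R N_t\,\delta_{\varTheta,\varTheta'}$. The matrices $\x{\varTheta}$ live in $\C^{N_R N_t \times N_T N_t}$, whose complex dimension is $N_R N_T N_t^2 = N = |\GG|$. Hence, once orthonormality of the $N$ rescaled matrices $\tfrac{1}{\sqrt{N_T N_R N_t}}\x{\varTheta}$ is established, having the right cardinality forces them to be an orthonormal basis.

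First I would exploit the block structure. Since the Frobenius inner product of two block matrices is the sum of the blockwise inner products, and since the operator part $\vc{M}_f\vc{T}_\tau$ of each block $\x{\varTheta}^{[i,j]}$ in \eqref{eqn:XX} does not depend on $(i,j)$, the inner product factorizes as
\[
 \langle \x{\varTheta}, \x{\varTheta'}\rangle_F = \Bigg(\sum_{i=1}^{N_R}\ee{d_R(\beta'-\beta)\Delta_\beta(i-1)}\Bigg)\Bigg(\sum_{j=1}^{N_T}\ee{d_T(\beta'-\beta)\Delta_\beta(j-1)}\Bigg)\langle\vc{M}_f\vc{T}_\tau,\vc{M}_{f'}\vc{T}_{\tau'}\rangle_F .
\]
Next I would evaluate the operator inner product from the entrywise definitions in \eqref{eqn:signal_ops_def}: the product of the two circulant shifts forces $\tau\equiv\tau'$ modulo $N_t$ (hence $\tau=\tau'$, since $\tau,\tau'\in[N_t]$), and the surviving diagonal sum $\sum_{k=1}^{N_t}\ee{\frac{f'-f}{N_t}(k-1)}$ is a geometric series equal to $N_t\,\delta_{f,f'}$. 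Thus $\langle\vc{M}_f\vc{T}_\tau,\vc{M}_{f'}\vc{T}_{\tau'}\rangle_F = N_t\,\delta_{\tau,\tau'}\delta_{f,f'}$.

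The crux, and the only place where the specific choices $d_T=1/2$, $d_R=N_T/2$ from \eqref{eqn:dT_dR} enter, is the pair of angle sums. Using $d_R\Delta_\beta = 1/N_R$, the receiver sum equals $N_R\,\delta^{\sim_{N_R}}_{\beta,\beta'}$; this only detects $\beta\equiv\beta'\pmod{N_R}$, exactly the weak correlation condition already recorded around \eqref{eqn:correlation_formula}, and by itself it does \emph{not} separate the $N_T$ angle parameters inside one angle class. Writing $\beta'-\beta = N_R m$, the transmitter sum (with $d_T\Delta_\beta = 1/(N_T N_R)$) becomes $\sum_{j=1}^{N_T}\ee{m(j-1)/N_T} = N_T\,\delta^{\sim_{N_T}}_{m,0}$, so the combined angle factor vanishes unless $\beta'-\beta\in N_T N_R\Z$; since $\beta,\beta'\in[N_T N_R]$ this forces $\beta=\beta'$. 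Multiplying the three factors gives $\langle\x{\varTheta},\x{\varTheta'}\rangle_F = N_T N_R N_t\,\delta_{\beta,\beta'}\delta_{\tau,\tau'}\delta_{f,f'}$, and after rescaling by $1/\sqrt{N_T N_R N_t}$ this reads $\delta_{\varTheta,\varTheta'}$; the dimension count then concludes. No step is a genuine obstacle, but care is needed in tracking the two angle sums so that orthogonality between distinct parameters \emph{within} a single angle class---which is invisible to the receiver sum alone---is correctly recovered from the transmitter sum.
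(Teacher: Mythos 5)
Your proof is correct and follows essentially the same route as the paper's: a direct computation of the Frobenius inner products block by block, reducing the delay/Doppler part to the diagonal of $\vc{M}_{f-f'}\vc{T}_{\tau-\tau'}$ (giving $N_t\,\delta_{\tau,\tau'}\delta_{f,f'}$) and the angle part to two geometric sums, followed by the dimension count $|\GG|=N_RN_TN_t^2$. You are in fact slightly more careful than the paper on one point: the transmitter sum $\sum_{j=1}^{N_T}\ee{d_T(\beta-\beta')\Delta_\beta(j-1)}$ alone vanishes only when $\beta-\beta'\in N_R\Z\setminus N_TN_R\Z$, so orthogonality between arbitrary distinct angle parameters genuinely requires combining it with the receiver sum exactly as you do, whereas the paper's concluding remark attributes the vanishing to the transmitter factor alone.
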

\begin{proof}
We calculate the inner product $\langle \x{\varTheta'} , \x{\varTheta} \rangle = \text{Tr} ( \x{\varTheta^\prime}^* \x{\varTheta} )$.
To this end we calculate the $(j^\prime,j)$th block of the product $\x{\varTheta^\prime}^* \x{\varTheta}$ (recall, $\x{\varTheta^\prime}^* \x{\varTheta}$ is a $N_T \times N_T$ block matrix consisting of $N_t \times N_t$ blocks),
\begin{align}
 [ \x{\varTheta'}^* \x{\varTheta} ]^{[j^\prime,j]}
 &= \sum_{k=1}^{N_R} [ \x{\varTheta'}^* ]^{[j^\prime,k]} \x{\varTheta}^{[k,j]} = \sum_{k=1}^{N_R} [ \x{\varTheta'}^{[k,j^\prime]} ]^* \x{\varTheta}^{[k,j]} \notag \\
 &= \sum_{k=1}^{N_R} \overline{\ee{d_R \beta^\prime \Delta_\beta (k -1)} \ee{d_T \beta^\prime \Delta_\beta (j^\prime -1)}} \ee{d_R \beta \Delta_\beta (k-1)} \ee{d_T \beta \Delta_\beta (j-1)} \vc{T}_{\tau^\prime}^* \vc{M}_{f^\prime}^* \vc{M}_{f} \vc{T}_{\tau}  \notag \\
 &= \nee{d_T \beta^\prime \Delta_\beta (j^\prime-1)} \ee{d_T \beta \Delta_\beta (j - 1)} \ee{\frac{(f-f')}{N_t} \tau^\prime} \sum_{k=1}^{N_R} \ee{d_R (\beta - \beta^\prime ) \Delta_\beta (k-1)} \vc{M}_{f-f'} \vc{T}_{\tau - \tau'} \label{eqn:block_product_XX}
\end{align}
For the last equality we used that
\[
 \vc{T}_{\tau^\prime}^* \vc{M}_{f^\prime}^* \vc{M}_{f} \vc{T}_{\tau} = \ee{\frac{(f-f')}{N_t} \tau^\prime} \vc{M}_{f-f'} \vc{T}_{\tau-\tau'} ,
\]
which follows directly from the definitions of the operators $\vc{M}_{f}$ and $\vc{T}_{\tau}$ (see \eqref{eqn:signal_ops_def}).
Due to \eqref{eqn:block_product_XX}, the Frobenius inner product between two matrices is given as
\begin{align*}
 \langle \x{\varTheta'} , \x{\varTheta} \rangle
 &= \sum_{j=1}^{N_T} \text{Tr} ( [ \x{\varTheta'}^* \x{\varTheta} ]^{[j,j]} )  \\
 &=  \bigg( \sum_{j=1}^{N_T} \ee{d_T (\beta - \beta') \Delta_\beta (j-1)} \bigg) \ee{\frac{(f-f')}{N_t} \tau^\prime} \sum_{k=1}^{N_R} \ee{d_R (\beta-\beta') \Delta_\beta (k-1)} \text{Tr} ( \vc{M}_{f-f'} \vc{T}_{\tau - \tau'} ) .
\end{align*}
Since $\vc{M}_{f-f'}$ is a diagonal matrix, the trace of the product $\vc{M}_{f-f'} \vc{T}_{\tau-\tau'}$ can only be nonzero if at least one of the diagonal entries of the matrix $\vc{T}_{\tau-\tau'}$ is nonzero, i.e., if $\tau = \tau'$ so that $\vc{T}_{\tau-\tau'} = \ID$.
This means that
\[
 \text{Tr} ( \vc{M}_{f-f'} \vc{T}_{\tau-\tau'} ) = \text{Tr} ( \vc{M}_{f-f'} ) = \sum_{k=1}^{N_t} \ee{\frac{f-f'}{N_t} (k-1)} =
 \begin{cases}
 N_t  &  \text{if $f = f'$,}  \\
 0    &  \text{otherwise.}
 \end{cases}
\]
Recalling the formula for $\langle \x{\varTheta'} , \x{\varTheta} \rangle$ from above implies that for this inner product to be nonzero it necessarily has to hold that $\varTheta' = \varTheta$.
Indeed, this follows from the appearance of the factor
$
 \sum_{j=1}^{N_T} \ee{d_T (\beta - \beta') \Delta_\beta (j-1)}
$
which (recalling that $d_T = 1/2$ and $\Delta_\beta = 2/ N_T N_R$, see \eqref{eqn:dT_dR}, \eqref{eqn:stepsizes}) is only nonzero (and equal to $N_T$) if $\beta^\prime = \beta$.
Finally, we can conclude
\[
 \langle \x{\varTheta'} , \x{\varTheta} \rangle = \delta_{\varTheta',\varTheta} N_T N_R N_t .
\]
The normalization yields the result.
\end{proof}

\section{Proof of Lemma \ref{lem:cov_number_bounds}}\label{apdx:proof_cov_numbers}
For small $u$ the first term on the right-hand side of \eqref{eqn:covering_numbers_bound} 
can be obtained by a volumetric argument.
To this end, let for $S\subset \{1, \ldots , N\}$, $B_S \subset \C^N$ denote the set of all vectors $\vc{x}$ with $\|\vc{x}\|_2 \leq 1$ and support in $S$.
Introducing $|\hspace{-.1em} \| \vc{x} |\hspace{-.1em}\| := \opnorm{\vxTilde}$ we find using \eqref{eqn:opnorm_by_euclid} and a volumetric estimate, see e.g.~\cite[Proposition~C.3]{Foucart2013},
\[
 \NN( B_S, |\hspace{-.1em}\| \cdot |\hspace{-.1em}\|, u ) \leq \NN \big( B_S, \sqrt{s/N_t} \|\cdot\|_2, u \big) \leq \bigg(1+ 2 \frac{\sqrt{s/N_t}}{u} \bigg)^{2s} \leq \bigg(3 \frac{\sqrt{s/N_t}}{u} \bigg)^{2s},
\]
where for the last inequality we used the assumption $u \leq \sqrt{s/N_t}$.
Since $\AA$ is the union of all sets $B_S$ with $S \subset [N]$, and there are $\genfrac(){0pt}{}{N}{s} \leq (eN / s)^s$ possible choices for $S$ it holds
\[
 \NN( \AA, \opnorm{\cdot}, u ) \leq (eN / s)^s (3 \sqrt{s/N_t} / {u})^{2s},
\]
which implies the first bound in \eqref{eqn:covering_numbers_bound}, namely
\[
\log \NN(\AA, \opnorm{\cdot}, u) \leq 2s \left( \log (eN / s) + \log \bigg(\frac{2\sqrt{s}}{u\sqrt{N_t}} \bigg) \right) \lesssim s \log \bigg( \frac{N}{u^2 N_t} \bigg) .
\]
For the second bound from the assertion we exploit the fact that
\[
 \{ \vc{x} \in \C^{N} \midcol \text{$\vc{x}$ $s$-sparse, $\|\vc{x}\|_2 \leq 1$}\} \subset \sqrt{2s} \conv \bigcup_{\varTheta \in \GG} \{ \boldsymbol{e}_{\varTheta} ,\, \imath \boldsymbol{e}_{\varTheta} ,\, -\boldsymbol{e}_{\varTheta} ,\, -\imath \boldsymbol{e}_{\varTheta} \} =: \widetilde D_s
\]
and, hence, the set $\AA$ from \eqref{eqn:AA_set} is contained in the set $\{ \vxTilde \midcol \vc{x} \in \widetilde D_s \}$.
The following is a version of \textit{Maurey's lemma}.
For a proof see, e.g., \cite{krmera14}.

\begin{lemma}\label{lem:maurey}
There exists an absolute constant $c$ for which the following holds.
Let $X$ be a normed space, consider a finite set $\UU \subset X$ of cardinality $N$, and assume that for every $L\in \N$ and
$(\vc{u}_1 , \ldots , \vc{u}_L ) \in \UU^L$, $\E \| \sum_{j=1}^L {\epsilon}_j \vc{u}_j \|_X \leq A \sqrt{L}$, where
$(\epsilon_1 , \ldots , \epsilon_L )$ denotes a Rademacher vector.
Then for every $u>0$,
\[
 \log \NN( \conv ( \UU ) , \|\cdot \|_X , u) \leq c (A/u)^2 \log (N) .
\]
\end{lemma}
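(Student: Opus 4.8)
The plan is to establish Maurey's lemma by the classical \emph{empirical method}: every point of $\conv(\UU)$ can be approximated in $\|\cdot\|_X$ by the average of a controlled number $L$ of (not necessarily distinct) elements of $\UU$, and the covering number is then bounded by counting such averages. First I would fix $\vc{x} \in \conv(\UU)$ and write it as a convex combination $\vc{x} = \sum_{i=1}^N \lambda_i \vc{w}_i$ with $\vc{w}_i \in \UU$, $\lambda_i \geq 0$, $\sum_i \lambda_i = 1$. I would then introduce a $\UU$-valued random vector $\vc{Z}$ with $\P(\vc{Z} = \vc{w}_i) = \lambda_i$, so that $\E \vc{Z} = \vc{x}$, together with independent copies $\vc{Z}_1, \ldots, \vc{Z}_L$, and study the expected error $\E \| \frac{1}{L} \sum_{j=1}^L \vc{Z}_j - \vc{x} \|_X$ of the empirical mean.

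The decisive step is symmetrization. Since the $\vc{Z}_j$ are independent with common mean $\vc{x}$, the standard symmetrization inequality in a normed space gives
\[
\E \Big\| \frac{1}{L} \sum_{j=1}^L \vc{Z}_j - \vc{x} \Big\|_X \leq \frac{2}{L}\, \E \Big\| \sum_{j=1}^L \epsilon_j \vc{Z}_j \Big\|_X ,
\]
with an independent Rademacher vector $(\epsilon_j)$. I would then condition on the values $\vc{Z}_1 = \vc{u}_1, \ldots, \vc{Z}_L = \vc{u}_L$: for each fixed tuple $(\vc{u}_1, \ldots, \vc{u}_L) \in \UU^L$ the inner Rademacher average is exactly the quantity controlled by the hypothesis and is therefore at most $A\sqrt{L}$. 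Integrating out the $\vc{Z}_j$ yields $\E \| \frac{1}{L} \sum_{j} \vc{Z}_j - \vc{x} \|_X \leq 2A/\sqrt{L}$, so there exists at least one deterministic tuple $\vc{z}_1, \ldots, \vc{z}_L \in \UU$ with $\| \frac{1}{L} \sum_{j=1}^L \vc{z}_j - \vc{x} \|_X \leq 2A/\sqrt{L}$.

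To conclude, I would choose $L = \lceil 4A^2/u^2 \rceil$, so that $2A/\sqrt{L} \leq u$ and every $\vc{x} \in \conv(\UU)$ lies within $\|\cdot\|_X$-distance $u$ of some average $\frac{1}{L} \sum_{j=1}^L \vc{z}_j$ with $\vc{z}_j \in \UU$. As there are at most $N^L$ such averages, this gives $\log \NN(\conv(\UU), \|\cdot\|_X, u) \leq L \log N$. For $u < 2A$ one has $L \leq 8A^2/u^2$, which yields the asserted bound $c(A/u)^2 \log N$; for $u \geq 2A$ the hypothesis applied with $L=1$ forces $\|\vc{u}\|_X \leq A$ for all $\vc{u} \in \UU$, so $\conv(\UU)$ is contained in a single ball of radius $u$ and the left-hand side vanishes. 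The hard part is not any single estimate but the care required in the symmetrization step and, above all, in invoking the hypothesis \emph{after} conditioning: the assumption is stated uniformly over all tuples in $\UU^L$, which is precisely what lets us bound the conditional Rademacher average pointwise before integrating out the random selection $\vc{Z}_j$.
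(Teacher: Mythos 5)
The paper does not prove this lemma itself; it cites \cite{krmera14} for a proof. Your argument is the standard empirical-method proof of Maurey's lemma --- random selection from the convex-combination weights, symmetrization with constant $2$, conditioning to invoke the uniform Rademacher hypothesis, and counting the at most $N^L$ empirical means with $L \asymp (A/u)^2$ --- and it is correct, including the edge case $u \geq 2A$ handled via the $L=1$ hypothesis. This is essentially the same argument as in the cited reference, so there is nothing to add.
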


In order to apply Lemma \ref{lem:maurey}, we need to estimate the quantity $\E_{\vc{\epsilon}} \opnorm{\sum_{k=1}^L \epsilon_k \vxTildeArg{\vc{u}_k}}$, where $(\vc{u}_1 , \ldots , \vc{u}_L )$ is a sequence of extreme points in $\widetilde D_s$ and $\vc{\epsilon} = (\epsilon_1 , \ldots , \epsilon_L )$ is a Rademacher vector.
The noncommutative Khintchine inequality \cite{bu01,ra10} -- originally due to Lust-Piquard \cite{lu86-1,lupi91} --
or more modern estimates based on moment generating function bounds \cite{tr12},
see also \cite[Problem~8.6(d)]{Foucart2013}, 
\begin{equation}\label{eqn:lust_picard}
 \E_{\vc{\epsilon}} \opnorm{\sum_{k=1}^L \epsilon_k \vxTildeArg{\vc{u}_k}} \lesssim \sqrt{\log ( N_{\text{max}} )} \max \bigg\{ \opnorm{\sum_{k=1}^L \vxTildeArg{\vc{u}_k} \vxTildeArg{\vc{u}_k}^*}, \opnorm{\sum_{k=1}^L \vxTildeArg{\vc{u}_k}^* \vxTildeArg{\vc{u}_k} } \bigg\}^{{1}/{2}} ,
\end{equation}
where $N_{\text{max}}$ stands for the maximum of the dimensions of the matrices $\vxTildeArg{\vc{u}_k} \vxTildeArg{\vc{u}_k}^*$ and $\vxTildeArg{\vc{u}_k}^* \vxTildeArg{\vc{u}_k}$, and can be estimated by $\max \{ N_R N_t , N_T N_t \} \leq N$.
Using the estimate \eqref{eqn:opnorm_by_euclid} for $\opnorm{\vxTildeArg{\vc{u}_k}}$,
\[
 \opnorm{\vxTildeArg{\vc{u}_k} \vxTildeArg{\vc{u}_k}^*} = \opnorm{\vxTildeArg{\vc{u}_k}^* \vxTildeArg{\vc{u}_k}} = \opnorm{\vxTildeArg{\vc{u}_k}}^2 \leq \frac{1}{N_t} \| \vc{u}_k \|_1^2 = \frac{2s}{N_t}.
\]
An application of the triangle inequality yields, using the Khintchine inequality \eqref{eqn:lust_picard},
\[
 \E_{\vc{\epsilon}} \opnorm{\sum_{k=1}^L \epsilon_k \vxTildeArg{\vc{u}_k}} \lesssim \sqrt{\log (N)}  \sqrt{\frac{2s}{N_t}} \sqrt{L}
\]
Finally, we can apply Lemma \ref{lem:maurey} yielding
\[
\log \NN(\AA, \opnorm{\cdot}, u) \lesssim \frac{s}{u^2 N_t} \log^2 (N) .
\]
This establishes the second bound in \eqref{eqn:covering_numbers_bound}.\qed

\section{Basic calculations for Proposition~\ref{prop:opnorm_bound}}\label{apdx:section3}
The proof of Proposition \ref{prop:opnorm_bound} is based on the fact that 
\begin{equation}\label{eqn:AA_repr_2}
 \widetilde{\vc{A}}_{S_{[\beta]}}^* \widetilde{\vc{A}}_{S_{[\beta]}} = \sum_{i,j = 1}^{N_T} \sum_{a,b = 1}^{N_t} \overline{{s}_{(i,a)}} {s}_{(j,b)} \vc{Y}^{(i,a),(j,b)} ,
\end{equation}
where we write $s_{(i,a)}$ for the $a$-th entry of the signal vector $\vc{s}_i$ and where for $\varTheta , \varTheta^\prime \in S_{[\beta]}$ the corresponding entry of a given matrix $\vc{Y}^{(i,a),(j,b)}$ is given by
\begin{equation}\label{eqn:def_B_matrix_in_sum_2}
 [\vc{Y}^{(i,a),(j,b)}]_{\varTheta,\varTheta'} = \delta^{\sim_{N_t}}_{a-b , \tau^\prime -\tau} (N_T N_t)^{-1} \ee{\frac{f'-f}{N_t} (\tau +a-1)} \ee{d_T \Delta_\beta [\beta^\prime (j-1)-\beta (i-1)]} ,
\end{equation}
To see this we recall that, according to \eqref{eqn:correlation_formula},
the inner products $\langle \vc{A}_{\varTheta} , \vc{A}_{\varTheta^\prime} \rangle$ satisfy
\begin{align*}
 [ \widetilde{\vc{A}}_{S_{[\beta]}}^* \widetilde{\vc{A}}_{S_{[\beta]}} ]_{\varTheta , \varTheta^\prime}
  &= \langle \widetilde{\vc{A}}_{\varTheta} , \widetilde{\vc{A}}_{\varTheta^\prime} \rangle = (N_T N_R N_t)^{-1} \langle \vc{A}_{\varTheta} , \vc{A}_{\varTheta^\prime} \rangle  \\
  &= (N_T N_t)^{-1} \sum_{i,j=1}^{N_T} \ee{d_T \Delta_\beta [\beta^\prime (j-1) - \beta (i-1)]} \left\langle \vc{M}_{f} \vc{T}_{\tau} \vc{s}_i , \vc{M}_{f^\prime} \vc{T}_{\tau^\prime} \vc{s}_j \right\rangle ,
\end{align*}
where we used that both $\varTheta , \varTheta^\prime \in S_{[\beta]}$.
Recalling the definitions of the operators $\vc{M}_{f}$, $\vc{T}_{\tau}$ (see \eqref{eqn:signal_ops_def}) one obtains for the latter inner product,
\begin{align*}
 \left\langle \vc{M}_{f} \vc{T}_{\tau} \vc{s}_i , \vc{M}_{f^\prime} \vc{T}_{\tau^\prime} \vc{s}_j \right\rangle
  &= \sum_{k=1}^{N_t} \overline{[\vc{M}_{f} \vc{T}_{\tau} \vc{s}_i]_k} [\vc{M}_{f^\prime} \vc{T}_{\tau^\prime} \vc{s}_j]_k = \sum_{k=1}^{N_t} \ee{\frac{f^\prime -f}{N_t}(k-1)} \overline{[\vc{s}_i]_{k - \tau}} [\vc{s}_j]_{k - \tau^\prime}  \\
  &= \sum_{a,b=1}^{N_t} \delta^{\sim_{N_t}}_{a-b , \tau^\prime -\tau} \ee{\frac{f^\prime -f}{N_t} (a+\tau -1)} \overline{[\vc{s}_i]_{a}} [\vc{s}_j]_{b} .
\end{align*}
By combining the identities from above one finds
\[
 [ \widetilde{\vc{A}}_{S_{[\beta]}}^* \widetilde{\vc{A}}_{S_{[\beta]}} ]_{\varTheta , \varTheta^\prime} = \sum_{i,j=1}^{N_T} \sum_{a,b=1}^{N_t} \overline{s_{(i,a)}} s_{(j,b)} \underbrace{\delta^{\sim_{N_t}}_{a-b , \tau^\prime -\tau} (N_T N_t)^{-1} \ee{\frac{f^\prime -f}{N_t} (\tau +a-1)} \ee{d_T \Delta_\beta [\beta^\prime (j-1) - \beta (i-1)]}}_{= [\vc{Y}^{(i,a),(j,b)}]_{\varTheta , \varTheta^\prime},\, \text{see \eqref{eqn:def_B_matrix_in_sum_2}}} ,
\]
which shows \eqref{eqn:AA_repr_2}.

\goodbreak
The matrices $\vc{Y}^{(i,a),(j,b)}$ allow for a simple formula for their adjoints, namely
\begin{equation}\label{eqn:Y_adjoint}
 [\vc{Y}^{(i,a),(j,b)}]^* = \vc{Y}^{(j,b),(i,a)} .
\end{equation}

\subsubsection*{The product $\vc{F}^* \vc{F}$}
The matrix $\vc{F}$ consists of the blocks $\vc{Y}^{(i,a),(j,b)}$ given by \eqref{eqn:def_B_matrix_in_sum_2}.
Therefore, $\vc{F}$ is self-adjoint so that $\vc{F}^* \vc{F} = \vc{F}^2$. 
Like $\vc{F}$ also the product $\vc{F}^2$ consists of blocks 
and the block at the $(i,a)$-th (block) row and the $(j,b)$-th (block) column is given by
\begin{equation}\label{eqn:sum_quantity}
 [\vc{F}^2]^{(i,a),(j,b)} = \sum_{r=1}^{N_T} \sum_{q = 1}^{N_t} \vc{Y}^{(i,a),(r,q)} \vc{Y}^{(r,q),(j,b)} .
\end{equation}
Recalling \eqref{eqn:def_B_matrix_in_sum_2}, the appearing summands $\vc{Y}^{(i,a),(r,q)} \vc{Y}^{(r,q),(j,b)}$ are given entrywise by
\begin{align}
&  [ \vc{Y}^{(i,a),(r,q)} \vc{Y}^{(r,q),(j,b)} ]_{\varTheta , \varTheta^\prime}
  = \sum_{\tilde\varTheta \in S_{[\beta]}} \vc{Y}_{\varTheta,\tilde\varTheta}^{(i,a),(r,q)} \vc{Y}_{\tilde\varTheta,\varTheta^\prime}^{(r,q),(j,b)} \notag \\
 & = (N_T N_t)^{-2} \sum_{\tilde\varTheta \in S_{[\beta]}} \delta^{\sim_{N_t}}_{a-q , \tilde\tau -\tau} \delta^{\sim_{N_t}}_{q-b , \tau^\prime -\tilde\tau} \ee{\frac{\tilde f -f}{N_t} (\tau +a-1)} \ee{\frac{f^\prime - \tilde f}{N_t} (\tilde\tau +q-1)} \ee{d_T \Delta_\beta [\beta^\prime (j-1) - \beta (i-1)]} \notag  \\
  &= (N_T N_t)^{-2} \delta^{\sim_{N_t}}_{\tau^\prime + b , \tau + a} |S_{[\beta]}^{\tau + a - q}| \ee{\frac{f'-f}{N_t} (a+\tau -1)} \ee{d_T \Delta_\beta [\beta^\prime (j-1)-\beta (i-1)]} \label{eqn:YY_general_prod} .
\end{align}
Combining this with \eqref{eqn:sum_quantity} yields
\begin{align*}
 [\vc{F}\vc{F}]^{(i,a),(j,b)}_{\varTheta ,\varTheta^\prime}
  &= \sum_{r=1}^{N_T} \sum_{q = 1}^{N_t} \delta^{\sim_{N_t}}_{\tau^\prime + b , \tau + a} (N_T N_t)^{-2} |S_{[\beta]}^{\tau + a - q}| \ee{\frac{f'-f}{N_t} (a+\tau -1)} \ee{d_T \Delta_\beta [\beta^\prime (j-1)-\beta (i-1)]}  \\
  &= \delta^{\sim_{N_t}}_{\tau^\prime + b , \tau + a} N_T^{-1} N_t^{-2} |S_{[\beta]}| \ee{\frac{f'-f}{N_t} (a+\tau -1)} \ee{d_T \Delta_\beta [\beta^\prime (j-1)-\beta (i-1)]} .
\end{align*}
\subsubsection*{Properties of the matrices $\vc{Y}^{(i,a),(j,b)}$}
The proof of Proposition \ref{prop:opnorm_bound} uses the identities
\begin{equation}\label{eqn:matrix_sums}
\sum_{i,j=1}^{N_T} \sum_{a,b=1}^{N_t} [\vc{Y}^{(i,a),(j,b)}]^* \vc{Y}^{(i,a),(j,b)} = \sum_{i,j=1}^{N_T} \sum_{a,b=1}^{N_t} \vc{Y}^{(i,a),(j,b)} [\vc{Y}^{(i,a),(j,b)}]^* = \frac{|S_{[\beta]}|}{N_t} \ID .
\end{equation}
Due to \eqref{eqn:Y_adjoint}, and by plugging in the identity we used in the second step of \eqref{eqn:YY_general_prod}, the second sum is given entrywise by
\begin{align*}
&  \sum_{i,j=1}^{N_T} \sum_{a,b=1}^{N_t} [\vc{Y}^{(i,a),(j,b)} \vc{Y}^{(j,b),(i,a)}]_{\varTheta,\varTheta^\prime}
  = \sum_{i,j=1}^{N_T} \sum_{a,b=1}^{N_t} \delta_{\tau^\prime , \tau} \frac{|S_{[\beta]}^{\tau + a - b}|}{(N_T N_t)^{2}} \ee{\frac{f'-f}{N_t} (\tau +a-1)} \ee{d_T \Delta_\beta (\beta^\prime -\beta ) (i-1)}  \\
  &= \delta_{\tau^\prime , \tau} \frac{N_T |S_{[\beta]}|}{(N_T N_t)^{2}} \ee{\frac{f'-f}{N_t} \tau} \sum_{a=1}^{N_t} \ee{\frac{f'-f}{N_t} (a-1)} \sum_{i=1}^{N_T} \ee{d_T \Delta_\beta (\beta^\prime -\beta ) (i-1)}  
  = \delta_{\varTheta , \varTheta^\prime} \frac{|S_{[\beta]}|}{N_t} ,
\end{align*}
which establishes the second equality in \eqref{eqn:matrix_sums}.
The first equality follows due to symmetry. 

\section{Basics from probability theory}\label{sec:tools}
A complex-valued random variable $\xi$ is standard complex Gaussian iff it has (complex) density $\frac{1}{\pi} e^{-|\xi|^2}$, or, equivalently, $\xi$ can be written as $\xi = x + \imath y$, where $x,y \sim N (0,1/2)$ are independent standard Gaussian random variables.
More generally, a mean-zero complex Gaussian random variable with variance $\sigma^2$ is of the form $\sigma \xi$, where $\xi$ is a standard complex Gaussian.

\begin{lemma}\label{lem:complexGaussian}
 For a standard complex Gaussian random variable $\xi$ it holds
 \[
  \P ( | \xi | \geq t ) \leq e^{-t^2} .
 \]
\end{lemma}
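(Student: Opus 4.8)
The plan is to compute $\P(|\xi| \ge t)$ directly from the explicit complex Gaussian density $\frac{1}{\pi} e^{-|\xi|^2}$ fixed just above the lemma, integrating over the region $\{ z \in \C \midcol |z| \ge t \}$. Identifying $\C$ with $\R^2$ via $z = u + \imath v$, the density becomes $\frac{1}{\pi} e^{-(u^2+v^2)}$, so that
\[
 \P(|\xi| \ge t) = \frac{1}{\pi} \int_{\{u^2+v^2 \ge t^2\}} e^{-(u^2+v^2)} \, du \, dv .
\]
First I would pass to polar coordinates $u = r \cos \theta$, $v = r \sin \theta$, for which the area element is $r \, dr \, d\theta$ and the integrand depends on $r$ alone. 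The angular integration then contributes a factor $2\pi$, leaving the elementary radial integral $\int_t^\infty r e^{-r^2} \, dr$, which evaluates to $\tfrac{1}{2} e^{-t^2}$ after the substitution $w = r^2$. Multiplying the prefactors $\frac{1}{\pi} \cdot 2\pi \cdot \tfrac{1}{2}$ yields exactly $e^{-t^2}$.

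Thus the asserted inequality is in fact a sharp equality, $\P(|\xi| \ge t) = e^{-t^2}$, and no genuine estimation is required. Equivalently, one could observe that $|\xi|^2 = u^2 + v^2$ with $u, v \sim N(0, 1/2)$ independent, so that $|\xi|^2$ is exponentially distributed with rate $1$; then $\P(|\xi| \ge t) = \P(|\xi|^2 \ge t^2) = e^{-t^2}$ follows directly from the exponential tail. Either route gives the claim.

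There is essentially no obstacle here: the statement reduces to a one-dimensional integral that is evaluated in closed form. The only point demanding minor care is the normalization of the density --- one must use that $u$ and $v$ carry variance $1/2$ (equivalently, that the joint density has the prefactor $1/\pi$ rather than $1/(2\pi)$), so that the exponent appearing is $|\xi|^2$ and not $|\xi|^2 / 2$. This is precisely the convention fixed in the paragraph preceding the lemma, and respecting it is what makes the tail exponent land on $t^2$ as stated.
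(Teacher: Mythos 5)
Your computation is correct: passing to polar coordinates (or noting that $|\xi|^2$ is exponentially distributed with rate $1$) gives $\P(|\xi| \geq t) = e^{-t^2}$ exactly, which is even sharper than the stated inequality. The paper offers no proof of this lemma, treating it as an elementary consequence of the density $\frac{1}{\pi}e^{-|\xi|^2}$ fixed in the preceding paragraph, and your argument is precisely the standard verification it implicitly relies on.
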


For a standard complex Gaussian random vector $\vc{\xi}$ (having independent, standard complex Gaussian entries) and a (deterministic) complex vector $\vc{a}$ of the same dimension, the random variable $z := \langle \vc{a} , \vc{\xi} \rangle$ is mean-zero complex Gaussian with variance $\|\vc{a}\|_2^2$.
This implies the next statement.

\begin{lemma}\label{lem:inner_normal_bound}
 For a standard complex Gaussian random vector $\vc{\xi}$ and a complex vector $\vc{a}$ of the same dimension it holds
 \[
  \P ( | \langle \vc{a} , \vc{\xi} \rangle | \geq t ) = \P ( | \xi | \geq t / \| \vc{a} \|_2 ) \leq e^{-t^2 / \| \vc{a} \|_2^2} .
 \]
\end{lemma}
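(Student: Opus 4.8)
The plan is to reduce the tail bound for the inner product to the scalar tail bound already established in Lemma~\ref{lem:complexGaussian}, exploiting the distributional fact announced in the paragraph preceding the statement. First I would record why $z := \langle \vc{a} , \vc{\xi} \rangle = \sum_k \overline{a_k}\,\xi_k$ is a mean-zero complex Gaussian with variance $\|\vc{a}\|_2^2$: by the rotational invariance of a standard complex Gaussian together with the independence of the entries $\xi_k$ of $\vc{\xi}$, each summand $\overline{a_k}\,\xi_k$ is a mean-zero complex Gaussian of variance $|a_k|^2$, and the $\C$-valued Gaussian family is closed under independent sums, so the variance of $z$ is $\sum_k |a_k|^2 = \|\vc{a}\|_2^2$.

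Next, invoking the representation of a mean-zero complex Gaussian of variance $\sigma^2$ as $\sigma \xi$ with $\xi$ standard complex Gaussian, I would note that $z$ is equal in distribution to $\|\vc{a}\|_2\,\xi$. This gives at once the exact identity $\P ( |z| \geq t ) = \P ( \|\vc{a}\|_2\,|\xi| \geq t ) = \P ( |\xi| \geq t / \|\vc{a}\|_2 )$, which is the first equality in the assertion (the degenerate case $\vc{a} = \vc{0}$ being trivial). Applying the scalar estimate of Lemma~\ref{lem:complexGaussian} at threshold $t / \|\vc{a}\|_2$ then yields $\P ( |\xi| \geq t / \|\vc{a}\|_2 ) \leq e^{-t^2 / \|\vc{a}\|_2^2}$, completing the chain and hence the lemma.

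There is essentially no genuine obstacle here: the only point deserving a word of care is the justification that $z$ really is complex Gaussian with the claimed variance, i.e.\ that conjugating a coefficient leaves its contribution to the variance unchanged and that the family is stable under linear combinations. Both follow immediately from the equivalent description $\xi = x + \imath y$ with $x, y \sim N(0 , 1/2)$ independent. Since the preceding paragraph already asserts this distributional fact (``This implies the next statement''), I would keep the write-up to these two short steps rather than belabor the underlying Gaussian computation.
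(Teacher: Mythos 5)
Your proposal is correct and follows essentially the same route as the paper, which simply observes that $z = \langle \vc{a} , \vc{\xi} \rangle$ is mean-zero complex Gaussian with variance $\|\vc{a}\|_2^2$ and then invokes Lemma~\ref{lem:complexGaussian}. Your additional remarks on closure under independent sums and the degenerate case $\vc{a}=\vc{0}$ are harmless elaborations of the same argument.
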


For a $2n$-dimensional standard Gaussian random vector $\vc{g}$ we have, due to \cite[(8.89)]{Foucart2013},
\[
 \P ( \| \vc{g} \|_2 \geq \sqrt{2n} + t ) \leq e^{-t^2 / 2} .
\]
Since an $n$-dimensional standard complex Gaussian random vector $\vc{\xi}$ can be considered as a (real-valued) $2n$-dimensional standard Gaussian random vector $\vc{g}$ with independent entries from $\NN (0,1/2)$, we have the following lemma.

\begin{lemma}\label{lem:gaus_vec_tail}
 For an $n$-dimensional standard complex Gaussian random vector $\vc{\xi}$ it holds
 \[
  \P ( \| \vc{\xi} \|_2 \geq \sqrt{n} + t ) = \P ( \| 2^{-1/2} \vc{g} \|_2 \geq \sqrt{n} + t ) = \P ( \| \vc{g} \|_2 \geq \sqrt{2n} + \sqrt{2} t ) \leq e^{-t^2} .
 \]
\end{lemma}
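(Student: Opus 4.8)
The plan is to reduce the complex Gaussian tail bound to the real Gaussian concentration inequality recalled immediately above the lemma. First I would make the identification between the $n$-dimensional complex vector $\vc{\xi}$ and a $2n$-dimensional real vector explicit. Writing each entry as $\xi_k = x_k + \imath y_k$ with $x_k , y_k \sim N(0,1/2)$ independent (this is exactly the definition of a standard complex Gaussian given at the start of this appendix), one sees that $\vc{\xi}$ has the same distribution as $2^{-1/2} \vc{g}$, where $\vc{g}$ is a $2n$-dimensional standard real Gaussian vector: if $g_j \sim N(0,1)$ then $2^{-1/2} g_j \sim N(0,1/2)$, so stacking real and imaginary parts gives $x_k = 2^{-1/2} g_{2k-1}$ and $y_k = 2^{-1/2} g_{2k}$ in distribution. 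In particular $\|\vc{\xi}\|_2^2 = \tfrac12 \|\vc{g}\|_2^2$, i.e. $\|\vc{\xi}\|_2 = 2^{-1/2}\|\vc{g}\|_2$ in distribution, which is precisely the first equality in the displayed chain.

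The second equality is then pure algebra: multiplying both sides of the event $\{ 2^{-1/2} \|\vc{g}\|_2 \geq \sqrt n + t\}$ by $\sqrt 2$ rewrites it as $\{ \|\vc{g}\|_2 \geq \sqrt{2n} + \sqrt 2\, t\}$, leaving the probability unchanged. At this point I would simply invoke the real Gaussian bound $\P(\|\vc{g}\|_2 \geq \sqrt{2n} + s) \leq e^{-s^2/2}$ (the inequality (8.89) from \cite{Foucart2013} quoted just above) with $s = \sqrt 2\, t$, so that the exponent becomes $-(\sqrt 2\, t)^2/2 = -t^2$, yielding the claimed bound $e^{-t^2}$.

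There is no genuine difficulty here; the only point requiring care is the variance bookkeeping. One must keep in mind that the real and imaginary parts of a standard complex Gaussian each carry variance $1/2$ (not $1$), which is exactly what produces the factor $2^{-1/2}$ relating $\|\vc{\xi}\|_2$ to $\|\vc{g}\|_2$ and, ultimately, the improvement of the exponent from $-s^2/2$ to $-t^2$. Getting this normalization right is essentially the entire content of the lemma.
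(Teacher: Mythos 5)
Your argument is correct and is exactly the paper's proof: identify $\vc{\xi}$ with $2^{-1/2}\vc{g}$ for a $2n$-dimensional real standard Gaussian $\vc{g}$, rescale the event, and apply the bound $\P(\|\vc{g}\|_2 \geq \sqrt{2n}+s)\leq e^{-s^2/2}$ from \cite[(8.89)]{Foucart2013} with $s=\sqrt{2}\,t$. The variance bookkeeping you emphasize is indeed the whole content of the lemma, and you have it right.
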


Finally, the following lemma states a Hoeffding-type inequality for Steinhaus sequences.
Recall that a Steinhaus sequence is a sequence of independent random variables which are all distributed uniformly on the complex unit circle $\{ z \in \C \midcol |z|=1 \}$.

\begin{lemma}[\mbox{\cite[Cor. 8.10]{Foucart2013}}]\label{lem:hoeffding}
 Let $\vc{a} \in \C^L$ and $\vc{\epsilon} = (\epsilon_1,\ldots,\epsilon_L)$ be a Steinhaus sequence.
 Then
 \[
  \P ( | \langle \vc{a} , \vc{\epsilon} \rangle | \geq u \|\vc{a}\|_2) \leq 2 e^{-u^2 / 2} .
 \]
\end{lemma}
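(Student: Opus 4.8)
The plan is to prove this Hoeffding-type bound by first establishing a subgaussian estimate for each \emph{real} linear functional of the Steinhaus sum and then passing from those one-dimensional projections to the modulus $|\langle \vc{a} , \vc{\epsilon} \rangle|$. After normalizing $\|\vc{a}\|_2 = 1$, write $Z = \langle \vc{a} , \vc{\epsilon} \rangle = \sum_k \overline{a_k} \epsilon_k$. First I would fix $w \in \C$ with $|w|=1$ and observe that, because each $\epsilon_k$ is rotation invariant, $\operatorname{Re}(\bar w Z) = \sum_k |a_k| \cos \psi_k$ is a sum of independent terms, where $\psi_k$ is uniform on $[0,2\pi)$. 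The key computation is the moment generating function $\E\, e^{\lambda |a_k| \cos \psi_k} = I_0(\lambda |a_k|)$, where $I_0$ denotes the modified Bessel function of the first kind, together with the elementary bound $I_0(x) \leq e^{x^2/4}$ (obtained by comparing the series $\sum_k (x^2/4)^k / (k!)^2$ and $\sum_k (x^2/4)^k / k!$ termwise, since $(k!)^2 \geq k!$). Multiplying over $k$ gives $\E\, e^{\lambda \operatorname{Re}(\bar w Z)} \leq e^{\lambda^2 /4}$ for every real $\lambda$, so the Chernoff bound yields $\P(\operatorname{Re}(\bar w Z) \geq t) \leq e^{-t^2}$; by the symmetry $\psi_k \mapsto \psi_k + \pi$ one also gets the two-sided estimate $\P(|\operatorname{Re}(\bar w Z)| \geq t) \leq 2 e^{-t^2}$.

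Second, I would pass to the modulus using the two canonical directions $w = 1$ and $w = \imath$. Since $\operatorname{Im}(Z) = \operatorname{Re}(-\imath Z)$, we have $|Z|^2 = \operatorname{Re}(Z)^2 + \operatorname{Im}(Z)^2$, and hence $\{|Z| \geq u\} \subseteq \{|\operatorname{Re}(Z)| \geq u/\sqrt{2}\} \cup \{|\operatorname{Im}(Z)| \geq u/\sqrt{2}\}$, because at least one of the two squared components must exceed $u^2/2$. Applying the projection estimate with $t = u/\sqrt{2}$ to each of the two events and taking a union bound gives $\P(|Z| \geq u) \leq 4 e^{-u^2/2}$, which after reinstating $\|\vc{a}\|_2$ (replace $Z$ by $Z$ and $u$ by $u\|\vc{a}\|_2$ throughout the per-direction estimate $\P(\operatorname{Re}(\bar w Z) \geq t) \leq e^{-t^2/\|\vc{a}\|_2^2}$) is the claimed tail, up to the numerical constant in front.

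The main obstacle is exactly this last point: the naive real/imaginary split loses a $\sqrt{2}$ in the exponent and a factor in the prefactor, so it delivers the correct rate $e^{-u^2/2}$ but with a worse constant than the stated $2$. To recover the sharp constant I would exploit the full rotational symmetry of $Z$: the law of the Steinhaus vector is invariant under $\epsilon_k \mapsto c\,\epsilon_k$ for any unimodular $c$, so the law of $Z$ is invariant under multiplication by unit scalars, which forces $\arg(Z)$ to be uniform and independent of $|Z|$. Averaging the per-direction bound $\E\, e^{\lambda \operatorname{Re}(e^{-\imath\theta} Z)} \leq e^{\lambda^2/4}$ over $\theta$ via Fubini turns the left-hand side into $\E\, I_0(\lambda |Z|)$, and a careful two-sided control of $I_0$ then yields the tail for $|Z|$ without the lossy projection step; the precise constant $2$ in the statement is the one established this way in \cite[Cor.~8.10]{Foucart2013}, which is why I treat that result as the reference for the final constant rather than rederiving it here.
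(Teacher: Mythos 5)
The paper does not actually prove this lemma --- it imports it verbatim as \cite[Cor.~8.10]{Foucart2013} --- so the relevant comparison is with the proof in that reference. Your first step is correct and cleanly executed: after normalizing $\|\vc{a}\|_2=1$, the rotation invariance of each $\epsilon_k$ reduces $\operatorname{Re}(\bar w Z)$ to a sum of independent terms $|a_k|\cos\psi_k$, the moment generating function is $\prod_k I_0(\lambda|a_k|)$, and the termwise series comparison $I_0(x)=\sum_k (x^2/4)^k/(k!)^2\leq e^{x^2/4}$ gives $\P(\operatorname{Re}(\bar w Z)\geq t)\leq e^{-t^2}$ via Chernoff. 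The real/imaginary split then rigorously yields $\P(|Z|\geq u)\leq 4e^{-u^2/2}$.

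The gap is exactly the one you flag, and it is not repaired by your closing paragraph. The stated inequality has prefactor $2$; your argument delivers $4$, and the proposed fix --- averaging over directions to get $\E\, I_0(\lambda|Z|)\leq e^{\lambda^2/4}$ and then invoking ``careful two-sided control of $I_0$'' --- is not carried out, and in its last sentence defers to \cite[Cor.~8.10]{Foucart2013} itself, which is circular as a proof of that corollary. (Moreover, lower-bounding $I_0(\lambda u)$ by its exponential asymptotics in $\E\,I_0(\lambda|Z|)\geq I_0(\lambda u)\,\P(|Z|\geq u)$ produces a bound of the form $C\sqrt{\lambda u}\,e^{\lambda^2/4-\lambda u}$, i.e.\ the sharper exponent $e^{-u^2}$ with a polynomial prefactor, which does not dominate $2e^{-u^2/2}$ uniformly in $u$.) The argument in the cited reference avoids one-dimensional projections altogether: integrating out the global rotation first, one shows $\E\exp\bigl(\theta|Z|^2/\|\vc{a}\|_2^2\bigr)\leq (1-\theta)^{-1}$ for all $\theta\in[0,1)$ by the same kind of Bessel-series comparison, and Markov's inequality with $\theta=1/2$ then gives $\P(|Z|\geq u\|\vc{a}\|_2)\leq 2e^{-u^2/2}$ exactly. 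For the purposes of this paper the distinction is immaterial --- the lemma only enters Lemmas~\ref{lem:a2} and~\ref{lem:a5} inside generous union bounds, where a prefactor of $4$ would propagate harmlessly --- but as a proof of the lemma \emph{as stated}, your write-up establishes a strictly weaker constant and should either adopt the squared-modulus MGF route or restate the conclusion with the constant $4$.
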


\bibliographystyle{acm}
\bibliography{Sparse-MIMO-Radar}

\end{document}